\documentclass[reqno,11pt]{amsart}
\usepackage[dvipsnames]{xcolor}
\usepackage{geometry,enumitem,comment,bbold}
\usepackage[normalem]{ulem}

\usepackage[utf8x]{inputenc}
\usepackage{amsmath,float}
\usepackage{graphicx}
\usepackage{rotating}
\usepackage{wasysym}
\DeclareGraphicsExtensions{.pdf,.png,.jpg}
\usepackage{setspace}

\usepackage{geometry}
\usepackage{booktabs}
\usepackage{amsmath}
\usepackage{amsthm}
\usepackage{amsfonts}
\usepackage{amssymb}
\usepackage{dsfont}

\usepackage{nicefrac}
\usepackage{booktabs}
\usepackage{mathrsfs}
\usepackage{bm} 
\usepackage{mathtools}   
\usepackage{subfigure}

\newtheorem{theorem}{Theorem}[section]
\newtheorem{corollary}[theorem]{Corollary}      
\newtheorem{lemma}[theorem]{Lemma}              
  
\newtheorem{proposition}[theorem]{Proposition}  
\theoremstyle{definition}
\newtheorem{example}[theorem]{Example} 
\newtheorem{definition}[theorem]{Definition} 
\newtheorem{remark}[theorem]{Remark}
\newtheorem{assumption}[theorem]{Assumption}

\numberwithin{equation}{section}

\DeclareMathOperator{\es}{\mathrm{ES}}
\DeclareMathOperator{\epi}{\mathrm{epi}}
\DeclareMathOperator{\epiG}{\epi^{G}}
\DeclareMathOperator{\interior}{\mathrm{int}}
\DeclareMathOperator{\essinf}{\mathrm{ess\,inf}}
\DeclareMathOperator{\e}{\mathrm{e}}

\newcommand{\R}{\mathbb{R}} 
\newcommand{\Q}{\mathbb{Q}}

\newcommand{\E}{\mathbb{E}}

\renewcommand{\P}{\mathbb{P}}

\newcommand{\N}{\mathbb{N}}
\newcommand{\dom}{\textnormal{dom}}

\newcommand{\ba}{\mathbf{ba}}

\renewcommand{\Q}{{\mathbb Q}}

\newcommand*\diff{\mathop{}\!\mathrm{d}}

\newcommand{\riskM}{\rho}
\newcommand{\rrm}{\widetilde{\riskM}}
\DeclareMathOperator{\core}{core}

\DeclareMathOperator*{\esssup}{ess\,sup}
\DeclareMathOperator{\unit}{\mathbf{e}}
\newcommand{\ball}{B}

\DeclareMathOperator{\disjointComplement}{d}
\newcommand{\dualElementGG}{l}
\newcommand{\dualElementClassic}{\psi}

\newcommand{\aggregation}{\tilde{\Lambda}}
\newcommand{\normAM}[1]{\lVert #1 \rVert_{\unit}}

\usepackage{natbib}
\usepackage[colorlinks,urlcolor=red,citecolor=blue,linkcolor=red]{hyperref}

\usepackage{orcidlink}

\begin{document}

\title[Geometrically convex return risk measures on AM-algebras]{Geometrically convex return risk measures on AM-algebras}

\author{Christian Laudag\'e \orcidlink{0000-0002-5910-9487}}
\address{Department of Mathematics, RPTU University Kaiserslautern-Landau, Germany.}
\email{christian.laudage@rptu.de}

\date{\today}

\begin{abstract}
\setstretch{1.1}

Monetary risk measures quantify the risk of uncertain monetary payoffs (or losses), whereas in time series analysis risk is typically assessed using logarithmic returns. 
Return risk measures (RRMs) provide an axiomatic foundation for the latter approach, which relies crucially on the positive cone of the space of essentially bounded random variables. 
We extend RRMs to general ordered vector spaces and characterize positive homogeneity via the geometric epigraph. To investigate geometric convexity and establish connections with monetary risk measures, we specialize the domain as a subset of an AM-algebra, encompassing Euclidean spaces and spaces of multidimensional essentially bounded random variables. The latter is novel in the context of RRMs and leads to the new classes of systemic and vector-valued RRMs. We establish results on finiteness, continuity, separability, as well as dual and aggregation-based representations.

\smallskip

\noindent{\em Keywords:} AM-algebras, monetary risk measures, return risk measures, separability, systemic risk, vector-valued risk measures

\end{abstract}

\maketitle
\thispagestyle{empty} 

\setstretch{1.2}    
\section{Introduction}\label{section.intro}

Monetary risk measures have become a standard tool in finance and actuarial science. While their original purpose --- evaluating the risk of a monetary payoff or loss --- is well understood, an axiomatic framework for their application to logarithmic payoffs, losses or returns was not available until the introduction of return risk measures (RRMs) in~\cite{bellini2018RRM}. Geometrically convex (GG-convex) RRMs have been recently studied in~\cite{ayguen}. Whereas convex monetary risk measures reward diversification among buy-and-hold trading strategies, GG-convex RRMs reward diversification among continuously rebalanced trading strategies. In this manuscript, we contribute by studying multivariate extensions --- both in the domain and the codomain --- of GG-convex RRMs.

Moreover, our approach unifies earlier works on RRMs, namely approaches focusing on $L^{\infty}$ or~$(0,\infty)$ as domains, while simultaneously extending it to the multivariate framework. This is achieved by using a subset of an AM-algebra as domain. Although the application of AM-algebras in finance is rather limited, they provide a natural setting for our purposes and lead to new insights into risk measurement based on logarithmic transformations of the underlying quantities. 

We now recall the state-of-the-art on RRMs. In a dynamic time framework they have been studied in~\cite{zullino_BSDEs}. RRMs using multiple eligible assets to hedge uncertain losses are defined and analyzed in~\cite{laudage2025MultiAssetRRM}. Another framework for RRMs that goes beyond $L^{\infty}$ is the one in~\cite{zullino2025IME}. However, this work focuses on locally convex topological vector spaces between $L^{\infty}$ and $L^1$, which do not exploit the specific structure of AM-algebras. Moreover, their main intention is to obtain representation results for stochastically consistent (but not necessarily cash-additive) functionals. In contrast, we focus on GG-convexity and its connection to monetary risk measures. In addition, we clarify several important technicalities that are essential for working with RRMs at a deeper mathematical level. For completeness, note that GG-convexity has also been studied beyond finance, see e.g.,~\cite{niculescu2000,niculescuPersson2004,oezdemir2014}.

Important examples of multivariate risk measures are systemic risk measures, which are axiomatically characterized on finite state spaces in~\cite{chen2013SystemicRiskMeasures} and later extended to locally convex-solid Riesz spaces in~\cite{kromer2016}. Conditional versions of systemic risk measures can be found in~\cite{hoffmann2016SystemicRiskMeasures}. Inspired by these contributions, we introduce systemic RRMs. Finally, together with the creation of systemic optimal allocations in~\cite{Biagini2020SystemicRisk} via the dual representations in~\cite{Biagini2019SystemicRisk}, the results in~\cite{AraratFeinstein2024Separability} motivate the study of vector-valued risk measures, a concept between scalar and set-valued risk measures. Following their ideas, we discuss separability results for vector-valued RRMs. 

In this manuscript, AM-algebras are used as a small domain on which an RRM can admit properties like GG-convexity or representations via monetary risk measures, while at the same time being sufficiently rich to support a multivariate RRM framework. This approach has several advantages. First, AM-algebras unify different RRM frameworks in the literature, thereby clarifying structural features of RRMs and eliminating the need to develop analogous results for special cases separately. Second, AM-algebras are general enough to encompass more demanding domains, such as families of vector-valued random variables. Third, developing a theory of RRMs on AM-algebras helps to clarify both the possibilities and limitations of applying RRMs. Finally, there is also an intrinsic mathematical motivation for developing a unified framework for RRMs.

Our study addresses the following three main topics:
\begin{itemize}
    \item \textbf{GG-convex RRMs on AM-algebras:} We introduce a new unifying (multivariate) framework for RRMs and perform a corresponding analysis of it. This framework consists of the definition of GG-convex RRMs on AM-algebras. Then, we characterize the defining properties of positive homogeneity and GG-convexity via the geometric epigraph. Using the connection to convex monetary risk measures, we prove that GG-convex RRMs are only locally Lipschitz-continuous in general. Then, we establish two dual representations, with a dual maximizer guaranteed only for the first one. 
    \item \textbf{GG-convex systemic RRMs:} The most important AM-algebra in finance is the space of essentially bounded random vectors, i.e.,~we aim to evaluate the risk of a vector of losses, which is closely related to measure systemic risk. Therefore, we introduce systemic RRMs, which map vector-valued losses to positive real numbers. We show that this is equivalent to the existence of an RRM and a return aggregation function such that their composition yields the systemic RRM. We illustrate that this decomposition crucially relies on the assumption of preference consistency. Moreover, a GG-convex and positively homogeneous systemic RRM satisfies risk GG-convexity, a property closely related to GG-convexity. Without positive homogeneity, a counterexample shows that GG-convexity of a systemic RRM is not sufficient for risk GG-convexity. We conclude with a dual representation for GG-convex systemic risk measures, which clarifies that the impact of the RRM and the return aggregation function can be fully separated from each other.   
    \item \textbf{Separability of GG-convex RRMs:} Now, we study RRMs with $(0,\infty)^n$ as codomain. In this context, separability means that the vector-valued evaluation can be reduced to single-valued evaluations; that is, the $i$-th component of the RRM depends only on the $i$-th loss. We derive sufficient conditions for such a reduction. Our first result is obtained under pointwise positive homogeneity, which is stronger than positive homogeneity. Under this assumption, RRMs defined on the interior of the cone of positive, essentially bounded random vectors are separable. Finally, we extend these results to larger domains and present examples based on well-known monetary risk measures, namely Expected Shortfall and entropic risk measures.
\end{itemize}

The structure of the manuscript is as follows. In Section~\ref{sec:unifyingFramework}, we introduce the unifying framework based on AM-algebras and define GG-convexity. Continuity properties and dual representations are studied in Section~\ref{sec:continuity}. Section~\ref{sec:multivariateRRM} is devoted to systemic risk. Here, we specialize to vectors of positive, essentially bounded random variables and $(0,\infty)$ as codomain. 
In Section~\ref{sec:vectorValuedRRM}, we study the separability of GG-convex RRMs in the vector-valued setting, with codomain $(0,\infty)^n$.
In Section~\ref{sec:conclusionOutlook}, we conclude and outline two directions for future research.

\section{Return risk measures and AM-algebras}\label{sec:unifyingFramework}

Domains for RRMs appearing in the current literature are subsets of positive cones in $L^p$-spaces. Nonetheless, risk measure theory has an inherent tendency towards unifying frameworks,  allowing for more complex settings, such as stochastic processes to account for risk evaluation over entire trajectories. The unification we present below covers the cases of $(0,\infty)^n$ and $L^{\infty}(\Omega;\mathbb{R}^n)$. 

\subsection{Standard notation and assumptions}

Before turning to the main results, we recall some standard notation and assumptions. This subsection may be skipped and consulted later if needed. Let $V$ be a linear space and $C\subseteq V$. Then, $C$ is convex, if for all $\lambda\in(0,1)$ it holds that $\lambda C+(1-\lambda) C\subseteq C$. The set $C\subseteq V$ is a cone, if for all $\lambda\in(0,\infty)$ it holds that $\lambda C\subseteq C$.

The usual Euclidean norm for a point $x\in\mathbb{R}^n$ is denoted by $|x|$. For $x,y\in\mathbb{R}^n$, we denote by $x\cdot y$ (or $xy$ for short) the pointwise multiplication. Furthermore, we always equip $\mathbb{R}^n$ with the pointwise ordering, denoted by $\leq$. For brevity, we also use the notation $[n]:=\{1,\dots,n\}$ for all $n\in\N$. For $x,y\in\R$, the notation $x<y$ means that for all $i\in[n]$ it holds that $x_i<y_i$.

Let $(X,\tau)$ be a topological space. We write $x_\alpha\xrightarrow{\tau} x$, when the net $(x_\alpha)_\alpha$ in $X$ converges to the point $x\in X$. In a normed space $(V,\lVert.\rVert)$, the ball with radius $r>0$ around $x\in V$ is defined by $\ball_{\lVert.\rVert}(x,r):=\{y\in V\,\mid\, \lVert x-y \rVert < r\}$. The interior of a subset $A\subseteq V$ is denoted by $\interior_{\lVert.\rVert}(A)$. The (topological) dual of $V$ and its bidual are denoted by $V^{\prime}$ and $V^{\prime\prime}:=(V^{\prime})^{\prime}$, respectively. The (upper) support function of $E\subseteq V$ at $\dualElementClassic\in E$ is given by $\sigma_E(\dualElementClassic):=\sup\limits_{x\in E} \dualElementClassic(x)$. For a dual pair $\langle X,Y\rangle$, we denote the weak topology by $\sigma(X,Y)$.  

Assume a nonempty set $V$ and a map $f:V\rightarrow [-\infty,\infty]^m$. The epigraph of $f$ is defined by $\epi(f):=\{(x,\alpha)\in V\times \R^m\mid f(x)\leq \alpha\}$. The domain of $f$ is $\dom(f):=\{x\in V\mid{\forall i\in[m]:f_i(x)<\infty}\}$. Note, $(x,f(x))\in \epi(f)$ implies that $x\in\dom(f)$. For $m=1$, the map~$f$ is proper, if $f>-\infty$ and $\dom(f)\neq\emptyset$. If $X\subseteq V$, then we say that $f$ is convex if $\epi(f)$ is convex. If $f>-\infty$, then convexity of $f$ is equivalent to $f\big(\lambda x + (1-\lambda)y\big)\leq \lambda f(x) + (1-\lambda) f(y)$ for all~$\lambda \in(0,1)$ and~$x,y\in X$. 

We always impose a suitable\footnote{For some of our examples, we require the existence of a continuous random variable, which can, e.g.,~be ensured by assuming that the probability space is atomless, see~\cite[Proposition A.31]{FoeSch}.} probability space, denoted by $(\Omega,\mathcal{F},\mathbb{P})$. The linear space of corresponding $n$-dimensional random variables is denoted by $L^0(\Omega;\mathbb{R}^n)$.
For $p\in[1,\infty]$, $L^p(\Omega;\mathbb{R}^n)$, or simply $L^p$ when $n=1$, is the linear space of all $\mathbf{X}\in L^{0}(\Omega;\mathbb{R}^n)$ with~$\lVert\mathbf{X}\rVert_{p}:=\big(\E\big[\sum_{i=1}^n|\mathbf{X}_i|^p\big]\big)^{1/p}<\infty$ if $p\in[1,\infty)$ and $\lVert\mathbf{X}\rVert_{\infty} :=\max\limits_{i\in[n]}\,\esssup |\mathbf{X}_i|<\infty$ if $p=\infty$. The corresponding set of random variables which are strictly positive a.s.~is $L_{++}^{p}(\Omega;\mathbb{R}^n):=\{\mathbf{X}\in L^{p}(\Omega;\mathbb{R}^n)\mid \mathbf{X}>0\text{ a.s.}\}$ or simply $L^p_{++}$ when $n=1$.\footnote{The reader should not confuse this set with the set of strictly positive elements with respect to the standard $L^p$-norm.}

For arbitrary $n\in\mathbb{N}$, $x\in\mathbb{R}^n$ and $g:\mathbb{R}\rightarrow\mathbb{R}$, we define $g(x):=(g(x_1),\dots,g(x_n))^{\intercal}$, i.e.,~if not otherwise stated, we always apply functions pointwise. For $x\in[0,\infty)^n$ and $\lambda\in[0,\infty)$ we write $x^{\lambda}:=((x_1)^{\lambda},\dots,(x_n)^{\lambda})^{\intercal}$. We always work under the standard conventions $ \infty\cdot 0 = 0$ and $(-\infty)\cdot 0 = 0$. 

\subsection{Ordered vector spaces and return risk measures}

The concepts presented next are standard and taken from~\cite{Rudin} and~\cite{AliprantisBurkinshaw}. First of all, in addition to a vector space structure, we need a partial order. A pair $(V,\leq)$, where $V$ is a real vector space and $\leq$ is a partial order, is an \textbf{ordered vector space}, if the partial order $\leq$ is compatible with the algebraic structure of $V$, i.e.,~for all $x,y,z\in V$ with $x\leq y$ and~$\alpha\in[0,\infty)$ it holds that 
\[
    \alpha x+z\leq \alpha y+z.
\]
For brevity, we often write $V$ instead of $(V,\leq)$. The positive cone of $V$ is defined by 
\[
    V_{+}:=\{x\in V\mid0\leq x\}.
\]

Now we define a return risk measure, for which it is enough to work on an ordered vector space. 
\begin{definition}\label{defi:RRM}
    Let $V$ be an  ordered vector space. A map $\rrm:E\rightarrow[0,\infty]^{m}$ with $E\subseteq V_+$ being a cone, is called a \textbf{return risk measure (RRM)}, if it satisfies the following two properties:
    \begin{enumerate}
        \item[(1)] For all $x,y\in E$ with $x\leq y$ it holds that $\rrm(x)\leq \rrm(y)$ \textit{(monotonicity)};
        \item[(2)] For all $x\in E$ and $\lambda\in(0,\infty)$ it holds that $\rrm(\lambda x) = \lambda \rrm(x)$ \textit{(positive homogeneity)}.
    \end{enumerate}
    The corresponding acceptance set is $\mathcal{B}_{\rrm} :=\{x\in E\mid \rrm(x)\leq 1\}$.
\end{definition}

We interpret elements in $E$ as pure losses. Therefore,  monotonicity means that an almost surely larger loss than another one results in a larger RRM value, i.e.,~a larger capital reserve. The assumption of positive homogeneity is motivated by its connection to the defining property of cash-additivity for monetary risk measures, see Proposition~\ref{prop:relationToClassicalRM} below. Positive homogeneity means that scaling a loss before applying the RRM yields the same result as scaling the RRM of that loss.

Properties like positive homogeneity and convexity for monetary risk measures with codomain $[-\infty,\infty]$ are defined via the epigraph. The epigraph excludes $-\infty$ by definition. Analogously, this would mean to exclude zero for RRMs, which is not obtained by applying the epigraph itself. Therefore, it is important to also study the notion of a geometric epigraph and compare its properties with those of the classical epigraph and our definition of positive homogeneity for an RRM. 
\begin{definition}\label{defi:geometricEpigraph}
    Let $V$ be an ordered vector space and $E\subseteq V_+$. The \textbf{geometric epigraph} of a map $f:E\rightarrow [0,\infty]^{m}$ with $m\in\N$ is 
    \begin{align}\label{eq:connection_epi_epiG}
        \epiG(f):=\epi(f)\cap \big(E\times (0,\infty)^m\big) = \{(x,\alpha)\in E\times (0,\infty)^{m}\mid f(x)\leq\alpha\}.
    \end{align}
\end{definition}

\begin{remark}\label{rem:epiG}
    For $m=1$, Equation~\eqref{eq:connection_epi_epiG} shows for $f:E\rightarrow [0,\infty]^{m}$ that
    \begin{align}\label{eq:epiG_rem}
        \epi(f)\setminus\epiG(f) = \{x\in E \mid f(x) = 0\}\times\{0\}.
    \end{align}
    Since $(x,f(x))\in\epi(f)$ iff $x\in\dom(f)$, the previous observation yields 
    \[
        (x,f(x))\in\epiG(f)\implies x\in\dom(f).
    \]
\end{remark}

The inclusion $\epiG(f)\subseteq\epi(f)$ can be strict as the following example shows.
\begin{example}\label{exam:epiG}
    Let $f:(0,\infty)^{2}\rightarrow [0,\infty),x\mapsto \max\{x_2-x_1,0\}$. Then, by~\eqref{eq:epiG_rem} we obtain that  
    \[
        \epi(f)\setminus\epiG(f) = \big\{x\in (0,\infty)^2\,\big|\, x_2\leq x_1\big\}\times\{0\}.
    \]
\end{example}

Next, we analyze the relation between positive homogeneity of a map $f:E\rightarrow [0,\infty]^m$ and conicity of its (geometric) epigraph. Since, additional assumptions are needed to ensure strict positivity of an RRM, see e.g.,~the discussion in~\cite[Section 4.2]{laudage2025MultiAssetRRM}, we use in the upcoming result $[0,\infty)^m$  --- instead of $(0,\infty)^m$ --- as codomain. 

\begin{proposition}\label{prop:positiveHomogeneity}
    Let $V$ be an  ordered vector space. For a map $f:E\rightarrow[0,\infty]^m$ with $E\subseteq V_+$ being a cone, we consider the following statements:
    \begin{enumerate}
        \item[(a)] $f$ is positively homogeneous;
        \item[(b)] $\epi(f)$ is a cone;
        \item[(c)] $\epiG(f)$ is a cone.
    \end{enumerate}
    Then, $(a)\implies (b)\Longleftrightarrow (c)$. If $m = 1$ or $f:E\rightarrow[0,\infty)^m$, then it holds that $(b)\implies (a)$. 
\end{proposition}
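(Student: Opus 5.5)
The plan is to go directly through the definitions, treating the implications in the order $(a)\Rightarrow(b)$, $(b)\Rightarrow(c)$, $(c)\Rightarrow(b)$ and finally $(b)\Rightarrow(a)$ under the extra hypothesis. Throughout I use that $\epi(f)\subseteq E\times\R^m$, so the second component is always a real vector. Since $f\geq 0$, any $(x,\alpha)\in\epi(f)$ automatically has $\alpha\in[0,\infty)^m$.

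For $(a)\Rightarrow(b)$, take $(x,\alpha)\in\epi(f)$ and $\lambda\in(0,\infty)$. Since $E$ is a cone, $\lambda x\in E$, and positive homogeneity gives $f(\lambda x)=\lambda f(x)\leq\lambda\alpha$. Hence $\lambda(x,\alpha)\in\epi(f)$.

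For $(b)\Rightarrow(c)$, I use \eqref{eq:connection_epi_epiG}: $\epiG(f)$ is the intersection of $\epi(f)$ with the cone $E\times(0,\infty)^m$, and an intersection of two cones is a cone.

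For $(c)\Rightarrow(b)$, the issue is that points of $\epi(f)$ may have zero components in $\alpha$, so they need not lie in $\epiG(f)$. I handle this by an $\varepsilon$-perturbation. Given $(x,\alpha)\in\epi(f)$, $\lambda>0$ and $\varepsilon>0$, set $\mathbf 1:=(1,\dots,1)^\intercal$. Then $(x,\alpha+\varepsilon\mathbf 1)\in\epiG(f)$, and by (c),
\[
f(\lambda x)\leq\lambda\alpha+\lambda\varepsilon\mathbf 1 .
\]
Letting $\varepsilon\downarrow 0$ componentwise yields $f(\lambda x)\leq\lambda\alpha$, i.e.\ $\lambda(x,\alpha)\in\epi(f)$.

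For $(b)\Rightarrow(a)$, fix $x\in E$ and $\lambda>0$. If $f(x)$ is finite in all components, then $(x,f(x))\in\epi(f)$, and (b) gives $f(\lambda x)\leq\lambda f(x)$. Applying the same argument to $\lambda x$ and $1/\lambda$ gives $f(x)\leq\lambda^{-1}f(\lambda x)$, and in this second step $f(\lambda x)$ is finite by the first inequality. Hence equality holds. This settles the case $f:E\rightarrow[0,\infty)^m$.

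The main obstacle is infinite values, which is exactly why the extra hypothesis is needed. For $m=1$ with $f(x)=\infty$, I argue by contradiction. If $f(\lambda x)<\infty$, then $(\lambda x,f(\lambda x))\in\epi(f)$, and scaling by $1/\lambda$ gives $f(x)\leq f(\lambda x)/\lambda<\infty$, a contradiction. So $f(\lambda x)=\infty=\lambda f(x)$.

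For $m>1$ with only some components infinite, $x\notin\dom(f)$, so the epigraph carries no information about the finite components of $f(x)$. I therefore expect this step to fail in general; it is the reason the statement restricts the converse to $m=1$ or to finite-valued $f$. A simple example would be $f(x)=(\infty,g(x))$ with $g$ not homogeneous: then $\epi(f)=\emptyset$ is trivially a cone while (a) fails.
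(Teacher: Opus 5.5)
Your proof is correct and follows the paper's argument step by step. You use the same direct verification for $(a)\Rightarrow(b)$ and $(b)\Rightarrow(c)$, and the same two-sided scaling plus contradiction argument for $(b)\Rightarrow(a)$ in both the finite-valued case and the case $m=1$. In $(c)\Rightarrow(b)$ you shift $\alpha$ to $\alpha+\varepsilon\mathbf{1}$ and let $\varepsilon\downarrow 0$, while the paper separates the zero components of $\alpha$ and takes an infimum over $\beta$ with $(x,\beta)\in\epiG(f)$. This is only a cosmetic difference, though your version handles all components at once, whereas the paper leaves the nonzero components of $\alpha$ implicit.
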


\begin{proof}
     We first prove that (a) implies (b). Let $(x,\alpha)\in\epi(f)$. Hence, $f(x)\leq \alpha$. Then by (a), for all $\lambda > 0$ we obtain 
    \[
        f(\lambda x) = \lambda f(x)\leq \lambda \alpha\in[0,\infty)^{m}.
    \]
    Hence, $\lambda (x,\alpha) = (\lambda x,\lambda \alpha)\in\epi(f)$, which shows that $\epi(f)$ is a cone. For proving that (b) implies (c), let $\lambda>0$ and $(x,\alpha)\in\epiG(f)$. Note that $\alpha\in(0,\infty)^{m}$ implies $\lambda\alpha\in(0,\infty)^m$. Hence, $(\lambda x,\lambda\alpha)\in\epi(f)\cap (E\times(0,\infty)^m) = \epiG(f)$, where we make use of~\eqref{eq:connection_epi_epiG}.

    Next, we show that (c) implies (b). To do so, let $\lambda>0$ and $(x,\alpha)\in \epi(f)$. If $\alpha\in(0,\infty)^m$, then $(x,\alpha)\in\epiG(f)$ and by assumption $\lambda(x,\alpha)\in\epiG(f)\subseteq\epi(f)$. Otherwise, assume $\alpha\notin (0,\infty)^m$, which implies that $I:=\{i\in[m]\mid\alpha_i = 0\}\neq \emptyset$. Then, for all $i\in I$ we have $f_i(x) = 0$.     Recall, that for an arbitrary $\beta\in(0,\infty)^m$ with $(x,\beta)\in\epiG(f)$, by assumption it holds for all $\lambda>0$ that  $(\lambda x,\lambda\beta)\in\epiG(f)$. Then, we obtain for all $i\in I$ that
    \[
        f_i(\lambda x)\leq \inf_{(x,\beta)\in\epiG(f)}\lambda \beta_i = 0. 
    \]
    This implies that $f_i(\lambda x)=0$ for all $i\in I$. Hence, $\lambda (x,\alpha)\in\epi(f)$, showing that $\epi(f)$ is a cone.
    
    Finally, for $m = 1$ or $f:E\rightarrow[0,\infty)^m$ we show that (b) implies (a). We first consider the case of $f:E\rightarrow[0,\infty)^m$, for which we assume $\lambda>0$ and $x\in E$. Then, $(x,f(x))\in\epi(f)$ and by assumption $(\lambda x,\lambda f(x))\in \epi(f)$, which is equivalent to 
    \[
        f(\lambda x)\leq \lambda f(x).
    \]
    To obtain the inverse inequality, note that $(\lambda x, f(\lambda x))\in \epi(f)$ and hence, $(x,\lambda^{-1}f(\lambda x))\in \epi(f)$, which means that
    \[
        \lambda f(x)\leq f(\lambda x).
    \]

    For the remaining case of $m=1$, let $x\in E$ and $\lambda>0$. If $f(x)=\infty$, then $f(\lambda x)=\infty$. Indeed, if $f(\lambda x)\in[0,\infty)$, then $(\lambda x,f(\lambda x))\in\epi(f)$ and by assumption implying that $(x,\lambda^{-1}f(\lambda x))\in \epi(f)$. However, the latter means that $\infty = f(x)\leq \lambda^{-1}f(\lambda x)<\infty$, a contradiction. In the same manner, we obtain that $f(\lambda x) =\infty$ implies $f(x) = \infty$. The remaining case of $f(x) \in[0,\infty)$, which implies $f(\lambda x)\in[0,\infty)$, follows analogously to the previous considerations for $f:E\rightarrow[0,\infty)^m$.    
\end{proof}

\begin{remark}
    Note, $\epiG(f)$, and hence also $\epi(f)$, in Example~\ref{exam:epiG} is a cone.
\end{remark}

The counterexample below shows that the implication ``$(b)\implies (a)$'' may fail without the additional assumptions in Proposition~\ref{prop:positiveHomogeneity}, i.e.,~for $f(E)\not\subseteq [0,\infty)^m$ with $m>1$, positive homogeneity for RRMs as we defined it in Definition~\ref{defi:RRM} may differ from the epigraph being a cone. 
\begin{example}
    Let $f:[0,\infty)\rightarrow [0,\infty]^2$ with
    \[
        f(x) = \begin{cases}
            (0,0)^{\intercal}, & x = 0,\\
            (\infty,0)^{\intercal}, & x\in(0,1],\\
            (\infty,1)^{\intercal}, & x\in(1,\infty).            
        \end{cases}
    \]
    Note, $\epi(f) = \{0\}\times\{(0,0)^{\intercal}\}$ and hence, $\epi(f)$ is a cone. However, for each $\lambda >1$ it holds that 
    \[
        \lambda f(1) = (\infty,0)^{\intercal} \neq (\infty,1)^{\intercal} = f(\lambda). 
    \]
\end{example}

\subsection{AM-algebras and geometric convexity}

Next, we aim to define geometric convexity, which relies on powers and products of elements in the positive cone of an ordered vector space. For this, additional structure on the ordered vector space is required. As a first step, recall that an ordered vector space $V$ is called a \textbf{Riesz space} if, for every $x,y\in V$, the infimum and the supremum of the set~$\{x,y\}$ exist in $V$. In this case, for all $x,y\in V$ we write
\[
    x\vee y:=\sup\{x,y\},\quad x\wedge y:=\inf\{x,y\},
\]
and
\[
    x^{+}:=x\vee 0,\quad x^{-}:=(-x)\vee 0,\quad |x|:=x\vee (-x).
\]
A helpful representation holds for all $x,y\in V$~\cite[Theorem 1.2 (2)]{AliprantisBurkinshaw}:
\begin{align}\label{eq:sumElementsRieszSpace}
    x+y = x\wedge y + x\vee y.
\end{align}
This implies
\[
    x = x\wedge y + x\vee y - y = x\wedge y + (x-y)^{+}.
\]

A vector $\unit>0$ in a Riesz space $V$ is called an \textbf{order unit}, or simply a \textbf{unit}, if for every $x\in V$ there exists~$\lambda\in (0,\infty)$ such that $|x|\leq \lambda \unit$.\footnote{Equivalently, $\unit$ is an element of the algebraic interior $\core(V_{+})$ of $V_{+}$.} 

A norm $\lVert.\rVert$ on a Riesz space $V$ is called a \textbf{lattice norm}, if for every $x,y\in V$ with~$|x|\leq |y|$ it holds that $\lVert x\rVert\leq \lVert y\rVert$. A complete normed Riesz space is then called a \textbf{Banach lattice}, see~\cite[Section 9.1]{AliprantisBorder}. Then, a Banach lattice $V$ is an \textbf{AM-space}, if $\lVert x\vee y\rVert\leq \lVert x\rVert \vee \lVert y\rVert$ for all $x,y\in V_+$, see~\cite[Section 9.5]{AliprantisBorder}. Following~\cite[Section 9.5]{AliprantisBorder}, we say that $V$ is an \textbf{AM-space with unit $\unit$}, if $V$ is a Banach lattice with unit~$\unit$, equipped with the supremum norm $\normAM{.}$, defined by 
\[
    \normAM{x} := \inf\{\lambda\in(0,\infty)\mid\ |x|\leq \lambda \unit\},\quad x\in V.
\]

Then, we need a multiplication operator on our domain. For this, recall that a (real) \textbf{algebra} is a vector space $V$ over a real field on which an associative and distributive multiplication is defined, that is, for all $x,y,z\in V$ it holds that 
\[
    x(yz) = (xy)z,\quad (x+y)z = xz + yz,\quad x(y+z) = xy+xz,
\]
and multiplication is consistent to scalar multiplication, i.e.,~for all $x,y\in V$ and $\alpha\in\mathbb{R}$ it holds that
\[
    \alpha(xy) = x(\alpha y) = (\alpha x)y.
\]
A Riesz space $V$ with a multiplication is a \textbf{Riesz algebra}, if it becomes an algebra and in addition, for all $x,y\in V_{+}$ it holds that $xy\geq 0$. We summarize two properties of Riesz algebras. Further background on Riesz algebras can be found in~\cite{AliprantisBurkinshaw}.

\begin{lemma}
    Let $V$ be a Riesz algebra. The following statements hold:
    \begin{enumerate}
        \item[(i)] For all $x,y,z\in V_{+}$ with $x\leq y$ it holds that $xz\leq yz$ and $zx\leq zy$.
        \item[(ii)] For all $n\in\N$ and $x,y\in V_{+}$ with $x\leq y$ it holds that $x^n\leq y^n$.
    \end{enumerate}
\end{lemma}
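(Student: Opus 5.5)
For (i), I will use only the algebra axioms together with positivity of products of positive elements. Let $x,y,z\in V_+$ with $x\leq y$. Then $y-x\in V_+$ and $z\in V_+$, so the Riesz algebra axiom gives $(y-x)z\geq 0$. Distributivity, together with consistency with scalar multiplication (needed to handle $-x$), gives $(y-x)z = yz - xz$. Hence $yz-xz\geq 0$, which is $xz\leq yz$. The statement $zx\leq zy$ follows in the same way from $z(y-x)\geq 0$ and left distributivity. No commutativity is required, because each side is treated separately.

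For (ii), I will argue by induction on $n$. The case $n=1$ is the hypothesis. Suppose $x^n\leq y^n$ for some $n\in\N$. First I record that powers of positive elements are positive: $x^n\in V_+$ and $y^n\in V_+$. This is itself a small induction, since $x^{n+1}=x^n x$ is a product of two positive elements. Next I split the step using transitivity of the order:
\[
    x^{n+1} = x^n x \leq y^n x \leq y^n y = y^{n+1}.
\]
The first inequality is the first half of (i) applied with $z=x$ to $x^n\leq y^n$. The second inequality is the second half of (i) applied with $z=y^n$ to $x\leq y$. Associativity makes the notation $x^{n+1}=x^nx$ unambiguous.

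I do not expect a genuine obstacle here. The only care needed is in (ii): I must verify that the elements being multiplied ($x^n$, $y^n$, $x$, $y$) all lie in $V_+$ before invoking (i), and I must use both the left and the right versions of (i), since $V$ need not be commutative.
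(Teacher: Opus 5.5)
Your proposal is correct and follows the paper's proof essentially verbatim: (i) via $(y-x)z\geq 0$ and distributivity, and (ii) by induction combining both halves of (i). The only difference is that you pass through $y^n x$ where the paper passes through $x^n y$, and your explicit check that $x^n,y^n\in V_+$ before invoking (i) spells out a step the paper leaves implicit.
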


\begin{proof}
    Let $x,y\in V_+$ with $x\leq y$ and set $w:= y-x\geq 0$. Then, for all $z\in V_{+}$ it holds that 
    \[
        yz-xz = wz \geq 0,
    \]
    which implies $xz\leq yz$. The inequality $zx\leq zy$ follows analogously, which proves (i). Then, (ii) follows by a proof of induction. Indeed, by assumption $x^1 = x\leq y = y^1$. Assuming $x^n\leq y^n$ for $n\in\N$, it holds that $x^{n+1} = x^nx \leq x^n y\leq y^n y = y^{n+1}$.
\end{proof}

An algebra $A$ equipped with a norm $\lVert.\rVert$ is called a \textbf{Banach algebra}, see~\cite[Definition~18.1]{Rudin}, if the norm is complete and satisfies 
\[
    \lVert xy\rVert\leq\lVert x\rVert \lVert y\rVert,\quad x,y\in A.
\]
Then, a Riesz algebra $V$ is a \textbf{Banach lattice algebra} if it is both a Banach lattice and a Banach algebra, see~\cite[Chapter IV, Exercise 4]{Schaefer1974BanachLatticeAlgebras} or~\cite{Martignon1980BanachLatticeAlgebras}. Our formulation follows~\cite{Wickstead2017}. 
Finally, $V$ is an \textbf{AM-algebra}, if it is a Banach lattice algebra which is an AM-space with order unit $\unit$ and $\unit$ is also an algebraic identity. 

AM-algebras provide a unified framework for the two most common settings in which geometrically convex RRMs are studied, namely  $\R^n$ and $L^{\infty}(\Omega,\R^n)$ (equipped with the supremum norm). To the best of our knowledge, the existing literature on geometrically convex RRMs considers only the case of $n=1$ and moreover treats these two settings separately. 
\begin{example}
    The following are AM-algebras:
    \begin{enumerate}
        \item $\mathbb{R}^n$ equipped with the pointwise multiplication and the maximum norm; 
        \item The space of essentially bounded random variables $L^{\infty}(\Omega,\mathbb{R}^n)$ with pointwise multiplication and sup-norm given by $\lVert.\rVert_{\infty}$. 
        \item The linear space of essentially bounded stochastic processes on a filtered probability space $\big(\Omega,\mathcal{F},(\mathcal{F}_t)_{t\in[0,T]},\mathbb{P}\big)$ with $T\in(0,\infty)$ and equipped with pointwise multiplication and essential supremum norm, i.e.,
        \[
            \big\lVert (X_t)_{t\in[0,T]}\big\rVert = \esssup_{(\omega,t)\in \Omega\times [0,T]} |X_t(\omega)|.
        \]
    \end{enumerate}    
    Note that $L^{p}(\Omega,\mathbb{R}^n)$ with $p\in[1,\infty)$ is not an algebra based on pointwise multiplication, because $X,Y\in L^1_+$ does not imply $XY\in L^1_+$ in general. Moreover, the standard $L_p$-norm does not satisfy the property of an AM-space, i.e.,~it does not hold for all $X,Y\in L^p$ that $\lVert X\vee Y\rVert_{p}\leq \lVert X\rVert_{p} \vee \lVert Y\rVert_{p}$.
\end{example}

On our way to define geometric convexity, we need to know if powers of positive elements in an AM-algebra are well-defined. This is indeed the case, as the next result shows. For its proof, we use the standard notation $C(X)$ to denote the collection of continuous functions on a subset $X$ of a topological space. If $X$ is a compact Hausdorff space, then $C(X)$ is an AM-space and even an AM-algebra when equipped with its usual pointwise multiplication.
\begin{proposition}
    Let $V$ be an AM-algebra with unit $\unit$. Then, for all $x\in V_+$ and $t\in[0,\infty)$, $x^t$ exists and lies in $V_+$. Moreover, the following intuitive properties hold:
    \begin{enumerate}
        \item[(i)] $x^{ts} = (x^t)^s$, \quad $x\in V_+$, $t,s\in[0,\infty)$;
        \item[(ii)] $x^{t} z^{t} = (xz)^t$, \quad  $x,z\in V_+$, $t\in[0,\infty)$;
        \item[(iii)] $x^{t} x^{s} = x^{t+s}$, \quad $x\in V_+$, $t,z\in[0,\infty)$.
    \end{enumerate}
\end{proposition}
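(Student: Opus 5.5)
The plan is to reduce everything to the concrete algebra $C(K)$ via the Kakutani (Kakutani--Bohnenblust--Krein) representation theorem. An AM-space with unit $\unit$ is lattice isometric to $C(K)$ for some compact Hausdorff space $K$, with $\unit$ mapped to the constant function $\mathbb{1}$ (see~\cite[Section 9.5]{AliprantisBorder}). Let $T:V\to C(K)$ denote this lattice isometry.

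\textbf{Step 1: $T$ is multiplicative.} We must show that $T(xy)=T(x)T(y)$. The natural route is the known structure theorem: on $C(K)$, a Banach lattice algebra multiplication $\ast$ for which $\mathbb{1}$ is an identity coincides with pointwise multiplication. This result goes back to Martignon~\cite{Martignon1980BanachLatticeAlgebras} and is also discussed in~\cite{Wickstead2017}. If a self-contained argument is preferred, I would proceed as follows.
\begin{itemize}
\item For fixed $f\ge 0$, the map $g\mapsto f\ast g$ is a positive operator. It satisfies $f\ast\mathbb{1}=f$ and $|f\ast g|\le f\ast|g|\le \lVert g\rVert_\infty f$, so it is an orthomorphism.
\item Orthomorphisms on $C(K)$ are multiplication operators by the function $f\ast\mathbb{1}=f$. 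Hence $f\ast g=fg$.
\item Extend to general $f$ by linearity, writing $f=f^+-f^-$.
\end{itemize}
I expect this identification of the abstract product with the pointwise product to be the main obstacle. I would cite it rather than reprove it.

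\textbf{Step 2: define powers.} For $x\in V_+$, the function $T(x)$ is a nonnegative element of $C(K)$. Since $s\mapsto s^t$ is continuous on $[0,\infty)$ for $t\ge0$ (with $s^0:=1$), the pointwise power $T(x)^t$ again lies in $C(K)_+$. We set
\[
x^t:=T^{-1}\big(T(x)^t\big)\in V_+,
\]
using that $T$ is a bijective lattice isomorphism and so maps the positive cones onto each other. Because $T$ is multiplicative, this definition agrees with the algebraic powers for integer $t$ and gives $x^0=\unit$. It is therefore a consistent extension.

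\textbf{Step 3: properties (i)--(iii).} These follow by transporting the corresponding real identities, valid pointwise on $K$:
\[
(a^t)^s=a^{ts},\qquad a^tb^t=(ab)^t,\qquad a^ta^s=a^{t+s}\quad\text{for } a,b\ge0.
\]
They hold including the edge cases at $a=0$ under the convention $0^0=1$. For (ii) and (iii) we additionally use multiplicativity of $T$ and of $T^{-1}$ to move products across.
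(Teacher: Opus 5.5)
Your proposal is correct and follows essentially the same route as the paper: the Krein--Kakutani representation of $V$ as $C(K)$ with $\unit\mapsto\mathbf{1}_K$, Martignon's result that the transported product must be pointwise multiplication, the definition $x^t:=\Phi^{-1}(\hat{x}^{\,t})$, and transport of the real identities to obtain (i)--(iii).

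The only flaw is in your optional self-contained sketch of Step~1. The conditions $|f\ast g|\le\lVert g\rVert_\infty f$ and $f\ast\mathbf{1}_K=f$ do not make $g\mapsto f\ast g$ an orthomorphism: the operator $g\mapsto\mu(g)f$, with $f\neq0$ and $\mu$ a non-Dirac probability measure on $K$, satisfies both and is not one. The repair is the estimate $0\le f\ast g\le\lVert f\rVert_\infty\,g$ for $f,g\ge0$, which follows from $f\le\lVert f\rVert_\infty\mathbf{1}_K$, positivity of products and $\mathbf{1}_K\ast g=g$. It bounds the operator by a multiple of the identity, and that is what gives the orthomorphism property.
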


\begin{proof}
    We apply a version of the Krein-Kakutani Theorem for AM-algebras, which goes back to~\cite{Martignon1980BanachLatticeAlgebras}.\footnote{The original Krein-Kakutani Theorem is stated for AM-spaces and not for AM-algebras, see~\cite{Kakutani1941}.} It states that the AM-algebra $V$ with unit $\unit$ is lattice and algebraic isometric to~$C(K)$ for some compact Hausdorff space $K$ with unit $\mathbf{1}_K$. By $\mathbf{1}_K$ being the unit in~$C(K)$, we can apply~\cite[Proposition 1.4]{Martignon1980BanachLatticeAlgebras} to obtain that the algebraic operation on~$C(K)$ has to be the pointwise multiplication. We denote the corresponding surjective lattice and algebraic isometry by $\Phi:V\rightarrow C(K)$ and for short, we also write $\hat{x} = \Phi(x)$. Then, for $x\in V_+$ and $y\in[0,\infty)$, the expression $(\hat{x})^t$ is well-defined in the pointwise sense and it is contained in $C(K)$. Indeed, by using that $\Phi$ is a lattice homomorphism we can apply~\cite[Theorem 9.15 (3)]{AliprantisBorder} to obtain $\hat{x} = \Phi(x)= \Phi(x^{+})=\Phi(x)\vee 0 = \hat{x}^+\geq 0$. Moreover, $(\hat{x})^t$ as composition of continuous functions is again continuous. Hence, $(\hat{x})^t\in C(K)$. Then, set $x^{t} = \Phi^{-1}((\hat{x})^t)$ and analogously to before we obtain that~$x^{t}\in V_+$.

    To prove (i), let $x\in V_+$ and $t,s\in [0,\infty)$. By 
    \[
        \widehat{x^t} = \Phi(x^t) = \Phi(\Phi^{-1}(\hat{x}^t)) =\hat{x}^t,
    \]
    we obtain
    \[
        \Phi(x^{ts}) = \Phi(\Phi^{-1}(\hat{x}^{ts})) = \hat{x}^{ts} = (\hat{x}^t)^s = \left(\widehat{x^t}\right)^s =\widehat{(x^t)^s} = \Phi((x^t)^s), 
    \]
    which implies --- by $\Phi$ being bijective --- that $x^{ts}=(x^t)^s$. Finally, we only prove (ii), because (iii) follows in a similar manner. To do so, let $x,z\in V_+$ and $t\in[0,\infty)$. First, note that 
    \[
        \hat{x}\hat{z} = \Phi(x)\Phi(z) = \Phi(xz) = \widehat{xz},
    \]
    where we used that $\Phi$ is also an algebraic isometry. This then gives us that
    \begin{align*}
        \Phi(x^tz^t) &= \Phi\big(\Phi^{-1}(\hat{x}^t) \Phi^{-1}(\hat{z}^t)\big)=\Phi\big(\Phi^{-1}(\hat{x}^t\hat{z}^t)\big)\\
        &= \hat{x}^t\hat{z}^t= (\hat{x}\hat{z})^t= \widehat{xz}^t \\
        &=\Phi\big(\Phi^{-1}(\widehat{xz}^t)\big) = \Phi((xz)^t). 
    \end{align*}
    Hence, $x^{t} z^{t} = (xz)^t$.
\end{proof}

Finally, we are able to define geometric convexity for maps on AM-algebras, for which we call a subset $A$ of an AM-algebra \textbf{geometrically (or GG-)convex}, if for all $\lambda\in(0,1)$ it holds that $A^{\lambda} \cdot A^{1-\lambda}\subseteq A$, where for two subsets $B,C$ of the same AM-algebra and $\lambda\in(0,1)$ we used that $B\cdot C:=\{b c\mid b\in B,c\in C\}$ (Minkowski product) and $B^{\lambda}:=\{b^\lambda\mid b\in B\}$. Furthermore, we impose the convention that for each $\lambda\in(0,1)$ it holds that $\infty^{\lambda} = \infty$. 

\begin{definition}\label{defi:geometricConvexity}
    Let $V$ be an AM-algebra. For a GG-convex set $E\subseteq V_+$, a map $f:E\rightarrow [0,\infty]^{m}$ is called \textbf{geometrically (or GG-)convex}, if for all $x,y\in E$ and $\lambda\in(0,1)$ it holds that 
    \[
        f\big(x^{\lambda}y^{1-\lambda}\big)\leq f(x)^\lambda f(y)^{1-\lambda}.
    \]
\end{definition}

Next, as for positive homogeneity it is indispensable to discuss the connection of GG-convexity of a map to GG--convexity of the corresponding (geometric) epigraph. 
\begin{proposition}
    Let $V$ be an AM-algebra and assume a map $f:E\rightarrow [0,\infty]^{m}$ on a GG-convex set $E\subseteq V_+$. Consider the following statements:
    \begin{enumerate}
        \item[(a)] $f$ is GG-convex;
        \item[(b)] $\epi(f)$ is GG-convex;
        \item[(c)] $\epiG(f)$ is GG-convex.
    \end{enumerate}
    Then, $(a)\implies (b)\Longleftrightarrow (c)$. For $f:E\rightarrow[0,\infty)^m$ it holds that $(b)\implies (a)$. 
\end{proposition}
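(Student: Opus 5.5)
The plan is to mirror the proof of Proposition~\ref{prop:positiveHomogeneity}, replacing conicity by GG-convexity. The product structure on $V\times\R^m$ is understood componentwise: $(x,\alpha)^\lambda(y,\beta)^{1-\lambda}=(x^\lambda y^{1-\lambda},\alpha^\lambda\beta^{1-\lambda})$.

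\textbf{(a)$\implies$(b).} Take $(x,\alpha),(y,\beta)\in\epi(f)$ and $\lambda\in(0,1)$. Since $E$ is GG-convex, $x^\lambda y^{1-\lambda}\in E$. By (a) and monotonicity of $t\mapsto t^\lambda$ on $[0,\infty]$ (componentwise),
\[
f(x^\lambda y^{1-\lambda})\leq f(x)^\lambda f(y)^{1-\lambda}\leq \alpha^\lambda\beta^{1-\lambda}.
\]
The right-hand side is finite because $\alpha,\beta\in\R^m$. The only subtle point is products involving $0$ and $\infty$. If $f_i(x)=\infty$, then $\alpha_i$ cannot be finite, so this case does not occur. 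Hence all quantities are finite and the inequality is the elementary one.

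\textbf{(b)$\implies$(c).} The set $(0,\infty)^m$ is closed under $\alpha^\lambda\beta^{1-\lambda}$, and $E$ is GG-convex. Therefore $E\times(0,\infty)^m$ is GG-convex, and $\epiG(f)$, being the intersection of two GG-convex sets by~\eqref{eq:connection_epi_epiG}, is GG-convex.

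\textbf{(c)$\implies$(b).} Take $(x,\alpha),(y,\beta)\in\epi(f)$ and let $I$ be the set of indices with $\alpha_i=0$ or $\beta_i=0$. For $\varepsilon>0$, the points $(x,\alpha+\varepsilon\mathbf 1)$ and $(y,\beta+\varepsilon\mathbf 1)$ lie in $\epiG(f)$. By (c),
\[
f(x^\lambda y^{1-\lambda})\leq(\alpha+\varepsilon)^\lambda(\beta+\varepsilon)^{1-\lambda}\quad\text{for all }\varepsilon>0.
\]
Letting $\varepsilon\downarrow0$ and using continuity of $t\mapsto t^\lambda$ on $[0,\infty)$ gives $f(x^\lambda y^{1-\lambda})\leq\alpha^\lambda\beta^{1-\lambda}$. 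For $i\in I$ the bound is $0$, so $f_i(x^\lambda y^{1-\lambda})=0$. This $\varepsilon$-perturbation argument replaces the infimum argument of Proposition~\ref{prop:positiveHomogeneity}, and I expect it to be the main (mild) obstacle, since it is exactly where zero values must be handled.

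\textbf{(b)$\implies$(a) for finite-valued $f$.} For $x,y\in E$, the points $(x,f(x))$ and $(y,f(y))$ lie in $\epi(f)$ because $f$ is finite. By (b), $(x^\lambda y^{1-\lambda},f(x)^\lambda f(y)^{1-\lambda})\in\epi(f)$, which is precisely GG-convexity of $f$.

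If $f$ may take the value $\infty$, this last step fails: $(x,f(x))\notin\epi(f)$, and the right-hand side $\infty^\lambda\cdot 0=0$ under the stated conventions could be violated. This explains the finiteness assumption in the statement.
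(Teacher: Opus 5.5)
Your proposal is correct and follows essentially the same route as the paper. The implications (a)$\Rightarrow$(b), (b)$\Rightarrow$(c) and, for finite-valued $f$, (b)$\Rightarrow$(a) are argued identically, and your shift to $(x,\alpha+\varepsilon\mathbf 1),(y,\beta+\varepsilon\mathbf 1)\in\epiG(f)$ in (c)$\Rightarrow$(b) is an explicit instance of the paper's infimum over strictly positive upper bounds $\eta\geq f(x)$, $\kappa\geq f(y)$. Your version is slightly cleaner, because it gives $f(x^\lambda y^{1-\lambda})\leq\alpha^\lambda\beta^{1-\lambda}$ in every component at once, whereas the paper only spells out the components with $\alpha_i=0$ or $\beta_i=0$.
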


\begin{proof}

    First, we prove that (a) implies (b). To do so, let $(x,\alpha),(y,\beta)\in\epi(f)$ and $\lambda\in(0,1)$. Then, it holds that 
    \[
        f(x^{\lambda} y^{1-\lambda})\leq f(x)^{\lambda}  f(y)^{1-\lambda}\leq \alpha^{\lambda}\beta^{1-\lambda}.
    \]
    Hence, $(x,\alpha)^{\lambda}\cdot (y,\beta)^{1-\lambda}=(x^{\lambda}y^{1-\lambda},\alpha^{\lambda}\beta^{1-\lambda})\in\epi(f)$.

    To prove that (b) implies (c), let $(x,\alpha),(y,\beta)\in\epiG(f)$ and $\lambda\in(0,1)$. Then,~$\alpha^{\lambda}\beta^{1-\lambda}\in(0,\infty)^{m}$ and assumption (b) implies that $$(x,\alpha)^\lambda\cdot (y,\beta)^{1-\lambda}=(x^\lambda y^{1-\lambda},\alpha^{\lambda}\beta^{1-\lambda})\in\epi(f)\cap (E\times(0,\infty)^m) = \epiG(f).$$

    Next, we prove that (c) implies (b). In doing so, assume $\lambda\in(0,1)$ and $(x,\alpha),(y,\beta)\in \epi(f)$. If $(x,\alpha),(y,\beta)\in \epiG(f)$ then by assumption $(x,\alpha)^{\lambda}\cdot(y,\beta)^{1-\lambda}\in\epiG(f)\subseteq\epi(f)$. Otherwise, for every $\gamma\in[0,\infty)^m$ we set $I_\gamma:=\{i\in[m]\mid\gamma_i = 0\}$ . Then, for every $i\in I_\alpha \cup I_\beta$ we obtain that
    \[
        f_i(x^{\lambda} y^{1-\lambda})\leq \inf_{\substack{\eta,\kappa\in (0,\infty)^m\\ f(x)\leq \eta,f(y)\leq\kappa}}\eta_i^{\lambda}\,\kappa_i^{1-\lambda} = 0. 
    \]
    This implies that $f_i(x^{\lambda}y^{1-\lambda})=0$ for all $i\in I_\alpha \cup I_\beta$ and hence, $(x,\alpha)^{\lambda}\cdot (y,\beta)^{1-\lambda}\in\epi(f)$.

    Finally, assume $f:E\rightarrow [0,\infty)^m$. To prove that then (b) implies (a), let $x,y\in E$ and $\lambda\in(0,1)$. Then, $(x,f(x)),(y,f(y))\in\epi(f)$, which by GG-convexity of $\epi(f)$ implies that 
    \[
        (x^{\lambda}y^{1-\lambda},f(x)^{\lambda}f(y)^{1-\lambda})\in\epi(f). 
    \]
    Hence, $f(x^{\lambda}y^{1-\lambda})\leq f(x)^{\lambda}f(y)^{1-\lambda}$, which shows that $f$ is GG-convex.
\end{proof}

In contrast to Proposition~\ref{prop:positiveHomogeneity}, we do not obtain that (b) implies (a) when $m=1$, as the next counterexample shows.

\begin{example}
    Let $f:(0,\infty)\rightarrow [0,\infty]$ be given by 
    \[
        f(x) = \begin{cases}
            0, & x = 1,\\
            \infty, & x\neq 1.
        \end{cases}
    \]
    Then, $\epi(f) = \{1\} \times [0,\infty)$. Choose an arbitrary $\lambda\in(0,1)$ and $\alpha,\beta\in[0,\infty)$. Then, we obtain that $(1,\alpha^{\lambda}\beta^{1-\lambda})\in\epi(f)$ and hence, $\epi(f)$ is geometrically convex. However, 
    \[
        f(2^{1-\lambda}) = \infty > 0 = f(1)^{\lambda} f(2)^{1-\lambda}. 
    \]
\end{example}

\subsection{Connection to monetary risk measures}

There is a one-to-one correspondence between RRMs on the interior of the positive cone $\interior_{\lVert.\rVert_{\infty}}(L_+^{\infty})$ and monetary risk measures on $L^{\infty}$. We show that the same correspondence holds for AM-algebras, enabling us to transfer a variety of results from the theory of monetary risk measures to RRMs. 

For this, it is important to note that the interior of the positive cone of an AM-algebra with order unit is always non-empty. The next result shows that this interior refers to an exponential transform of the underlying AM-space. In particular, we show explicitly that the exponential of an element in an AM-space with unit is well-defined.  
\begin{lemma}\label{lem:interiorPositiveCone}
    Let $V$ be an AM-space with unit~$\unit$. Then, for every $x\in V$, $\exp(x)$ exists and lies in $V_+$. Then, set $\exp(V)= \{\exp(x)\mid x\in V\}$. The following statements hold:    
    \begin{enumerate}
        \item[(i)] $\interior_{\normAM{.}}(V_+) = \{x\in V_+\mid \exists r>0: x\geq r\unit\}$;\,\footnote{For a characterization of the interior of $C(K)_+$, see Corollary~\ref{cor:interiorCK}.}
        \item[(ii)] $\interior_{\normAM{.}}(V_+) = \exp(V)$. In particular, $\unit\in \exp(V)$;
        \item[(iii)] For every $y=\exp(x)\in\exp(V)$, $\log(y)$ exists and lies in $V$ with $\log(y) = x$.
    \end{enumerate}
    If $V$ is an AM-algebra with unit $\unit$, then the following holds:
    \begin{enumerate}
        \item[(iv)] For each $x\in V$ and $m\in\R$ we have $\exp(x+m\unit) = \exp(m)\exp(x)$.  
    \end{enumerate}
\end{lemma}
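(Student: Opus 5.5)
The plan is to transport everything to a space of continuous functions, where all claims are pointwise facts about real functions. For an AM-space $V$ with unit $\unit$, the Krein--Kakutani theorem gives a surjective lattice isometry $\Phi:V\rightarrow C(K)$ for some compact Hausdorff space $K$ with $\Phi(\unit)=\mathbf{1}_K$. If $V$ is an AM-algebra, the version used in the proof of the preceding proposition additionally makes $\Phi$ multiplicative for the pointwise product.

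\textbf{Existence of $\exp$ and $\log$.} For $x\in V$ the function $\exp\circ\hat{x}$ is continuous and strictly positive on $K$, so we define $\exp(x):=\Phi^{-1}(\exp\circ\hat{x})$. Since $\Phi^{-1}$ is a lattice isomorphism and $\exp\circ\hat{x}\geq 0$, we get $\exp(x)\in V_+$. Likewise, for $y=\exp(x)$ we have $\hat{y}=\exp\circ\hat{x}$, and we set $\log(y):=\Phi^{-1}(\log\circ\hat{y})=\Phi^{-1}(\hat{x})=x$. This is (iii), and it shows $\log$ is well defined on $\exp(V)$.

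\textbf{Part (i).} If $x\geq r\unit$ with $r>0$, then for $\normAM{z-x}<r$ we have $|z-x|\leq s\unit$ for some $s<r$. Hence $z\geq x-s\unit\geq (r-s)\unit\geq 0$, so the whole ball lies in $V_+$. Conversely, if $B_{\normAM{.}}(x,r)\subseteq V_+$, then $x-\tfrac r2\unit$ lies in the ball because $\normAM{\unit}=1$. Therefore $x-\tfrac r2\unit\geq0$, i.e.\ $x\geq \tfrac r2\unit$.

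\textbf{Part (ii).} Via $\Phi$, the condition $x\geq r\unit$ is equivalent to $\hat{x}\geq r$ on $K$. If $y=\exp(x)$, then $\hat{y}=\exp\circ\hat{x}\geq \exp(-\normAM{x})>0$, so $y$ lies in the interior by (i); here we use $|\hat{x}|\leq\normAM{x}$, as $\Phi$ is an isometry onto the sup-norm. Conversely, if $\hat{y}\geq r>0$, then $\log\circ\hat{y}$ is continuous, hence lies in $C(K)$, and $y=\exp\big(\Phi^{-1}(\log\circ\hat{y})\big)$. Finally, $\unit=\exp(0)$, since $\exp\circ 0=\mathbf{1}_K=\hat{\unit}$.

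\textbf{Part (iv).} In the AM-algebra case we compute
\[
\Phi\big(\exp(x+m\unit)\big)=\exp\circ(\hat{x}+m)=\exp(m)\,\exp\circ\hat{x}=\Phi\big(\exp(m)\exp(x)\big),
\]
using linearity of $\Phi$ and $\hat{\unit}=\mathbf{1}_K$. Injectivity of $\Phi$ then gives the identity. Here $\exp(m)\exp(x)$ is simply a scalar multiple, so multiplicativity of $\Phi$ is not even needed.

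\textbf{Main obstacle.} The delicate step is (i)--(ii) in a general AM-space. There one must check that the Krein--Kakutani representation maps $\unit$ to $\mathbf{1}_K$ and that $\normAM{.}$ corresponds to the sup-norm. Both follow from the cited representation theorem for AM-spaces with unit, e.g.\ \cite[Section 9.5]{AliprantisBorder}, which I would invoke directly.
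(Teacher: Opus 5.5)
Your proposal is correct and follows essentially the same route as the paper: you transport to $C(K)$ via the Krein--Kakutani isometry with $\Phi(\unit)=\mathbf{1}_K$, use the same $x-\tfrac{r}{2}\unit$ argument for (i), and apply $\exp$/$\log$ pointwise for (ii)--(iv). The differences are only cosmetic: your bound $\exp\circ\hat{x}\geq\exp(-\normAM{x})$ replaces the paper's minimum-attainment argument, and $\unit=\exp(0)$ replaces the citation for $\unit\in\exp(V)$. The step you flag as the main obstacle is in fact automatic: since $|x|\leq\normAM{x}\unit$ in the Archimedean space $V$, taking $x=\Phi^{-1}(\mathbf{1}_K)$ gives $\mathbf{1}_K\leq\hat{\unit}$, while $\lVert\hat{\unit}\rVert_{\infty}=\normAM{\unit}=1$ gives $\hat{\unit}\leq\mathbf{1}_K$, so any surjective lattice isometry onto $C(K)$ sends $\unit$ to $\mathbf{1}_K$.
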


\begin{proof}

    By the Krein-Kakutani Theorem, see~\cite{Kakutani1941} and~\cite[Theorem 9.32]{AliprantisBorder}, $V$ is lattice isometric (unique up to homeomorphism) to $C(K)$ for some compact Hausdorff space $K$ with unit $\mathbf{1}_K$. The lattice isometry is denoted by $\Phi:V\rightarrow C(K)$, or $\hat{x} = \Phi(x)$ with $x\in V$ for short. For each $x\in V$, $\exp(\hat{x})$ is well-defined and is contained in $C(K)_{+}$. So, $\exp(x):=\Phi^{-1}(\exp(\hat{x}))\in V_{+}$ exists for all $x\in V$. Part (iii) follows analogously.

    Now, let us prove (i). First, assume $x\in V_+$ and $r>0$ with $x\geq r\unit$. Then, let $y\in V$ satisfying $\normAM{x-y}<r$. We obtain
    \[
        -r\unit <y-x< r\unit,
    \]
    which in turn yields
    \[
        y = x+(y-x)\geq r \unit -r\unit = 0. 
    \]
    So, $y\in V_+$, which gives us that $B_{\normAM{.}}(x,r)\subseteq V_+$ and hence, $x\in \interior_{\normAM{.}}(V_+)$. 
    
    For the remaining inclusion let $x\in \interior_{\normAM{.}}(V_+)$. Then, there exists $r\in(0,\infty)$ such that  $B_{\normAM{.}}(x,r)\subseteq V_+$. So, it holds that 
    \[
        \left\lVert x-\left(x-\frac{r}{2}\unit\right)\right\rVert_{\unit} = \frac{r}{2}\normAM{\unit} = \frac{r}{2}<r,
    \]
    which implies that 
    \[
        x-\frac{r}{2}\unit\in \ball_{\normAM{.}}(x,r)\subseteq V_+,
    \]
    and hence, $\interior_{\normAM{.}}(V_+)\subseteq \{x\in V_+\mid \exists r>0: x\geq r\unit\}$. 
    
    Next, let us prove (ii). To do so, let $x\in V$. By using the lattice isometry $\Phi$ from the Krein-Kakutani representation, we obtain that $\exp(\hat{x})$ attains a strictly positive minimum $\delta>0$, because it is a real-valued lower semicontinuous function on a compact space, see~\cite[Theorem 2.43]{AliprantisBorder}. Moreover, by~\cite[Theorem 9.17]{AliprantisBorder}, $\Phi^{-1}$ is a positive operator, and by~\cite[Theorem 1.2]{MunozTracedete2025AMalgebras}, we can assume that $\Phi(\unit) = \mathbf{1}_K$, from which we conclude that
    \[
        \exp(x) = \Phi^{-1}(\exp(\hat{x})) \geq \delta \Phi^{-1}(\mathbf{1}_K) = \delta \unit. 
    \]
    Thus, $\exp(x)\in\interior_{\normAM{.}}(V_+)$, which gives us that $\exp(V)\subseteq \interior_{\normAM{.}}(V_+)$. For the inverse inclusion, assume $y\in \interior_{\normAM{.}}(V_+)$. Then, by part (i) we know that $y\geq r\unit $ for some $r>0$. Using again~\cite[Theorem 9.17]{AliprantisBorder}, we have
    \[
        \hat{y} = \Phi(y)\geq r \mathbf{1}_K >0. 
    \]
    This means that the expression $\log(\hat{y})$ is well-defined with $\log(\hat{y})\in C(K)$. Set $x:= \Phi^{-1}(\log(\hat{y}))\in V$, and by $\exp(\hat{x})=\hat{y}$ we get 
    \[
        y = \Phi^{-1}(\hat{y}) = \Phi^{-1}(\exp(\hat{x})) = \exp(x)\in V_+
    \]
    Hence, $\exp(V)\subseteq \interior_{\normAM{.}}(V_+)$.
    
    Then, $\unit\in \exp(V)$ is a direct consequence of (ii) together with~\cite[Theorem 9.40]{AliprantisBorder}.  

    Finally, for part (iv), recall that by~\cite[Theorem 1.2]{MunozTracedete2025AMalgebras}, we can assume that $\Phi(\unit) = \mathbf{1}_K$. Thus, 
    \[
        \Phi(\exp(x+m\unit)) = \exp(m)\exp(\Phi(x))\mathbf{1}_K = \exp(m)\exp(\Phi(x)).
    \]
    This shows that $\exp(x+m\unit) = \exp(m)\exp(x)$. 
\end{proof}

For $n\in\mathbb{N}$, we write $\mathbf{1}_n:=(1,\dots,1)^{\intercal}\in\mathbb{R}^n$, or simply $\mathbf{1}$ if the context is clear. Now we are in a position to introduce monetary risk measures.
\begin{definition}
    Let $V$ be a Riesz space with order unit $\unit$. A map $\riskM:V\rightarrow [-\infty,\infty]^m$ is called a \textbf{(monetary) risk measure} (for $\unit$), if it satisfies the following two properties:
    \begin{enumerate}
        \item[(1)] For all $x,y\in V$ with $x\leq y$ it holds that $\riskM(x)\leq \riskM(y)$ \textit{(monotonicity)};
        \item[(2)] For all $x\in V$ and $m\in\R$ it holds that $\riskM(x+m\unit) = \riskM(x)+m\mathbf{1}$ \textit{(cash-additivity)}.
    \end{enumerate}
    The corresponding acceptance set is $\mathcal{A}_{\rho}:=\{x\in V\mid\rho(x)\leq 0\}$.
\end{definition}

The next result establishes the connection between monetary and return risk measures. Before  stating the result, recall from Lemma~\ref{lem:pushforwardTopology} (i) that if $\tau$ is a topology on an AM-algebra $V$, then $\exp(\tau) := \{\exp(U)\mid U\in\tau\}$ is a topology on $\interior_{\lVert.\rVert_{\unit}}(V_+)$, the so-called pushforward topology.
\begin{proposition}\label{prop:relationToClassicalRM}
    Let $V$ be an AM-algebra with unit $\unit$. Given a map $\rrm:\interior_{\normAM{.}}(V_+)\rightarrow (0,\infty]^{m}$ and set $\riskM = \log\circ \rrm\circ \exp$.\footnote{Note, for $\exp$, we use the construction given in the proof of Lemma~\ref{lem:interiorPositiveCone}.} Then, the following statements hold:
    \begin{enumerate}
        \item[(i)] $\rrm$ is an RRM if and only if $\riskM$ is a risk measure (for $\unit$); 
        \item[(ii)] $\rrm$ is GG-convex if and only if $\riskM$ is convex;
        \item[(iii)] For a topology $\tau$ on $V$, $\rrm$ is $\exp(\tau)$-lower semicontinuous if and only if $\riskM$ is $\tau$-lower semicontinuous; 
        \item[(iv)] For $m=1$, $\rrm$ is proper if and only if $\riskM$ is proper.
    \end{enumerate}
\end{proposition}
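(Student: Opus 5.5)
The plan is to transport every property through the bijection $\exp:V\to\interior_{\normAM{.}}(V_+)$ of Lemma~\ref{lem:interiorPositiveCone} (ii)--(iii), together with the Krein--Kakutani isometry $\Phi:V\to C(K)$ with $\Phi(\unit)=\mathbf{1}_K$. We extend $\log$ and $\exp$ to $(0,\infty]$ and $(-\infty,\infty]$ by $\log\infty=\infty$ and $\exp(\infty)=\infty$. Then $\riskM=\log\circ\rrm\circ\exp$ takes values in $(-\infty,\infty]^m$, and $\rrm=\exp\circ\riskM\circ\log$. The key auxiliary facts, all verified pointwise in $C(K)$ and pulled back through $\Phi$, are these:
\begin{itemize}
\item[(1)] $\exp$ and $\log$ are order-preserving and order-reflecting bijections between $V$ and $\interior_{\normAM{.}}(V_+)$. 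Indeed, $\Phi$ and $\Phi^{-1}$ are positive, and on $C(K)$ the real functions $\exp$ and $\log$ are strictly increasing.
\item[(2)] $\exp(x+m\unit)=e^m\exp(x)$, which is Lemma~\ref{lem:interiorPositiveCone} (iv).
\item[(3)] $\exp(\lambda x+(1-\lambda)y)=\exp(x)^\lambda\exp(y)^{1-\lambda}$, since $\Phi$ is an algebra isomorphism and the powers are defined via $\Phi$.
\end{itemize}

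For (i), monotonicity transfers directly through (1), because the componentwise real $\log$ and $\exp$ on the extended half-lines are increasing. For positive homogeneity, write $\lambda=e^m$ and $y=\exp(x)$. Then (2) gives $\riskM(x+m\unit)=\log\rrm(e^m y)$, and $\riskM(x)+m\mathbf{1}=\log(e^m\rrm(y))$. Since $\log$ is injective on $(0,\infty]$, the equivalence follows, with the convention $\infty+m=\infty$ matching $\lambda\cdot\infty=\infty$. As $m$ ranges over $\R$, $\lambda$ ranges over $(0,\infty)$, and $\exp$ is onto the domain, so the two properties are equivalent.

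For (ii), the value $-\infty$ cannot occur, so convexity of $\riskM$ is equivalent to the pointwise inequality $\riskM(\lambda x+(1-\lambda)y)\le\lambda\riskM(x)+(1-\lambda)\riskM(y)$. Applying the increasing map $\exp$ and using (3), this becomes exactly $\rrm(x'^\lambda y'^{1-\lambda})\le\rrm(x')^\lambda\rrm(y')^{1-\lambda}$ with $x'=\exp x$ and $y'=\exp y$, under the convention $\infty^\lambda=\infty$. The case where some component is infinite is checked separately and is immediate. We also need $\interior_{\normAM{.}}(V_+)$ to be GG-convex; this is the image of the convexity of $V$ under (3).

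For (iii), $\exp:(V,\tau)\to(\interior(V_+),\exp(\tau))$ is a homeomorphism by construction of the pushforward topology. The sublevel sets satisfy $\{\riskM\le c\}=\log(\{\rrm\le e^c\})$, or equivalently $\exp(\{\riskM\le c\})=\{\rrm\le e^c\}$, so closedness of one family is equivalent to closedness of the other, as $c$ ranges over $\R^m$ and $e^c$ over $(0,\infty)^m$. For $m>1$, lower semicontinuity is understood componentwise, and the same argument applies to each component. For (iv), $\riskM>-\infty$ holds automatically, and $\dom\riskM=\log(\dom\rrm)$, so one is nonempty iff the other is.

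The main obstacle is (3) and the well-definedness of the transfer. This requires that the powers, products, $\exp$ and the order all be computed in the same representation $C(K)$, with $\Phi(\unit)=\mathbf{1}_K$ and $\Phi$ simultaneously a lattice and algebra isometry. We would invoke the AM-algebra Krein--Kakutani theorem exactly as in the power proposition, then compute pointwise $e^{\lambda\hat x+(1-\lambda)\hat y}=(e^{\hat x})^\lambda(e^{\hat y})^{1-\lambda}$. The remaining work is bookkeeping of the $\infty$ conventions.
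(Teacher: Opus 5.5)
Your proposal is correct and follows the paper's proof: each property is transported through the bijection $\exp$ using Lemma~\ref{lem:interiorPositiveCone} (iii)--(iv). The one cosmetic difference is in (iii), where you argue via closed sublevel sets and the fact that $\exp$ is a homeomorphism onto $(\interior_{\normAM{.}}(V_+),\exp(\tau))$ (sublevel sets at levels $\le 0$ are empty because $\rrm>0$), whereas the paper argues with nets, $\liminf$ and Lemma~\ref{lem:pushforwardTopology} (ii); you also make explicit the identity $\exp(\lambda x+(1-\lambda)y)=\exp(x)^{\lambda}\exp(y)^{1-\lambda}$, the $\infty$-conventions and the GG-convexity of the domain, all of which the paper uses tacitly.
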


\begin{proof} 
    We start by proving (i). The equivalence for monotonicity is an immediate consequence of the monotonicity of $\log$ and $\exp$. Then, let $\rrm$ be positively homogeneous. For all $x\in V$ and $m\in \R$ we obtain by applying Lemma~\ref{lem:interiorPositiveCone} (iv) that
    \[
        \riskM(x+m\unit) = \log\Big(\rrm\big(\exp(x+m\unit)\big)\Big) = \log\Big(\exp(m)\rrm\big(\exp(x)\big)\Big) = \rho(x)+m\mathbf{1}. 
    \]
    Positive homogeneity of $\rrm$ given that $\riskM$ is cash-additive follows analogously by making use of Lemma~\ref{lem:interiorPositiveCone} (iii). To obtain the ``only if'' implication in (ii), let $\lambda\in(0,1)$ and $x,y\in V$. Then, it holds that
    \begin{align*}
        \riskM(\lambda x + (1-\lambda) y) &= (\log \circ \rrm\circ\exp)(\lambda x + (1-\lambda) y)\\ 
        &\leq \log\Big(\rrm\big(\exp(x)\big)^{\lambda}\,\rrm\big(\exp(y)\big)^{1-\lambda}\Big)= \lambda \riskM(x) + (1-\lambda)\riskM(y).
    \end{align*}
    Proving the ``if'' part works analogously. 

    To prove (iii), let $\tilde{\rho}$ be lower semicontinuous with respect to the pushforward topology $\exp(\tau)$ and let $(x_{\alpha})_{\alpha}$ be a net in $V$ with $x_{\alpha}\xrightarrow{\tau} x$ for some $x\in V$. Recall that $\Phi^{-1}\circ\exp\circ \Phi$ is a bijection from~$V$ to $\interior_{\normAM{.}}(V_+)$, where $\Phi$ is the lattice isometry between $V$ and $C(K)$ from the Krein-Kakutani representation, as chosen in the proof of Lemma~\ref{lem:interiorPositiveCone}. Then, by $\exp(\tau) = (\Phi^{-1}\circ\exp\circ \Phi) (\tau)$ and by Lemma~\ref{lem:pushforwardTopology} (ii) we have  $\exp(x_\alpha)\xrightarrow{\exp(\tau)}\exp(x)$. Using the lower semicontinuity of $\rrm$, we then obtain that 
    \[
        \liminf_{\alpha}\rrm(\exp(x_\alpha)) \geq \rrm(\exp(x))
    \]
    and by continuity of the logarithm, we obtain $\liminf_{\alpha}\riskM(x_\alpha) \geq \riskM(x)$. The inverse implication in~(iii) works analogously. To prove (iv), note that $\rrm>0$ implies $\riskM>-\infty$. Moreover, $\dom(\rrm)\neq\emptyset$ is obviously equivalent to $\dom(\riskM)\neq\emptyset$.
\end{proof}

\begin{remark}\label{exam:exponentialTopology}
    If $\tau$ is the topology induced by the norm $\lVert.\rVert_{\unit}$, then  Lemma~\ref{lem:exponentialOfTopology} gives us that $\exp(\tau)=\tau|_{\interior_{\normAM{.}}(V_+)}$. As a consequence of Proposition~\ref{prop:relationToClassicalRM}, additional findings concerning GG-convex functions on the positive real line can be found in Appendix~\ref{sec:supplementaryPositiveRealLine}.
\end{remark}

\section{Continuity and dual representations}\label{sec:continuity}

In the case of $m=1$, the following result shows that an RRM defined on the interior of the positive cone can only attain values in $(0,\infty)$ and it is locally Lipschitz-continuous. 
\begin{theorem}\label{thm:LipschitzContinuity}
    Let $V$ be an AM-algebra with unit $\unit$ and let~${\rrm:\interior_{\normAM{.}}(V_+)\rightarrow (0,\infty]}$ be an RRM. 
    Then, $\rrm\in(0,\infty)$ and $\rrm$ is locally Lipschitz-continuous. 
    
    In particular, if $\rrm$ is GG-convex, then $\rrm$ is lower semicontinuous in the relative topology $\tau|_{\interior_{\normAM{.}}(V_+)}$ with topology $\tau$ being induced by $\normAM{.}$.  
\end{theorem}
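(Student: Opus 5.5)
The plan is to use the characterization of the interior in Lemma~\ref{lem:interiorPositiveCone} together with monotonicity and positive homogeneity. That gives finiteness and a two-sided sandwich bound, and the local Lipschitz estimate then comes from combining the sandwich with homogeneity.

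\textbf{Finiteness.} Fix $x\in\interior_{\normAM{.}}(V_+)$. By Lemma~\ref{lem:interiorPositiveCone}(i) there is $r>0$ with $r\unit\leq x$. Since $|x|\leq\normAM{x}\unit$, we also have $x\leq\normAM{x}\unit$. Both $r\unit$ and $\normAM{x}\unit$ lie in the interior. Monotonicity and positive homogeneity then give
\[
r\,\rrm(\unit)\leq\rrm(x)\leq\normAM{x}\,\rrm(\unit).
\]
Hence it suffices to show $\rrm(\unit)<\infty$. Suppose $\rrm(\unit)=\infty$. Then the left inequality forces $\rrm(x)=\infty$ for every $x$ in the interior, so $\rrm\equiv\infty$. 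I will exclude this case by reading the codomain $(0,\infty]$ together with the intended nondegeneracy, i.e.\ properness, as implicit in the statement. This is the delicate point of the proof: without a properness assumption the constant $\infty$ is formally an RRM. I would either add $\dom(\rrm)\neq\emptyset$ or note that the sandwich shows $\rrm$ is finite everywhere as soon as it is finite somewhere. Granting this, $c:=\rrm(\unit)\in(0,\infty)$, and therefore $\rrm\in(0,\infty)$ everywhere.

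\textbf{Local Lipschitz continuity.} Again fix $x$ with $x\geq r\unit$, and take $y\in B_{\normAM{.}}(x,r/2)$. Then $y\geq (r/2)\unit$. Put $\delta:=\normAM{x-y}$. We have
\[
y\leq x+\delta\unit\leq x+\tfrac{\delta}{r}x=\bigl(1+\tfrac{\delta}{r}\bigr)x,
\]
so monotonicity and homogeneity give $\rrm(y)\leq(1+\delta/r)\rrm(x)$. Symmetrically, $x\leq(1+2\delta/r)y$, which gives $\rrm(x)\leq(1+2\delta/r)\rrm(y)$. Combining the two, and using $\rrm(y)\leq \normAM{y}c\leq(\normAM{x}+r)c$, we obtain
\[
|\rrm(x)-\rrm(y)|\leq \tfrac{2}{r}\,(\normAM{x}+r)\,c\,\normAM{x-y}.
\]
This is a uniform Lipschitz bound on the ball $B_{\normAM{.}}(x,r/2)$.

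\textbf{Lower semicontinuity.} Local Lipschitz continuity implies continuity in $\tau|_{\interior_{\normAM{.}}(V_+)}$, and continuity in particular gives lower semicontinuity. GG-convexity is not even needed for this. Alternatively, one can argue through Proposition~\ref{prop:relationToClassicalRM}(ii),(iii) and Remark~\ref{exam:exponentialTopology}: $\riskM=\log\circ\rrm\circ\exp$ is then a finite convex monetary risk measure, hence $1$-Lipschitz in $\normAM{.}$ and so lower semicontinuous. Since $\exp(\tau)=\tau|_{\interior_{\normAM{.}}(V_+)}$, this transfers back to $\rrm$.
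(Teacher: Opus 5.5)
Your proof is correct, given the properness caveat you raise yourself. It takes a genuinely different and more elementary route than the paper.

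\textbf{How the two routes differ.} The paper passes to $\riskM=\log\circ\rrm\circ\exp$ and quotes \cite[Proposition 3.1]{Farkas2014BeyondCashAdditiveRMs} to get that the cash-additive map $\riskM$ is finite and globally Lipschitz on $V$. It then transports this back to $\rrm$ through $\log$ (Lipschitz on $\{y\geq\frac{\varepsilon}{2}\unit\}$) and $\exp$ (Lipschitz on bounded sets), which needs the Krein--Kakutani functional calculus. For the last assertion it uses convexity of $\riskM$ to upgrade norm continuity to $\sigma(V,V^{\prime})$-lower semicontinuity, which is more than the statement asks for.

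You work directly on the cone: for $y,z\geq s\unit$ one has $y\leq\bigl(1+\normAM{y-z}/s\bigr)z$, and monotonicity plus positive homogeneity do the rest. This uses only Lemma~\ref{lem:interiorPositiveCone}(i). So it needs neither the multiplication, nor Krein--Kakutani, nor the external reference, and it works verbatim on the interior of the positive cone of any Archimedean Riesz space with order unit $\unit$ and norm $\normAM{.}$. It also gives an explicit local constant of order $\rrm(\unit)\normAM{x}/r$, whose blow-up as $r\downarrow 0$ is consistent with the paper's example that $\rrm$ is not globally Lipschitz.

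The paper's route buys the structural link to monetary risk measures, e.g.\ the global estimate $|\log\rrm(y)-\log\rrm(z)|\leq\normAM{\log(y)-\log(z)}$. You are also right that, for the relative norm topology, lower semicontinuity follows at once from continuity, without GG-convexity.

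\textbf{Finiteness.} Your worry is well founded. $\rrm\equiv\infty$ is monotone and positively homogeneous with values in $(0,\infty]$, so the statement needs $\dom(\rrm)\neq\emptyset$. The paper's proof uses this tacitly: the finiteness conclusion drawn from \cite[Proposition 3.1]{Farkas2014BeyondCashAdditiveRMs} is only available when $\mathcal{A}_{\riskM}\neq\emptyset$, and $\rrm<\infty$ is then inferred through the properness equivalence in Proposition~\ref{prop:relationToClassicalRM}(iv). Your sandwich $r\,\rrm(\unit)\leq\rrm(x)\leq\normAM{x}\,\rrm(\unit)$ gives the clean dichotomy: $\rrm$ is either identically $\infty$ or finite everywhere.

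\textbf{Lipschitz step.} Your displayed estimate only compares the center $x$ with points $y$ of the ball. That is not yet a Lipschitz bound on $\ball_{\normAM{.}}(x,r/2)$. Any two points $y,z$ of that ball satisfy $y,z\geq\frac{r}{2}\unit$, so your two one-sided inequalities, with $r/2$ in place of $r$, give
$|\rrm(y)-\rrm(z)|\leq\frac{2}{r}\max\{\rrm(y),\rrm(z)\}\normAM{y-z}\leq\frac{2}{r}(\normAM{x}+r)\,c\,\normAM{y-z}$.
Simply state the argument for pairs.
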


\begin{proof}
    Set $\riskM = \log\circ\rrm\circ\exp$. By Proposition~\ref{prop:relationToClassicalRM} (i), $\riskM$ is a risk measure (for $\unit$). Hence, we can use the cash-additivity of $\riskM$ to represent it as follows:
    \[
        \riskM(x) = \inf\{m\in\R\mid X-m\unit\in\mathcal{A}_{\riskM}\}.
    \]
    By $\unit$ being an interior point of $V_+$, we can apply~\cite[Proposition 3.1]{Farkas2014BeyondCashAdditiveRMs} to obtain that $\rho$ is finite-valued and Lipschitz-continuous (with Lipschitz constant $L>0$). Hence, $\rrm<\infty$, by Proposition~\ref{prop:relationToClassicalRM} (iv), and therefore, $\rrm\in(0,\infty)$. To prove that $\rrm$ is locally Lipschitz-continuous, choose an arbitrary $x\in\interior_{\normAM{.}}(V_+)$ with $x\in\left[\varepsilon \unit, c\unit\right]$ for some $\varepsilon,c>0$. Then, as neighborhood of $x$ choose $\ball_{\normAM{.}}\big(x,\frac{\varepsilon}{2}\big)\subseteq \interior_{\normAM{.}}(V_+)$. Using again the lattice isometry of the Krein-Kakutani representation between $V$ and $C(K)$ for the corresponding compact Hausdorff space $K$, for every $y\in \ball_{\normAM{.}}\big(x,\frac{\varepsilon}{2}\big)$ we have $\log(y)\in\left[\log(\frac{\varepsilon}{2}\unit), \log((\frac{\varepsilon}{2}+c)\unit)\right]$. Thus, by the fact that $\log^{\prime}(t) = \frac{1}{t}\leq \frac{2}{\varepsilon}$ on $\left[\frac{\varepsilon}{2},\infty\right)$ ---  i.e.,~$|\log(u)-\log(s)|\leq \frac{2}{\varepsilon}|u-s|$ for all $u,s\in\left[\frac{\varepsilon}{2},\infty\right)$ --- the Lipschitz-continuity of $\riskM$ implies for all $y\in \ball_{\normAM{.}}\big(x,\frac{\varepsilon}{2}\big)$ that
    \[
        \normAM{\rho(\log(y))-\riskM(\log(x))}\leq L\normAM{\log(y)-\log(x)}\leq \frac{2L}{\varepsilon}\normAM{y -x}.
    \]
    So, $\rho\circ\log$ is Lipschitz-continuous on $\ball_{\normAM{.}}\big(x,\frac{\varepsilon}{2}\big)$. Furthermore, $\rho\circ\log$ is bounded on $\ball_{\normAM{.}}\big(x,\frac{\varepsilon}{2}\big)$. Indeed, for all $y\in \ball_{\normAM{.}}\big(x,\frac{\varepsilon}{2}\big)$ we have
    \begin{align*}
        \normAM{\riskM(\log(y))}&\leq \normAM{\riskM(\log(y))-\riskM(\log(x))}+\normAM{\riskM(\log(x))}\\
        &\leq \frac{2L}{\varepsilon}\normAM{ y-x}+\normAM{\riskM(\log(x))}\\
        &\leq L+\normAM{\riskM(\log(x))}\in\R.
    \end{align*}
    Thus, by the fact that the exponential function is also Lipschtz-continuous on bounded sets, we obtain that the composition $\rrm=\exp\circ\riskM\circ \log$ is Lipschitz-continuous on $\ball_{\normAM{.}}\big(x,\frac{\varepsilon}{2}\big)$ and hence, $\rrm$ is locally Lipschitz-continuous.

    If $\rrm$ is GG-convex, then $\riskM$ is convex. Recall, Lipschitz-continuity of $\riskM$ implies continuity of $\riskM$ with respect to the norm topology $\tau$. Moreover, $\tau$ and $\sigma(V,V^{\prime})$ are consistent to the dual pair~$\langle V,V^{\prime}\rangle$. By~\cite[Corollary 5.99]{AliprantisBorder}, the set of lower semicontinuous convex maps is the same in all topologies consistent with respect to a fixed dual pair. Hence, $\riskM$ is $\sigma(V,V^{\prime})$-lower semicontinuous. The claim then follows by applying $\rrm = \exp\circ\riskM\circ\log$ and Lemma~\ref{lem:exponentialOfTopology}.
\end{proof}

The following example shows that $\rrm$ is not globally Lipschitz-continuous in general.
\begin{example}
    We make use of the certainty equivalent for logarithmic utility. To do so, we choose $V = L^{\infty}(\Omega;\R)$ and $\riskM(X) = \E[X]$. Then, for a set $A\in \mathcal{F}$ with $\P(A) = \frac{1}{2}$ set $X_n = \frac{1}{n} \cdot\mathbf{1}_A + \mathbf{1}_{A^{\complement}}$ as well as $Y_n = \frac{2}{n} \cdot\mathbf{1}_A + \mathbf{1}_{A^{\complement}}$. Then, 
    \begin{align*}
        \rrm(X_n) &= (\exp\circ\riskM\circ\log)(X_n) = \exp\Big(-\frac{1}{2}\log(n)\Big) = \sqrt{\frac{1}{n}},\\ 
        \rrm(Y_n) &= (\exp\circ\riskM\circ\log)(Y_n) = \exp\Big(\frac{1}{2}\log\Big(\frac{2}{n}\Big)\Big) = \sqrt{\frac{2}{n}}. 
    \end{align*}
    Together with $\lVert X_n-Y_n\rVert_{\infty} = \frac{2}{n}-\frac{1}{n} = \frac{1}{n}$ we obtain
    \[
        \lim\limits_{n\rightarrow\infty} \frac{|\rrm(X_n)-\rrm(Y_n)|}{\lVert X_n-Y_n\rVert_{\infty}} =  \lim\limits_{n\rightarrow\infty} \frac{(\sqrt{2}-1)n^{-\frac{1}{2}}}{1/n} = (\sqrt{2}-1)\lim\limits_{n\rightarrow\infty}\sqrt{n} = \infty. 
    \]
    Hence, no Lipschitz-constant exists, and therefore, $\rrm$ is not globally Lipschitz-continuous.
\end{example}

By exploiting the connection to monetary risk measures, we derive dual representations for RRMs. Dual representations play a central role for portfolio optimization with risk measures, see e.g.,~the considerations in~\cite{ruszczynskiShapiro2006} and~\cite{Mastrogiacomo2015}. Also they have a central role in proving many remarkable results concerning monetary risk measures. For instance, see the proofs of~\cite[Theorem 3.3]{AraratFeinstein2024Separability} and~\cite[Theorem 3.5]{Liebrich2019RiskSharing}. Dual representations can also be the starting point to define risk measures, see e.g.,~the approach in~\cite[Section 4]{Liebrich2017ModelSpaces} or the definition of systemic optimal allocations via dual representations in~\cite{Biagini2020SystemicRisk}. However, only a few dual representations for RRMs are available and all rely on domains of random variables, see e.g.,~\cite{laudage2025MultiAssetRRM}. Therefore, we extend these results to the interior of the positive cone of an AM-algebra as domain.

To do so, if $V$ is a Riesz space and $E\subseteq V$, then $E^{\disjointComplement} := \{x\in V\mid \forall y\in E: |x|\wedge|y| = 0\}$ is the \textbf{disjoint complement} of $E$. The order dual and the order continuous dual of $V$ are denoted by  $V^{\sim}$ and~$V_n^{\sim}$, respectively; see~\cite[Section 8.10]{AliprantisBorder} for additional information on these terms. For brevity, the linear space of singular functionals is denoted by $V_s^{\sim} := (V_n^{\sim})^{\disjointComplement}$. For a \textcolor{blue}{normed Riesz space} $V$, we define $V_{+}^{\prime}:=\{\dualElementClassic\in V^{\prime}\mid\forall x\in V_+: \dualElementClassic(x)\geq 0\}$. 

Let $V$ be an AM-algebra with unit. Our dual representations rely on the following geometric transformations of the sets $(V_n^{\sim})_{+}$ and $(V_s^{\sim})_{+}$:
\begin{align*}
    V_{n,+}^{\text{GG,}\sim}&:=\big\{\dualElementGG_c:\interior_{\lVert.\rVert_{\infty}}(V_+)\rightarrow (0,\infty)\,\big|\, \log\circ \dualElementGG_c\circ \exp\in (V^{\sim}_n)_{+}\big\},\\
    \text{and}\quad V_{s,+}^{\text{GG,}\sim}&:=\big\{\dualElementGG_s:\interior_{\lVert.\rVert_{\infty}}(V_+)\rightarrow (0,\infty)\,\big|\, \log\circ \dualElementGG_s\circ \exp\in (V^{\sim}_s)_{+}\big\}. 
\end{align*}
Note that a functional $l$ contained in one of those sets is positively homogeneous and satisfies $l(x^{a}) = l(x)^{a}$ for all $x\in\interior_{\lVert.\rVert_{\infty}}(V_+)$ and $a\in(0,\infty)$, i.e.,~$l$ is geometrically linear.

\begin{theorem}\label{thm:dualRepresentation}
    Let $V$ be an infinite-dimensional AM-algebra with unit $\unit$ 
    and assume a proper map $\rrm:\interior_{\lVert.\rVert_{\infty}}(V_+)\rightarrow (0,\infty]$. Then, $\rrm$ is a GG-convex RRM if and only if it admits the representation 
    \begin{align}\label{eq:dualRRM_interiorCone}
        \rrm(x) = \max\limits_{\dualElementGG\in L}\frac{\dualElementGG(x)}{\tilde{a}\big(\dualElementGG\big)}, \quad x\in \interior_{\lVert.\rVert_{\infty}}(V_+),
    \end{align}
    where 
    \begin{align}\label{eq:setOptimizationDualRepresentation}
        L = \Bigg\{\dualElementGG_c\cdot \dualElementGG_s\,\Bigg|\, \dualElementGG_c\in V_{n,+}^{\text{GG,}\sim}, \dualElementGG_s\in V_{s,+}^{\text{GG,}\sim},\sup\limits_{\substack{x\geq 1\\ \lVert \log(x)\rVert_{\infty}=1}} \Big(\dualElementGG_c(x)\dualElementGG_s(x)\Big) = e\Bigg\},
    \end{align}

    \noindent and $\tilde{a}:L\rightarrow (0,\infty]$ is given by
    \begin{align}\label{eq:penaltyFunction}
        \tilde{a}\big(\dualElementGG\big) = \sup\limits_{x\in \interior_{\lVert.\rVert_{\infty}}(V_+)}\frac{\dualElementGG(x)}{\rrm(x)} = \sup\limits_{x\in \mathcal{B}_{\rrm}}\dualElementGG(x), \quad \dualElementGG\in L.
    \end{align}
\end{theorem}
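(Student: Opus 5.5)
The plan is to transfer everything to the monetary side via Proposition~\ref{prop:relationToClassicalRM} and use the classical dual representation of convex risk measures on an AM-space with unit (equivalently on $C(K)$ by Krein--Kakutani). Set $\riskM=\log\circ\rrm\circ\exp:V\to(-\infty,\infty]$. By Proposition~\ref{prop:relationToClassicalRM}~(i),(ii),(iv), $\rrm$ is a proper GG-convex RRM iff $\riskM$ is a proper convex monetary risk measure for $\unit$. By Theorem~\ref{thm:LipschitzContinuity}, $\riskM$ is then finite and norm-continuous. For a finite convex norm-continuous monetary risk measure on the Banach lattice $V$, the Fenchel--Moreau theorem gives
\[
\riskM(x)=\max_{\dualElementClassic\in V'_+,\ \dualElementClassic(\unit)=1}\big(\dualElementClassic(x)-\alpha(\dualElementClassic)\big),\qquad \alpha(\dualElementClassic)=\sup_{x\in\mathcal{A}_{\riskM}}\dualElementClassic(x).
\]
Here monotonicity forces positivity of $\dualElementClassic$, cash-additivity forces $\dualElementClassic(\unit)=1$, and the supremum is attained because the subdifferential at any point of continuity is nonempty (norm continuity together with Banach--Alaoglu, so the max is attained).

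Next I decompose the dual elements. For a Banach lattice, $V'=V^{\sim}$ and, by the band decomposition $V^{\sim}=V^{\sim}_n\oplus V^{\sim}_s$, every $\dualElementClassic\in V'_+$ splits uniquely as $\dualElementClassic=\dualElementClassic_c+\dualElementClassic_s$ with $\dualElementClassic_c\in(V_n^{\sim})_+$ and $\dualElementClassic_s\in(V_s^{\sim})_+$. For an AM-space with unit $\unit$ one has $\lVert\dualElementClassic\rVert=\dualElementClassic(\unit)=\sup\{\dualElementClassic(x)\mid x\ge0,\ \lVert x\rVert_\unit=1\}$ for positive $\dualElementClassic$. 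I then define $\dualElementGG_c=\exp\circ\dualElementClassic_c\circ\log$ and $\dualElementGG_s=\exp\circ\dualElementClassic_s\circ\log$, so that $\dualElementGG=\dualElementGG_c\dualElementGG_s=\exp\circ\dualElementClassic\circ\log$. Translating the normalisation $\dualElementClassic(\unit)=1$, i.e.\ $\sup_{y\ge0,\lVert y\rVert=1}\dualElementClassic(y)=1$, with $x=\exp(y)$, gives $x\ge\unit$, $\lVert\log x\rVert_\infty=1$ and $\sup\dualElementGG_c(x)\dualElementGG_s(x)=e$, which is exactly the constraint in~\eqref{eq:setOptimizationDualRepresentation}. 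Conversely, any element of $L$ yields a positive functional with $\dualElementClassic(\unit)=1$, because the supremum over positive norm-one elements of a positive functional is attained in value at $\unit$. So $L$ is in bijection with the dual set above. The infinite-dimensionality assumption seems to be needed only so that the singular part is a genuinely nontrivial ingredient; I will verify that the argument does not otherwise depend on it.

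Exponentiating the representation of $\riskM$ gives $\rrm(x)=\max_{\dualElementGG\in L}\dualElementGG(x)/e^{\alpha(\dualElementClassic)}$. Writing $\tilde a=e^{\alpha}$, we get
\[
\tilde a(\dualElementGG)=\sup_{x\in\mathcal{A}_\riskM}\exp(\dualElementClassic(x))=\sup_{x\in\mathcal{B}_{\rrm}}\dualElementGG(x),
\]
since $\exp(\mathcal{A}_\riskM)=\mathcal{B}_\rrm$. The alternative form $\sup_x\dualElementGG(x)/\rrm(x)$ is the exponential of the usual conjugate $\sup_x(\dualElementClassic(x)-\riskM(x))$, which equals $\alpha(\dualElementClassic)$ by cash-additivity. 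For the converse direction, suppose $\rrm$ has the representation~\eqref{eq:dualRRM_interiorCone}. Then $\log\rrm(\exp(\cdot))$ is a supremum of affine maps $\dualElementClassic-\alpha(\dualElementClassic)$ with $\dualElementClassic$ positive and $\dualElementClassic(\unit)=1$. This supremum is convex, monotone and cash-additive, so Proposition~\ref{prop:relationToClassicalRM} returns a GG-convex RRM.

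The main obstacle I expect is making the dual-set identification rigorous. This requires that $\exp\circ\dualElementClassic\circ\log$ is the right object on $\interior(V_+)$ (well-defined by Lemma~\ref{lem:interiorPositiveCone}), and that the supremum constraint over $\{x\ge\unit,\lVert\log x\rVert_\infty=1\}$ really equals $e^{\dualElementClassic(\unit)}$. It also requires checking that attainment of the maximum survives the transformation, which it does by monotonicity of $\exp$.
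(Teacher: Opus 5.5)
Your proposal is correct and follows essentially the paper's proof: pass to $\riskM=\log\circ\rrm\circ\exp$, use the Lipschitz continuity from Theorem~\ref{thm:LipschitzContinuity}, represent $\riskM$ over $\{\dualElementClassic\in V'_+\mid\dualElementClassic(\unit)=1\}$ with penalty $\sigma_{\mathcal{A}_{\riskM}}$, get attainment from a subgradient, split $V'_+=(V_n^{\sim})_+\oplus(V_s^{\sim})_+$, and translate $\dualElementClassic(\unit)=1$ into the $e$-normalisation.

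The differences are minor. You use Fenchel--Moreau for the norm-continuous $\riskM$, whereas the paper cites Kountzakis' theorem for non-reflexive spaces. For the converse you use the elementary fact that a supremum of positive, $\unit$-normalised affine maps is a convex risk measure. You also read off $\dualElementClassic(\unit)=\sup\{\dualElementClassic(x)\mid x\geq 0,\ \normAM{x}=1\}$ directly instead of going through $\lVert\dualElementClassic^+\rVert-\lVert\dualElementClassic^-\rVert$. As you suspected, this shows the infinite-dimensionality assumption is not needed: in finite dimensions $V_s^{\sim}=\{0\}$, so $\dualElementGG_s\equiv 1$.
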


Before proving this result, let us discuss its implications. Equation~\eqref{eq:dualRRM_interiorCone} shows that we have a multiplicative connection between $l$ and $\tilde{a}$. In contrast, for monetary risk measures this is a summation. Moreover, the set of dual elements $L$ in~\eqref{eq:setOptimizationDualRepresentation} only contains elements satisfying the normalization condition of $\sup\limits_{\substack{x\geq 1,\\ \lVert \log(x)\rVert_{\infty}=1}} \Big(\dualElementGG_c(x)\dualElementGG_s(x)\Big) = e$. This highlights the impact of the exponential transform by leading to normalization with respect to $e$, instead of $1$ as it is the case for monetary risk measures.

The proof now relies on $V^{\prime} = V^{\sim}$. Therefore, let us collect some facts about the dual space $V^{\prime}$ of an AM-space $V$, which implicitly also occurs in the definition of the set $L$ in Theorem~\ref{thm:dualRepresentation}, because $V^{\prime} = (V^{\sim}_n)\oplus(V^{\sim}_s)$. First, $V^{\prime}$ is an AL-space, see~\cite[Theorem 9.27]{AliprantisBorder}. Second, $V^{\prime}$ is lattice isometric to an $L^{1}(\mu)$-space for some measure $\mu$, see~\cite[Theorem 12.26]{AliprantisBurkinshaw}. Finally, $V^{\prime} = (V^{\prime\prime})^{\sim}_n$, i.e.,~$V^{\prime}$ is equal to the order continuous dual of $V^{\prime\prime}$, see~\cite[Theorem 9.34]{AliprantisBorder}.

\begin{proof}[Proof of Theorem~\ref{thm:dualRepresentation}]
    We first prove the ``only if'' part. By Proposition~\ref{prop:relationToClassicalRM} and the proof of Theorem~\ref{thm:LipschitzContinuity}, we have that $\riskM=\exp\circ\rrm\circ\log$ is a proper, convex and Lipschitz-continuous risk measure. Hence, $\riskM$ is also $\sigma(V,V^{\prime})$-(lower semi-)continuous. By~\cite[Theorem 9.38]{AliprantisBorder} and $V$ being infinite-dimensional, we obtain that $V$ is non-reflexive. By $V$ being an AM-algebra it is also a Banach space. Concluding, $V$ is a non-reflexive Banach space, which allows for the application of~\cite[Theorem 4.2]{Kountzakis2011RM_nonReflexiveBanachSpaces}, from which we obtain the dual representation
    \begin{align}\label{eq:proofDualRepresentation2}
        \riskM(x) = \sup_{\dualElementClassic\in B_{\unit}}(\dualElementClassic(x)-\riskM^{\star}(\dualElementClassic)), 
    \end{align}
    where $\riskM^{\star}(\dualElementClassic) = \sup\limits_{x\in V}\big(\dualElementClassic(x)-\rho(x)\big)$ for $\dualElementClassic\in V^{\prime}$ is the Fenchel-Legendre transform of $\rho$ and $B_{\unit} = \left\{\dualElementClassic\in V^{\prime}_{+}\mid \hat{\unit}(\dualElementClassic) = 1\right\}$ with $\hat{\unit}$ being the embedding of $\unit$ in $V^{\prime\prime}$, i.e.,~$\hat{\unit}(\dualElementClassic) = \dualElementClassic(\unit)$ for all $\dualElementClassic\in V^{\prime}$. Next, note that cash-additivity of $\riskM$ implies 
    \[
        \riskM(x) = \inf\{m\in \R\,|\,x-m\unit\in \mathcal{A}_{\riskM}\},\quad x\in V.
    \]
    Using a standard argumentation\footnote{The ansatz to show that the Fenchel-Legendre conjugate $\riskM^{\star}$ of the risk measure $\riskM$ and the (upper) support function~$\sigma_{\mathcal{A}_{\riskM}}$ of the acceptance set $\mathcal{A}_{\rho}$ are equal can be found in the proof of~\cite[Proposition 3.9]{FrittelliScandolo}.} we obtain for all $\dualElementClassic\in V^{\prime}$ that 
    \begin{align*}
        \riskM^{\star}(\dualElementClassic) &= \sup\limits_{x\in V}\Big(\dualElementClassic(x)-\inf\{m\in\R\,|\, x-m\unit\in \mathcal{A}_{\riskM}\}\Big)\\
        &= \sup_{y\in \mathcal{A}_{\rho}} \dualElementClassic(y) +\sup_{m\in\R}\big(\dualElementClassic(\unit)-1\big)m\\
        &= \sigma_{\mathcal{A}_{\riskM}}(\dualElementClassic) +\infty\cdot (1- \mathbf{1}_{\{1\}}\big(\dualElementClassic(\unit)\big)\big).
    \end{align*}
    In particular, if $\dualElementClassic\in B_{\unit}$, then $\riskM^{\star}(\dualElementClassic) = \sigma_{\mathcal{A}_{\riskM}}(\dualElementClassic)<\infty$. Now, we show that the maximum is attained. To do so, let $x\in V$. By $\rho$ being $\sigma(V,V^{\prime})$-continuous,~\cite[Theorem 7.12]{AliprantisBorder} states that there exists $\dualElementClassic^{\star}\in V^{\prime}$ such that 
    \[
        \forall y\in V:\dualElementClassic^{\star}(x)-\riskM(x)\geq \dualElementClassic^{\star}(y)-\riskM(y)
    \]
    and consequently, we obtain $\rho(x) = \dualElementClassic^{\star}(x) -\riskM^{\star}(\dualElementClassic^{\star})$. $\dualElementClassic^{\star}\in B_{\unit}$ follows by a standard argumentation, for which we refer to the proof of~\cite[Theorem 4.16]{FoeSch}. 
    
    Then, by putting the dual representation~\eqref{eq:proofDualRepresentation2} into $\rrm = \log\circ\riskM\circ\exp$ and rewrite it, we obtain that
    \begin{align}\label{eq:proofDualRepresentation1}
        \rrm(x) = \sup\limits_{\dualElementClassic\in B_{\unit}}\frac{(\exp\circ \dualElementClassic\circ\log)(x)}{(\exp\circ \riskM^{\star})(\dualElementClassic)}. 
    \end{align}

    Then we rewrite the defining condition $\hat{\unit}(\dualElementClassic) = 1$ of the set $B_{\unit}$. To do so, we obtain by~\cite[Theorem 9.31]{AliprantisBorder} that $\hat{\unit}$ refers to the unit of the original AM-space $V$. Hence, we can apply~\cite[Theorem 9.30]{AliprantisBorder}, which gives us $\hat{\unit}(\dualElementClassic) = \lVert \dualElementClassic^+\rVert_{V^{\prime}} - \lVert \dualElementClassic^-\rVert_{V^{\prime}}$, where 
    \[
        \lVert \dualElementClassic\rVert_{V^{\prime}} = \sup\limits_{\substack{x\in V\\ \lVert x\rVert = 1}}|\dualElementClassic(x)|,\quad \dualElementClassic\in V^{\prime},
    \]  
    denotes the dual norm. First, we show that $\lVert \dualElementClassic^-\rVert_{V^{\prime}} = 0$. To do so, by applying the Riesz-Kantorovi\v{c} formula, see~\cite[Theorem 1.13]{AliprantisBurkinshaw}, we obtain for all $x\geq 0$ that
    \[
        \dualElementClassic^-(x) = \sup_{0\leq y\leq x}\big(-\dualElementClassic(y)\big)\leq 0,
    \]
    where the inequality follows from $\dualElementClassic\in V^{\prime}_+$. Hence, $\dualElementClassic^-|_{V_+} = 0$. So, for $x\in V$ it holds that 
    \[
        \dualElementClassic^-(x) = \dualElementClassic^-(x^{+}) - \dualElementClassic^{-}(x^{-}) = 0-0 = 0.
    \]
    Thus, $\lVert \dualElementClassic^{-}\rVert_{V^{\prime}} = 0$. Finally, by $\dualElementClassic^{+}$ being a positive operator between Riesz spaces, we obtain that $|\dualElementClassic(x)|\leq \dualElementClassic(|x|)$ holds for all $x\in V$. Hence,
    \[
        \sup\limits_{\substack{x\in V\\ \lVert x\rVert = 1}}|\dualElementClassic^{+}(x)| \leq \sup\limits_{\substack{x\in V\\ \lVert x\rVert = 1}}\dualElementClassic^{+}(|x|) = \sup\limits_{\substack{x\geq 0\\ \lVert x\rVert = 1}}\dualElementClassic^{+}(x)=\sup\limits_{\substack{x\geq 0\\ \lVert x\rVert = 1}}\dualElementClassic(x). 
    \]

    Next, by $V$ being a Fr\'{e}chet lattice (under its norm topology) and by~\cite[Theorems 8.28 and 9.11]{AliprantisBorder} we obtain that 
    \[
        V^{\prime} = V^{\sim} =  V^{\sim}_n\oplus(V^{\sim}_n)^{\disjointComplement} = V^{\sim}_n\oplus V^{\sim}_s.
    \]
    Then, $V^{\prime}_+ = (V^{\sim}_n)_+\oplus(V^{\sim}_s)_+$ follows by Lemma~\ref{lem:bandDecomposition}. Hence, 
    \[
        B_{\unit} = \left\{\dualElementClassic_c+\dualElementClassic_s\,\middle|\, \dualElementClassic_c\in (V^{\sim}_n)_{+},\dualElementClassic_s\in (V^{\sim}_s)_{+},\sup\limits_{\substack{x\geq 0\\ \lVert x\rVert_{\infty}=1}} \Big(\dualElementClassic_c(x) + \dualElementClassic_s(x)\Big) = 1\right\}. 
    \]
    The statement follows by rewriting~\eqref{eq:proofDualRepresentation1} and $\sigma_{\mathcal{A}_{\riskM}}$ based on the transforms $\dualElementGG = \exp\circ \dualElementClassic \circ \log$ and $\tilde{a} = \exp\circ \riskM^{\star}\circ\log$ as well as by setting $L = \Big\{\exp\circ \dualElementClassic\circ\log\, \Big|\, \dualElementClassic\in B_{\unit}\Big\}$. 

    To prove the ``if'' part, assume that the representation~\eqref{eq:dualRRM_interiorCone} holds. Then, by the transform $\riskM=\exp\circ\rrm\circ\log$ and applying again~\cite[Theorem 4.2]{Kountzakis2011RM_nonReflexiveBanachSpaces} we obtain that $\riskM$ is a convex risk measure and hence, by Proposition~\ref{prop:relationToClassicalRM} we obtain that $\rrm$ is a GG-convex RRM. 
\end{proof}

\begin{remark}
    The proof of Theorem~\ref{thm:dualRepresentation} relies on the dual representation for monetary risk measures on non-reflexive Banach spaces in~\cite{Kountzakis2011RM_nonReflexiveBanachSpaces}, which does not ensure that the supremum is attained. We established this by following the hint in~\cite[Remark~18]{Farkas2015MultipleEligible}. We could also use other dual representations, provided the domain covers AM-spaces with unit, e.g.,~\cite[Theorem 3]{Farkas2015MultipleEligible}. However, their framework includes multiple eligible assets, which makes the treatment more involved. We also do not apply~\cite[Theorem 1]{BiaginiFrittelli}, since $\riskM$ is not assumed to be cash-additive, so the condition $\hat{\unit}(l) =1$ does not appear in their dual representation. Moreover, the results in~\cite[Section 3]{Konstantinides2011RM_nonEmptyConeInterior} are not applicable, as an AM-space cannot be both reflexive and infinite dimensional, see~\cite[Theorem 9.38]{AliprantisBorder}.
\end{remark}

For illustration, let us specify the order duals from Theorem~\ref{thm:dualRepresentation} for the case of $V = L^{\infty}(\Omega;\R^n)$. For this, we define $\ba_{\mathbb{P}} := \{\mu\in\ba\,|\, \mu\ll\mathbb{P}\}$, where $\ba$ denotes the linear space of all finitely additive measures on $(\Omega,\mathcal{F})$ whose total variation is finite. 

\begin{example}\label{exam:YosidaHewittDecomposition}
    If $V= L^{\infty}(\Omega;\R^n)$, then $V^{\prime} = \big(\ba_{\mathbb{P}}\big)^n$, and by the Yosida-Hewitt decomposition of~$\ba_{\mathbb{P}}$, see~\cite{YosidaHewitt1952}, we obtain that $V^{\prime} = L^1(\Omega;\R^n)\oplus L^1(\Omega;\R^n)^{\disjointComplement}$. Note, $L^1(\Omega;\R^n)^{\disjointComplement}$
    refers to the set of purely additive measures absolutely continuous with respect to $\mathbb{P}$. The fact that the linear spaces in the Yosida-Hewitt decomposition are disjoint complements allows us to apply Lemma~\ref{lem:bandDecomposition} to obtain $V_+^{\prime}=L^1(\Omega;\R^n)_+\oplus (L^1(\Omega;\R^n)^{\disjointComplement})_+$.

    The connection to the dual representation of monetary risk measures in~\cite[Theorem 4.16]{FoeSch} becomes apparent by noting that, for each $l\in L$ there exists $\mu\in \big(\ba_{\mathbb{P}}\big)^n$ with $\sum_{i=1}^n \mu_i(\Omega) = 1$ and $l(x) = \exp\Big(\sum_{i=1}^n \int \log(x_i)\diff \mu_i\Big)$ and hence, 
    \[
        \tilde{a}(l) = \sup\limits_{x\in \mathcal{B}_{\rrm}}\exp\Big(\sum_{i=1}^n \int \log(x_i)\diff \mu_i\Big).
    \]
\end{example}

The next natural step is to determine conditions under which singular functionals, i.e.,~$\dualElementGG_s\in V_{s,+}^{\text{GG,}\sim}$, no longer appear in the dual representation. This is possible under further continuity assumptions on the RRM. However, the price to pay is that the supremum in the resulting dual representation does not need to be attained. 

\begin{lemma}\label{lem:weakLowerSemicontinuity}
    Let $V$ be an AM-space with unit. The map $\rrm:\interior_{\lVert.\rVert_{\infty}}(V_+)\rightarrow (0,\infty]$ is an RRM and~$\riskM = \log\circ\rrm\circ \exp$. The following statements are equivalent:
    \begin{enumerate}
        \item[(a)] $\riskM$ is $\sigma(V,V_n^{\sim})$-lower semicontinuous;
        \item[(b)] $\rrm$ is $\exp\big(\sigma(V,V_n^{\sim})\big)$-lower semicontinuous.
    \end{enumerate}
\end{lemma}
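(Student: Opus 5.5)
This equivalence is an instance of Proposition~\ref{prop:relationToClassicalRM}~(iii), applied with $\tau = \sigma(V,V_n^{\sim})$. It is essentially a transfer of lower semicontinuity along the homeomorphism $\exp$ between $(V,\tau)$ and $(\interior_{\lVert.\rVert_{\infty}}(V_+),\exp(\tau))$. I first check that the hypotheses used there are met. Proposition~\ref{prop:relationToClassicalRM} is stated for AM-algebras, but its part~(iii) only uses the construction of $\exp$ and $\log$ through the Krein--Kakutani lattice isometry $\Phi:V\to C(K)$. By Lemma~\ref{lem:interiorPositiveCone}, this construction is available in any AM-space with unit, and so is the bijection $\Phi^{-1}\circ\exp\circ\Phi: V\to\interior_{\lVert.\rVert_{\infty}}(V_+)$ with inverse $\Phi^{-1}\circ\log\circ\Phi$. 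The multiplicative structure enters only in part~(i), which is not needed here. The weak topology $\sigma(V,V_n^{\sim})$ is a topology on $V$; it need not be Hausdorff, but Hausdorffness is not required. Lemma~\ref{lem:pushforwardTopology}~(i) then makes $\exp(\sigma(V,V_n^{\sim}))$ a topology on $\interior_{\lVert.\rVert_{\infty}}(V_+)$.

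Next I argue directly with nets, as in the proof of Proposition~\ref{prop:relationToClassicalRM}~(iii). For (b)$\Rightarrow$(a), take a net $x_\alpha\to x$ in $\sigma(V,V_n^{\sim})$. By Lemma~\ref{lem:pushforwardTopology}~(ii), $\exp(x_\alpha)\to\exp(x)$ in $\exp(\sigma(V,V_n^{\sim}))$. Lower semicontinuity of $\rrm$ then gives
\[
\liminf_\alpha \rrm(\exp(x_\alpha))\ge\rrm(\exp(x)).
\]
Since $\log$ is continuous and increasing on $(0,\infty]$, with the convention $\log\infty=\infty$, it commutes with $\liminf$. This yields $\liminf_\alpha\riskM(x_\alpha)\ge\riskM(x)$. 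For (a)$\Rightarrow$(b), take a net $y_\alpha\to y$ in the pushforward topology. Since $\exp$ is a bijection onto $\interior_{\lVert.\rVert_{\infty}}(V_+)$ and maps open sets to open sets by definition of $\exp(\tau)$, the map $\log$ is continuous from $(\interior_{\lVert.\rVert_{\infty}}(V_+),\exp(\tau))$ to $(V,\tau)$. Hence $\log(y_\alpha)\to\log(y)$ in $\sigma(V,V_n^{\sim})$. Applying (a) and the monotone continuity of $\exp$ on $(-\infty,\infty]$ gives the claim, using $\rrm=\exp\circ\riskM\circ\log$ by Lemma~\ref{lem:interiorPositiveCone}~(iii).

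Alternatively, one can use the sublevel-set characterization. For $c\in(0,\infty)$,
\[
\{\rrm\le c\}=\exp\big(\{\riskM\le\log c\}\big),
\]
and $\exp$ maps $\tau$-closed sets onto $\exp(\tau)$-closed sets, since it is a bijection whose image topology is exactly $\exp(\tau)$. The main point requiring care is that $\exp$ is indeed a homeomorphism for the pushforward topology, i.e.\ that Lemma~\ref{lem:pushforwardTopology} applies to a merely linear (possibly non-metrizable, non-Hausdorff) topology, together with the handling of the value $+\infty$. I expect no genuine obstacle beyond these bookkeeping points.
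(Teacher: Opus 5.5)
Your proposal is correct and is essentially the paper's argument. The paper uses your ``alternative'' sublevel-set route: $\{\riskM\le c\}$ is $\sigma(V,V_n^{\sim})$-closed iff $\{\rrm\le e^{c}\}$ is $\exp(\sigma(V,V_n^{\sim}))$-closed, and it explicitly notes that the sublevel sets of $\rrm$ for levels $\tilde c\le 0$ are empty (hence closed) because $\rrm>0$, a case your sketch leaves implicit. Your primary net argument is the same transfer along the bijection $\exp$, phrased as in the proof of Proposition~\ref{prop:relationToClassicalRM}~(iii); your justification of the converse direction via continuity of $\log$ for the pushforward topology, and your remark that only the AM-space structure is needed, are both sound.
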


\begin{proof}
    $\riskM$ being $\sigma\big(V,V_n^{\sim}\big)$-lower semicontinuous is equivalent to $\{x\in V\mid\rho(x)\leq c\}$ being $\sigma\big(V,V_n^{\sim}\big)$-closed for all $c\in\mathbb{R}$. By $\riskM = \log\circ\rrm\circ \exp$, this is equivalent to $\big\{x\in \interior_{\lVert.\rVert_{\infty}}(V_+)\,\big|\, \rrm(x)\leq \tilde{c}\big\}$ being $\exp\big(\sigma(V,V_n^{\sim})\big)$-closed for all $\tilde{c}\in(0,\infty)$. By $\rrm>0$,  for all $\tilde{c}\in(-\infty,0]$ we have 
    \[
        \big\{x\in \interior_{\lVert.\rVert_{\infty}}(V_+)\,\big|\, \rrm(x)\leq \tilde{c}\big\} = \emptyset,
    \]
    and hence, those sets are $\exp\big(\sigma(V,V_n^{\sim})\big)$-closed. This proves the claim.
\end{proof}

\begin{remark}
    In suitable settings, the $\sigma(V,V_n^{\sim})$-lower semicontinuity of a monetary risk measure $\riskM$ is equivalent to the Fatou property, see e.g.,~\cite[Theorem 4.33]{FoeSch} for the case of $V=L^{\infty}$. As mentioned in~\cite{GaoXanthos}, for Banach lattices, one needs to adjust the definition of the Fatou property by substituting almost sure convergence by unbounded order convergence. For an introduction to unbounded order convergence, we refer to~\cite{Gao2014,GaoXanthos2014}.
\end{remark}

For brevity, we keep the proof of the final dual representation as short as possible. 

\begin{corollary}\label{cor:dualRepresentation}
    Let $V$ be an AM-algebra with unit $\unit$ and let $\rrm:\interior_{\lVert.\rVert_{\infty}}(V_+)\rightarrow (0,\infty]$ be a proper, GG-convex and $\exp\big(\sigma(V,V^{\sim}_n)\big)$-lower semicontinuous map. Then, $\rrm$ admits the dual representation 
    \begin{align*}\label{eq:dualRRM_interiorCone2}
        \rrm(x) = \sup\limits_{\dualElementGG\in L}\frac{\dualElementGG(x)}{\tilde{a}\big(\dualElementGG\big)}, \quad x\in \interior_{\lVert.\rVert_{\infty}}(V_+),
    \end{align*}
    where 
    \[
        L = \Bigg\{\dualElementGG\in V_{n,+}^{\text{GG,}\sim}\,\Bigg|\, \sup\limits_{\substack{x\geq 1\\ \lVert \log(x)\rVert_{\infty}=1}} \dualElementGG(x) = e\Bigg\},
    \]
    and $\tilde{a}:L\rightarrow (-\infty,\infty]$ is given as in~\eqref{eq:penaltyFunction}.
\end{corollary}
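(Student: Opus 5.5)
We transfer the problem to the monetary side, apply a Fenchel--Moreau type duality on the dual pair $\langle V,V_n^{\sim}\rangle$, and transform back. Let $\riskM=\log\circ\rrm\circ\exp:V\to(-\infty,\infty]$. Since the statement does not list monotonicity and positive homogeneity among the hypotheses, we read $\rrm$ as a GG-convex RRM, as in Theorem~\ref{thm:dualRepresentation}; this is what makes $\riskM$ cash-additive. Then Proposition~\ref{prop:relationToClassicalRM} shows that $\riskM$ is a proper, convex, monotone and cash-additive risk measure. Lemma~\ref{lem:weakLowerSemicontinuity} shows that $\riskM$ is $\sigma(V,V_n^{\sim})$-lower semicontinuous.

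\textbf{Step 1: the dual pair.} We first check that $\langle V,V_n^{\sim}\rangle$ is a dual pair, that is, that $V_n^{\sim}$ separates the points of $V$. We expect this to be the main obstacle. For a general AM-space with unit, $V_n^{\sim}$ may be trivial; for example, $C[0,1]$ has no nonzero order continuous functionals. Our plan is to argue as follows. If $V_n^{\sim}$ fails to separate points, then every $\sigma(V,V_n^{\sim})$-closed convex set is invariant under translation by the nonzero subspace $N=(V_n^{\sim})^{\perp}$. The level sets of $\riskM$ are $\sigma(V,V_n^{\sim})$-closed, so $\riskM$ is constant along $N$. Hence $\riskM$ factors through the quotient $V/N$, and the duality argument below can be run there. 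Equivalently, we may simply work with the locally convex topology $\sigma(V,V_n^{\sim})$, whose continuous dual is exactly $V_n^{\sim}$; Hahn--Banach separation still applies to the epigraph in $V\times\R$. Either way, the biconjugate theorem \cite[Theorem 5.99 / 7.x]{AliprantisBorder} gives
\[
\riskM(x)=\sup_{\dualElementClassic\in V_n^{\sim}}\big(\dualElementClassic(x)-\riskM^{\star}(\dualElementClassic)\big).
\]
Propriety is needed here to exclude $\riskM\equiv-\infty$ or $+\infty$.

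\textbf{Step 2: restricting the functionals.} Next we reduce the supremum to positive functionals with $\dualElementClassic(\unit)=1$. Monotonicity gives $\riskM^{\star}(\dualElementClassic)=\infty$ unless $\dualElementClassic\ge0$: take $y\le0$ with $\dualElementClassic(y)<0$ and send $ty\to$ along $t\to-\infty$ scaled appropriately. Cash-additivity gives $\riskM^{\star}(\dualElementClassic)=\infty$ unless $\dualElementClassic(\unit)=1$, by the same computation as in the proof of Theorem~\ref{thm:dualRepresentation}. On the remaining functionals, $\riskM^{\star}=\sigma_{\mathcal{A}_\riskM}$. For $\dualElementClassic\ge0$ we reuse the AM-space norm identity \cite[Theorem 9.30]{AliprantisBorder}. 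This gives $\dualElementClassic(\unit)=\lVert\dualElementClassic\rVert=\sup_{x\ge0,\lVert x\rVert=1}\dualElementClassic(x)$.

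\textbf{Step 3: transforming back.} Finally we set $\dualElementGG=\exp\circ\dualElementClassic\circ\log$, so that $\dualElementGG\in V_{n,+}^{\text{GG,}\sim}$. Under the bijection $\log$ of Lemma~\ref{lem:interiorPositiveCone}, the condition $x\ge0$, $\lVert x\rVert_\infty=1$ becomes $x\ge1$, $\lVert\log x\rVert_\infty=1$. The normalization therefore becomes $\sup\dualElementGG(x)=e$. Exponentiating the dual formula turns differences into quotients, exactly as in \eqref{eq:proofDualRepresentation1}. The penalty becomes $\exp\circ\sigma_{\mathcal{A}_\riskM}$, which equals $\sup_{x\in\mathcal{B}_{\rrm}}\dualElementGG(x)$, and also $\sup_x \dualElementGG(x)/\rrm(x)$ via $\riskM^{\star}$. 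This is \eqref{eq:penaltyFunction}. No attainment is claimed, so no maximizer argument is needed.
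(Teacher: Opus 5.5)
Your proposal is correct and follows the paper's own route: pass to $\riskM=\log\circ\rrm\circ\exp$, apply Fenchel--Moreau for the pairing of $V$ with $V_n^{\sim}$, restrict to positive functionals with $\dualElementClassic(\unit)=1$ as in the proof of Theorem~\ref{thm:dualRepresentation}, and transform back. Your explicit reading of $\rrm$ as an RRM, and your handling of a possibly non-separating $V_n^{\sim}$ via the non-Hausdorff topology $\sigma(V,V_n^{\sim})$ or the quotient by $(V_n^{\sim})^{\perp}$, make precise what the paper's two-line proof leaves tacit. One sign slip in Step~2 needs fixing: take $z\ge 0$ with $\dualElementClassic(z)<0$ and use $\riskM(x_0-tz)\le\riskM(x_0)$ together with $\dualElementClassic(x_0-tz)\to\infty$ as $t\to\infty$ to conclude $\riskM^{\star}(\dualElementClassic)=\infty$.
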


\begin{proof}
    Use the transform $\riskM = \log\circ\rrm\circ\exp$, the imposed assumptions and Lemma~\ref{lem:weakLowerSemicontinuity} to obtain that  $\riskM$ is finite-valued, convex and $\sigma(V,V^{\sim}_n)$-lower semicontinuous. Thus, by applying the Fenchel-Moreau theorem~\cite[Theorem 2.3.3]{Zalinescu} and using analogous arguments as in the proof of Theorem~\ref{thm:dualRepresentation}, the desired claim follows.  
\end{proof}

\section{Systemic return risk measures}\label{sec:multivariateRRM}

Now, we represent RRMs for multivariate losses as compositions of RRMs for one-dimensional losses  and aggregation functions. Our decompositions are inspired by the theory of systemic risk measures. For simplicity, we impose the following assumption throughout the rest of this section.
\begin{assumption}
    Let $I\subseteq (0,\infty)$ be an interval. Moreover, let $V = L^{\infty}(\Omega;\R^n)$ and $E\subseteq L^{\infty}_{++}$ being a GG-convex cone with $(0,\infty)\subseteq E$, i.e.,~$E$ contains all positive constants. The set 
    \[
        E_I:=\{X\in E\mid X(\omega)\in I\ \omega\text{-a.s.}\}
    \]
    is assumed to be a GG-convex cone.
\end{assumption}

\begin{remark}
    The important cases for $I$ are $(0,\infty)$ and $[1,\infty)$. If $I=(0,\infty)$, then $E_I = E$. If $I =[1,\infty)$, then $E_I$ contains all random variables $X\in E$ with $X\geq 1$ a.s., i.e.,~$E_I$ refers to random variables leading to pure logarithmic losses ($\log(X)\geq 0$ a.s.). 
\end{remark}

Note that also $E^n$ is a GG-convex cone. For our result, we adapt the definition of systemic risk measures to the framework of RRMs.
\begin{definition}\label{def:systemicRRM}
    An increasing map $\rrm:E^n\rightarrow (0,\infty)$ is a \textbf{systemic RRM}, if it satisfies the following properties:
    \begin{enumerate}
        \item[(1)] $\rrm\big((0,\infty)^n\big) = I$ \textit{($I$-surjectivity on constants)};
        \item[(2)] $\rrm|_{(0,\infty)^n}(E^n) = E_I$;
        \item[(3)] For all $\mathbf{X},\mathbf{Y}\in E^n$ with $\rrm(\mathbf{X}(\omega))\leq \rrm(\mathbf{Y}(\omega))$ $\omega$-a.s.~it holds that $\rrm(\mathbf{X})\leq \rrm(\mathbf{Y})$ \textit{(preference consistency)}.
    \end{enumerate}
    A systemic RRM $\rrm$ is a \textbf{GG-convex systemic RRM}, if it is GG-convex and satisfies the following:
    \begin{enumerate}
        \item[(4)] For all $\mathbf{X},\mathbf{Y},\mathbf{Z}\in E^n$ and $\lambda\in(0,1)$ with $\rrm(\mathbf{Z}(\omega)) =  \rrm(\mathbf{X}(\omega))^{\lambda}\rrm(\mathbf{Y}(\omega))^{1-\lambda}$ $\omega$-a.s.~it holds that $\rrm(\mathbf{Z})\leq \rrm(\mathbf{X})^{\lambda}\rrm(\mathbf{Y})^{1-\lambda}$ \textit{(risk GG-convexity)}. 
    \end{enumerate} 
    If $\rrm$ is a positively homogeneous GG-convex systemic RRM, then we call it a \textbf{GG-coherent systemic RRM}.\footnote{In contrast to classical works, we omit the assumption of being normalized. This is not problematic, as pointed out by~\cite{chen2013SystemicRiskMeasures}, normalization ``[\dots] \textit{is simply a convenient choice of scaling and is imposed without loss of generality.}''} 
\end{definition}

The properties of being increasing, GG-convex and positively homogeneous are the standard conditions of RRMs, which we also used in the previous sections. The surjectivity properties (1) and (2) ensure surjectivity of functions which we create out of $\rrm$ in the proof of Theorem~\ref{thm:representationSystemicRRM} below. Properties (3) and (4) are the most crucial ones, but they are essential for our result as well. They are based on comparing the risks of the constant losses for specific events $\omega$. Analogous assumptions are standard in the literature on systemic risk measures and they were first proposed in~\cite{chen2013SystemicRiskMeasures}. After presenting Theorem~\ref{thm:representationSystemicRRM}, we demonstrate situations in which either condition (3) or condition (4) fails and discuss corresponding consequences in Examples~\ref{exam:notPreferenceConsistent} and~\ref{exam:notRiskGGconvex}, thereby addressing an important gap in the literature. 

Next, we introduce the definition of a return aggregation function.
\begin{definition}
    An increasing and surjective map $\aggregation:(0,\infty)^n\rightarrow I$ with $\aggregation(E^n) = E_I$ is called a \textbf{return aggregation function}. If $\aggregation$ is in addition GG-convex, then we call it a \textbf{GG-convex return aggregation function}. If $\aggregation$ is a positively homogeneous GG-convex return aggregation function, then we call it a \textbf{GG-coherent return aggregation function}.
\end{definition}

We are now able to represent a systemic RRM as the composition of a single map with domain~$E$ and a return aggregation function. 
\begin{theorem}\label{thm:representationSystemicRRM}
    A map $\rrm:E^n\rightarrow (0,\infty)$ is a (GG-convex, resp.~GG-coherent) systemic RRM if and only if there exist maps $\rrm_0:E_I\rightarrow (0,\infty)$ and $\aggregation:(0,\infty)^n\rightarrow I$ such that 
    \[ 
        \rrm = \rrm_0 \circ\aggregation
    \]
    and for which the following hold:
    \begin{enumerate}
        \item[(i)] $\aggregation$ is a (GG-convex, resp.~GG-coherent) return aggregation function; 
        \item[(ii)] $\rrm_0$ is increasing (and GG-convex, resp.~GG-convex and positively homogeneous) and satisfies the constancy property on $I$, i.e.,~$\rrm_0(x) = x$ for all $x\in I$.
    \end{enumerate}
\end{theorem}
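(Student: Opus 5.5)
The plan follows the Chen--Iyengar--Moallemi decomposition argument, transported to the multiplicative setting. A map on $(0,\infty)^n$ acts on random vectors pointwise, so $\aggregation(\mathbf{X})$ denotes $\omega\mapsto\aggregation(\mathbf{X}(\omega))$.

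For the ``only if'' direction, define the aggregation function as the restriction of $\rrm$ to constants, $\aggregation:=\rrm|_{(0,\infty)^n}$. Properties (1) and (2) of Definition~\ref{def:systemicRRM} give exactly that $\aggregation$ is surjective onto $I$ and that $\aggregation(E^n)=E_I$. Since $\rrm$ is increasing, so is $\aggregation$, and GG-convexity and positive homogeneity of $\rrm$ restrict directly to $\aggregation$. Next define $\rrm_0:E_I\to(0,\infty)$ by
\[
\rrm_0(X):=\rrm(\mathbf{X}),\qquad \mathbf{X}\in E^n \text{ with } \aggregation(\mathbf{X})=X \text{ a.s.}
\]
Such an $\mathbf{X}$ exists by (2). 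The value does not depend on the choice of $\mathbf{X}$: if $\aggregation(\mathbf{X})=\aggregation(\mathbf{Y})$ a.s., then $\rrm(\mathbf{X}(\omega))\le\rrm(\mathbf{Y}(\omega))$ a.s. and conversely, so preference consistency (3) gives $\rrm(\mathbf{X})=\rrm(\mathbf{Y})$. By construction $\rrm=\rrm_0\circ\aggregation$.

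Monotonicity of $\rrm_0$ also follows from (3). If $X\le Y$ in $E_I$, pick preimages $\mathbf{X},\mathbf{Y}$; then $\rrm(\mathbf{X}(\omega))=X(\omega)\le Y(\omega)=\rrm(\mathbf{Y}(\omega))$ a.s., hence $\rrm_0(X)\le\rrm_0(Y)$. For the constancy property, take $x\in I$ and a constant $c\in(0,\infty)^n$ with $\aggregation(c)=x$, available by (1). Then $\rrm_0(x)=\rrm(c)=x$.

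For GG-convexity of $\rrm_0$, take $X,Y\in E_I$ with preimages $\mathbf{X},\mathbf{Y}$. Let $\mathbf{Z}$ be a preimage of $X^\lambda Y^{1-\lambda}$, which lies in $E_I$ because $E_I$ is GG-convex. Then $\rrm(\mathbf{Z}(\omega))=\rrm(\mathbf{X}(\omega))^\lambda\rrm(\mathbf{Y}(\omega))^{1-\lambda}$ a.s., so risk GG-convexity (4) yields $\rrm_0(X^\lambda Y^{1-\lambda})\le\rrm_0(X)^\lambda\rrm_0(Y)^{1-\lambda}$. For positive homogeneity in the GG-coherent case, take $X\in E_I$ and $t>0$. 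If $\mathbf{X}$ is a preimage of $X$, then $t\mathbf{X}\in E^n$ because $E$ is a cone, and $\aggregation(t\mathbf{X})=tX$ by homogeneity of $\aggregation$. Hence $\rrm_0(tX)=\rrm(t\mathbf{X})=t\rrm(\mathbf{X})=t\rrm_0(X)$.

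For the ``if'' direction, suppose $\rrm=\rrm_0\circ\aggregation$. Monotonicity of $\rrm$ is composition of increasing maps. On constants, $\rrm(c)=\rrm_0(\aggregation(c))=\aggregation(c)$ by the constancy property, which gives (1) and (2) from surjectivity of $\aggregation$ and $\aggregation(E^n)=E_I$. Preference consistency (3) holds because $\rrm(\mathbf{X}(\omega))=\aggregation(\mathbf{X}(\omega))$: the a.s. inequality gives $\aggregation(\mathbf{X})\le\aggregation(\mathbf{Y})$ a.s., and monotonicity of $\rrm_0$ finishes. Risk GG-convexity (4) follows the same way, since the hypothesis becomes $\aggregation(\mathbf{Z})=\aggregation(\mathbf{X})^\lambda\aggregation(\mathbf{Y})^{1-\lambda}$ a.s. and GG-convexity of $\rrm_0$ applies. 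GG-convexity of $\rrm$ comes from pointwise GG-convexity of $\aggregation$, giving $\aggregation(\mathbf{X}^\lambda\mathbf{Y}^{1-\lambda})\le\aggregation(\mathbf{X})^\lambda\aggregation(\mathbf{Y})^{1-\lambda}$, followed by monotonicity and GG-convexity of $\rrm_0$. Positive homogeneity of $\rrm$ is composition of homogeneous maps.

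The main obstacle is well-definedness and measurability of the pointwise compositions. One must check that $\aggregation(\mathbf{X})$ is a measurable random variable lying in $E_I$; this is encoded in the assumption $\aggregation(E^n)=E_I$, i.e. property (2). One must also check that the preimage $\mathbf{Z}$ used for GG-convexity exists in $E^n$, which is handled by GG-convexity of $E_I$ together with (2).
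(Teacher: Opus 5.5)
Your proposal is correct and follows the paper's proof essentially step for step. You take $\aggregation$ as the restriction of $\rrm$ to constants, define $\rrm_0$ via preimages (well-definedness and monotonicity from preference consistency, GG-convexity from risk GG-convexity), and in the converse use the constancy property to identify $\rrm|_{(0,\infty)^n}$ with $\aggregation$. Your explicit checks that $X^\lambda Y^{1-\lambda}\in E_I$ by GG-convexity of $E_I$, and that $t\mathbf{X}\in E^n$ because $E$ is a cone, fill in small points the paper leaves implicit.
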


\begin{proof}
    To prove the ``only if'' part, assume that $\rrm$ is a systemic RRM. Then, we set
    \[
        \aggregation(x) := \rrm(x),\quad x\in (0,\infty)^n.
    \]
    We have to show that $\aggregation$ is a return aggregation function. By $\rrm$ being increasing, we obtain that~$\aggregation$ is increasing. Then, $I$-surjectivity on constants of $\rrm$ gives us that
    \[
        \aggregation\big((0,\infty)^n\big) = \rrm\big((0,\infty)^n\big) = I,
    \]
    i.e.,~$\aggregation:(0,\infty)^n\rightarrow I$ is surjective. By $\rrm|_{(0,\infty)^n}(E^n) = E_I$ we have
    \[
        \aggregation(E^n) = \rrm|_{(0,\infty)^n}(E^n) = E_I. 
    \]
    In total, $\aggregation$ is a return aggregation function. Now, assume that $\rrm$ is GG-convex and let $x,y\in(0,\infty)$ and~$\lambda\in(0,1)$. Then, we get
    \[
        \aggregation\big(x^{\lambda}y^{1-\lambda}\big) = \rrm\big(x^{\lambda}y^{1-\lambda}\big)\leq \rrm(x)^{\lambda}\rrm(y)^{1-\lambda} = \aggregation(x)^{\lambda}\aggregation(y)^{1-\lambda}. 
    \]
    Thus, $\aggregation$ is GG-convex. If $\rrm$ is positively homogeneous, then positive homogeneity of $\aggregation$ follows analogously.

    Next, we define the map $\rrm_0:\aggregation(E^n) = E_I\rightarrow(0,\infty)$ for each $X\in E_I$ as
    \begin{align}\label{eq:proofSystemicDefinitionSingleRM}
        \rrm_0(X):= \rrm(\mathbf{X}),
    \end{align}
    where $\mathbf{X}\in E^n$ with $\aggregation(\mathbf{X}) = X$ almost surely. First, we need to prove that this definition is well-defined. To do so, choose an arbitrary $X\in E_I$ and let $\mathbf{X},\mathbf{Y}\in E^n$ with 
    \[
        X = \aggregation(\mathbf{X}) = \aggregation(\mathbf{Y})\quad \text{a.s.}
    \]
    Then, there exists $N\in\mathcal{F}$ with $\P(N) = 0$ such that 
    \[
        \rrm\big(\mathbf{X}(\omega)\big) = \aggregation(\mathbf{X})(\omega) = \aggregation(\mathbf{Y})(\omega) = \rrm\big(\mathbf{Y}(\omega)\big),\quad \omega \in\Omega\setminus N.
    \]
    Then, preference consistency of $\rrm$ implies that $\rrm(\mathbf{X}) = \rrm(\mathbf{Y})$. This shows that $\rrm_0$ is well-defined. Next, we prove that $\rrm_0$ satisfies the corresponding properties. To show that $\rrm_0$ is increasing, let $X,Y\in E_I$ with $X\leq Y$ a.s.~and $\mathbf{X},\mathbf{Y}\in E^n$ with $X = \aggregation(\mathbf{X})$, $Y = \aggregation(\mathbf{Y})$ almost surely. Then, let 
    \[
        N = \{X>Y\}\cup\{X\neq \aggregation(\mathbf{X})\}\cup\{Y\neq \aggregation(\mathbf{Y})\}. 
    \]
    For all $\omega\in \Omega\setminus N$ we have
    \[
        \rrm\big(\mathbf{X}(\omega)\big) = \aggregation(\mathbf{X})(\omega) = X(\omega)\leq Y(\omega) = \aggregation(\mathbf{Y})(\omega) = \rrm\big(\mathbf{Y}(\omega)\big). 
    \]
    Thus, by preference consistency of $\rrm$, we obtain that
    \[
        \rrm_0(X) = \rrm(\mathbf{X})\leq \rrm(\mathbf{Y}) = \rrm_0(Y). 
    \]
    Now, let $x\in I$ and $\bar{x}\in(0,\infty)^n$ such that $\aggregation(\bar{x})= x$, where the existence of such an $\bar{x}$ follows by $\aggregation$ being surjective. Then, we obtain
    \[
        \rrm_{0}(x) = \rrm(\bar{x}) = \aggregation(\bar{x}) = x,
    \]
    i.e.,~$\rrm_0$ satisfies the constancy property on $I$. 
    
    Now, assume $\rrm$ being risk GG-convex\footnote{Note, to prove the ``only if'' part regarding $\rrm_0$, we do not make use of the GG-convexity of $\rrm$.} and choose $X,Y\in E_I$ and $\lambda \in (0,1)$. Moreover, set~$Z = X^{\lambda}Y^{1-\lambda}$ and choose $\mathbf{X},\mathbf{Y},\mathbf{Z}\in E^n$ such that
    \[
        X = \aggregation(\mathbf{X}),\quad Y = \aggregation(\mathbf{Y}),\quad Z = \aggregation(\mathbf{Z})\quad \text{a.s.}
    \]
    Then, there exists a null set $N\in \mathcal{F}$ such that for all $\omega\in\Omega\setminus N$ it holds that
    \begin{align*}
        \rrm\big(\mathbf{Z}(\omega)\big) = \aggregation(\mathbf{Z})(\omega) &= Z(\omega)\\
        &= \big(X(\omega)\big)^{\lambda} \big(Y(\omega)\big)^{1-\lambda}\\
        &= \big(\aggregation(\mathbf{X})(\omega)\big)^{\lambda}\big(\aggregation(\mathbf{Y})(\omega)\big)^{1-\lambda}\\
        &= \rrm\big(\mathbf{X}(\omega)\big)^{\lambda} \rrm\big(\mathbf{Y}(\omega)\big)^{1-\lambda}.
    \end{align*}
    So, by risk GG-convexity of $\rrm$ it holds that
    \[
        \rrm_0(Z) = \rrm(\mathbf{Z}) \leq \rrm(\mathbf{X})^{\lambda}\rrm(\mathbf{Y})^{1-\lambda} = \rrm_0(X)^{\lambda}\rrm_0(Y)^{1-\lambda},
    \]
    and hence, $\rrm_0$ is GG-convex. Finally, assume that $\rrm$ satisfies positive homogeneity. Let $X\in E_I$ and $\lambda\in(0,\infty)$. Then, by positive homogeneity of $\aggregation$, we obtain $\lambda X = \aggregation(\lambda\mathbf{X})$ for all $\mathbf{X}\in E^n$ with~$\aggregation(\mathbf{X}) = X$. For such a random vector $\mathbf{X}$, we obtain
    \[
        \rrm_0(\lambda X) = \rrm(\lambda\mathbf{X}) = \lambda\rrm(\mathbf{X}) = \lambda\rrm_0(X),
    \]
    where we used the positive homogeneity of $\rrm$ in the second equality. Concluding, $\rrm_0$ satisfies all of the desired properties and by definition of $\rrm_0$, we have $\rrm(\mathbf{X}) = (\rrm_0\circ \aggregation)(\mathbf{X})$ for all $\mathbf{X}\in E^n$.\newline

    To prove the ``if'' part, we now assume a return aggregation function $\aggregation$ and a map $\rrm_0$, satisfying the properties in (ii), such that $\rrm = \rrm_0\circ \aggregation$. First, note that $\rrm$ is increasing as composition of two increasing functions. Moreover, $\rrm$ satisfies
    \[
        \rrm\big((0,\infty)^n\big) = \rrm_0\Big(\aggregation\big((0,\infty)^n\big)\Big) = \rrm_0\big(I\big) = I,
    \]
    where we used surjectivity of $\aggregation$ to obtain the second equality and the constancy property on $I$ of~$\rrm_0$ to obtain the third equality. Hence, $\rrm$ is $I$-surjective on constants. Then, it also holds that 
    \[
        \rrm|_{(0,\infty)^n}(E^n) = \big(\rrm_0\circ\aggregation\big)|_{(0,\infty)^n}(E^n) = \rrm_0|_{I}\big(\aggregation(E^n)\big) =\rrm_0|_{I}(E_I) = E_I, 
    \]
    where we used the constancy property on $I$ of $\rrm_0$ in the last equation. To show the preference consistency, let $\mathbf{X},\mathbf{Y}\in E^n$ and let $N\in \mathcal{F}$ be a null set such that for all $\omega\in \Omega\setminus N$ it holds
    \[
        \rrm\big(\mathbf{X}(\omega)\big)\leq \rrm\big(\mathbf{Y}(\omega)\big).
    \]
    The constancy property on $I$ of $\rrm_0$ then gives us for all $\omega\in \Omega\setminus N$ that
    \begin{align*}
        \aggregation(\mathbf{X})(\omega) = \rrm_0\big(\aggregation(\mathbf{X})(\omega)\big) = \rrm\big(\mathbf{X}(\omega)\big)\leq \rrm\big(\mathbf{Y}(\omega)\big) =\rrm_0\big(\aggregation(\mathbf{Y})(\omega)\big) = \aggregation(\mathbf{Y})(\omega).
    \end{align*}
    This means that $\aggregation(\mathbf{X})\leq \aggregation(\mathbf{Y})$ a.s.~and together with $\rho_0$ being increasing, we obtain
    \[
        \rrm(\mathbf{X}) = \rrm_0\big(\aggregation(\mathbf{X})\big)\leq\rrm_0\big(\aggregation(\mathbf{Y})\big) = \rrm(\mathbf{Y}).  
    \]
    Therefore, $\rrm$ is preference consistent.
    Now, assume $\aggregation$ and $\rrm_0$ to be GG-convex and let $\mathbf{X},\mathbf{Y}\in E^n$ and $\lambda\in(0,1)$. This implies that $\aggregation(\mathbf{X}^{\lambda}\mathbf{Y}^{1-\lambda})\leq \big(\aggregation(\mathbf{X})\big)^{\lambda}\big(\aggregation(\mathbf{Y})\big)^{1-\lambda}$. Hence, monotonicity and GG-convexity of $\rrm_0$ implies that
    \[
        \rrm(\mathbf{X}^\lambda \mathbf{Y}^{1-\lambda})\leq \rrm_0\big(\aggregation(\mathbf{X})^{\lambda}\aggregation(\mathbf{Y})^{1-\lambda}\big)\leq\big((\rrm_0\circ\aggregation)(\mathbf{X})\big)^{\lambda}\big((\rrm_0\circ\aggregation)(\mathbf{Y})\big)^{1-\lambda} =  \rrm(\mathbf{X})^{\lambda}\rrm(\mathbf{Y})^{1-\lambda}.
    \]
    This shows that $\rrm$ is GG-convex. To prove risk GG-convexity of $\rrm$, let $\mathbf{X},\mathbf{Y},\mathbf{Z}\in E^n$, $\lambda\in (0,1)$ and~$N\in\mathcal{F}$ with $\P(N) = 0$ and 
    \[
        \rrm\big(\mathbf{Z}(\omega)\big) = \rrm\big(\mathbf{X}(\omega)\big)^{\lambda} \rrm\big(\mathbf{Y}(\omega)\big)^{1-\lambda},\quad \omega\in\Omega\setminus N.
    \]
    By the constancy property on $I$ of $\rrm_0$, this implies that 
    \[
        \aggregation(\mathbf{Z}) = \aggregation(\mathbf{X})^{\lambda}\aggregation(\mathbf{Y})^{1-\lambda}\quad \text{a.s.}
    \]
    Hence, GG-convexity of $\rrm_0$ gives us that
    \begin{align*}
        \rrm(\mathbf{Z}) = (\rrm_0\circ\aggregation)(\mathbf{Z}) &=  \rrm_0\big(\aggregation(\mathbf{X})^{\lambda}\aggregation(\mathbf{Y})^{1-\lambda}\big)\leq \rrm_0\big(\aggregation(\mathbf{X})\big)^{\lambda}\rrm_0\big(\aggregation(\mathbf{Y})\big)^{1-\lambda}= \rrm(\mathbf{X})^{\lambda}\rrm(\mathbf{Y})^{1-\lambda}.
    \end{align*}
    Thus, $\rrm$ is risk GG-convex. Finally, if $\aggregation$ and $\rrm_0$ are positively homogeneous, then for~$\mathbf{X}\in E^n$ and $\lambda\in(0,\infty)$ we obtain that
    \[
        \rrm(\lambda\mathbf{X}) = \rrm_0\big(\lambda\aggregation(\mathbf{X})\big) = \lambda(\rrm_0\circ\aggregation)(\mathbf{X}) = \lambda\rrm(\mathbf{X}).
    \]
    Hence, $\rrm$ satisfies positive homogeneity.
\end{proof} 

\begin{remark}
    Theorem~\ref{thm:representationSystemicRRM} is the RRM counterpart of similar decompositions for classical systemic risk measures. The starting point is~\cite[Theorem 1]{chen2013SystemicRiskMeasures}, where the domain of the systemic risk measure is finite-dimensional and both convexity and positive homogeneity are imposed. Then,~\cite[Theorem 1 and Corollary 1]{kromer2016} are the extensions of this result to systemic risk measures defined on locally convex-solid Riesz subspaces of the space of all measurable functions on a measurable space $(\Omega,\mathcal{F})$.
\end{remark}

Theorem~\ref{thm:representationSystemicRRM} allows us to effectively construct systemic RRMs, by choosing an RRM on $E_I$ and a return aggregation function. The next result shows that classical aggregation functions can be used to obtain return aggregation functions, thereby providing a wide range on possibilities to design return aggregation functions. 

\begin{corollary}\label{cor:constructionAggregation}
    Let $\Lambda:\R^n\rightarrow\log(I)$ be an increasing, surjective (and convex) function such that~$\Lambda(\log(E^n)) = \log(E_I)$. Then,
    \[
        \aggregation:(0,\infty)^n \rightarrow I,\ x\mapsto (\exp\circ\Lambda\circ\log)(x)
    \]
    is a (GG-convex) return aggregation function.
\end{corollary}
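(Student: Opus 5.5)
The plan is to verify the defining properties of a (GG-convex) return aggregation function one at a time for $\aggregation=\exp\circ\Lambda\circ\log$. Throughout, $\exp$ and $\log$ act componentwise on $(0,\infty)^n$ and pointwise ($\omega$-wise) on random vectors. On $\R^n$, $\log:(0,\infty)^n\to\R^n$ and $\exp:\R\to(0,\infty)$ are increasing bijections, so $\aggregation$ is well defined with values in $\exp(\log(I))=I$.

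\textbf{Step 1 (monotonicity).} If $x\leq y$ in $(0,\infty)^n$, then $\log(x)\leq\log(y)$ componentwise. Hence $\Lambda(\log(x))\leq\Lambda(\log(y))$, and applying $\exp$ gives $\aggregation(x)\leq\aggregation(y)$.

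\textbf{Step 2 (surjectivity onto $I$).} Given $t\in I$, surjectivity of $\Lambda$ onto $\log(I)$ yields some $z\in\R^n$ with $\Lambda(z)=\log(t)$. Then $x:=\exp(z)\in(0,\infty)^n$ satisfies $\aggregation(x)=t$.

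\textbf{Step 3 ($\aggregation(E^n)=E_I$).} Applied pointwise to $\mathbf{X}\in E^n$, we have $\aggregation(\mathbf{X})=\exp(\Lambda(\log(\mathbf{X})))$. Since $\log$ is a bijection from $E^n$ onto $\log(E^n)$ and from $E_I$ onto $\log(E_I)$, the assumption $\Lambda(\log(E^n))=\log(E_I)$ gives
\[
\aggregation(E^n)=\exp\big(\Lambda(\log(E^n))\big)=\exp(\log(E_I))=E_I .
\]
This is the step I expect to need the most care. One has to read $\log(E^n)$ as the pointwise image $\{\log(\mathbf{X})\mid \mathbf{X}\in E^n\}$ and $\Lambda(\cdot)$ as pointwise application. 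One also has to note that $\log(\mathbf{X})$ is measurable and that everything holds up to null sets, so that $\exp\circ\log$ is the identity on $E_I$ almost surely. Once these conventions are fixed, the identity is pure bookkeeping.

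\textbf{Step 4 (GG-convexity, if $\Lambda$ is convex).} For $x,y\in(0,\infty)^n$ and $\lambda\in(0,1)$, we have $\log(x^\lambda y^{1-\lambda})=\lambda\log(x)+(1-\lambda)\log(y)$. Convexity of $\Lambda$ then gives $\Lambda(\cdot)\leq\lambda\Lambda(\log x)+(1-\lambda)\Lambda(\log y)$. Applying the increasing map $\exp$ yields $\aggregation(x^\lambda y^{1-\lambda})\leq\aggregation(x)^\lambda\aggregation(y)^{1-\lambda}$. This is the same computation as in Proposition~\ref{prop:relationToClassicalRM}~(ii), specialized to $V=\R^n$; alternatively one can invoke that proposition's argument directly.
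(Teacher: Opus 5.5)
Your proof is correct and follows the paper's argument essentially step for step. You get monotonicity from composing increasing maps, and surjectivity from $\aggregation((0,\infty)^n)=\exp(\Lambda(\R^n))=\exp(\log(I))=I$. You get the image identity from $\aggregation(E^n)=\exp(\Lambda(\log(E^n)))=\exp(\log(E_I))=E_I$, and GG-convexity from convexity of $\Lambda$ after passing to logarithms. Your remarks on applying $\log$ and $\Lambda$ pointwise to random vectors, up to null sets, only make explicit a convention the paper uses implicitly.
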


\begin{proof}
    The map $\aggregation$ as composition of three increasing functions is increasing. Moreover, 
    \[
        \aggregation\big((0,\infty)^n\big) = \exp\big(\Lambda(\R^n)\big) = \exp(\log(I)) = I
    \]
    and
    \[
        \aggregation(E^n) = \exp\bigg(\Lambda\big(\log(E^n)\big)\bigg) = \exp(\log(E_I)) = E_I. 
    \]
    Hence, $\aggregation$ is a return aggregation function. If $\Lambda$ is convex, then for all $x,y\in(0,\infty)^n$ and $\lambda\in (0,1)$ it holds that
    \begin{align*}
        \aggregation(x^\lambda y^{1-\lambda}) &= \exp\Big(\Lambda\big(\lambda\log(x)+(1-\lambda)\log(y)\big)\Big)\\
        &\leq \exp\Big(\lambda\Lambda\big(\log(x)\big)\Big)\exp\Big((1-\lambda)\Lambda\big(\log(y)\big)\Big) = \aggregation(x)^{\lambda}\aggregation(y)^{1-\lambda},  
    \end{align*}
    and hence, $\aggregation$ is GG-convex.
\end{proof}

The proof of Theorem~\ref{thm:representationSystemicRRM} heavily relies on the untypical conditions of preference consistency and risk GG-convexity. To obtain a sophisticated picture, we illustrate situations in which GG-convex RRMs do not satisfy either preference consistency or risk GG-convexity. We start with a GG-convex RRM, which does not satisfy preference consistency. Such an example is absent even from the seminal paper of~\cite{chen2013SystemicRiskMeasures}, thereby filling an important gap in the literature.
\begin{example}\label{exam:notPreferenceConsistent}
    Assume $I=(0,\infty)$ and $\rrm:(L_{++}^{\infty})^2\rightarrow(0,\infty)$, defined by $\rrm(\mathbf{X}) = \sqrt{\E[X_1]\E[X_2]}$. Obviously, this map is increasing and positively homogeneous. In addition, for all $x\in(0,\infty)$ it holds that  $\rrm((x,x)^{\intercal}) = x$, i.e.,~$\rrm\big((0,\infty)^2\big) = (0,\infty)$, and for every $X\in L_{++}^{\infty}$ we have $\rrm|_{(0,\infty)^2}((X,X)^{\intercal}) = X$, i.e.,~$\rrm|_{(0,\infty)^2}\big((L_{++}^{\infty})^{2}\big)=L_{++}^{\infty}$. Thus, properties (1) and (2) in Definition~\ref{def:systemicRRM} are satisfied. Next, for $\mathbf{X},\mathbf{Y}\in (L_{++}^{\infty})^{2}$ and $\lambda\in(0,1)$ it holds by Hölder's inequality that $\rrm\big(\mathbf{X}^{\lambda}\mathbf{Y}^{1-\lambda}\big)\leq \rrm(\mathbf{X})^{\lambda}\rrm(\mathbf{Y})^{1-\lambda}$. Thus, $\rrm$ is a GG-convex RRM. However, $\rrm$ does not satisfy preference consistency. Indeed, let~$\mathbf{X}$ satisfy $\mathbb{P}\big(\mathbf{X} = (10,0.1)^{\intercal}\big)=\mathbb{P}\big(\mathbf{X} = (0.1,10)^{\intercal}\big)=0.5$ and let $\mathbf{Y} = (1,1)^{\intercal}$. Then, for almost every~$\omega\in\Omega$ it holds that 
    \[
        \rrm(\mathbf{X}(\omega)) = \sqrt{10\cdot 0.1} = 1 = \sqrt{1\cdot 1} =  \rrm(\mathbf{Y}(\omega)),
    \]
    but $\rrm(\mathbf{X}) = \sqrt{5.05\cdot 5.05} = 5.05 > 1 = \rrm(\mathbf{Y})$, showing that $\rrm$ is not preference consistent. The latter property is used in the proof of Theorem~\ref{thm:representationSystemicRRM} to ensure that $\rho_0$ defined in~\eqref{eq:proofSystemicDefinitionSingleRM} is indeed well-defined. 

    In this example, losses are not aggregated before the information contained in the random variables is reduced to real values. Instead, the order is reversed: the reduction is performed first via the expectation operator, yielding scalar quantities, which are then aggregated by taking their product. Consequently, preference consistency necessitates an accumulation step through a return aggregation function $\aggregation$ prior to reducing the information contained in the random variables via $\rrm_0$.
\end{example}

Our next goal is to show that every GG-convex and positively homogeneous RRM satisfies risk GG-convexity. This expands the observation for systemic risk measures in~\cite{chen2013SystemicRiskMeasures}, namely that convexity and cash-additivity is sufficient to obtain risk convexity. The latter is the analogue of risk GG-convexity for monetary risk measures.
\begin{proposition}\label{prop:riskGGconvex}
    Let $\rrm:E^n\rightarrow (0,\infty)$ be a systemic RRM, satisfying GG-convexity and positive homogeneity. Then, $\rrm$ satisfies risk GG-convexity.
\end{proposition}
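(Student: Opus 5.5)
The plan is to adapt the rescaling argument of \cite{chen2013SystemicRiskMeasures} for risk convexity. Positive homogeneity, applied pointwise to constant vectors, lets us replace $\mathbf{X}$ and $\mathbf{Y}$ by random rescalings of $\mathbf{Z}$ that have the same scenario-wise risks. Preference consistency then shows these rescalings carry the same $\rrm$-value as $\mathbf{X}$ and $\mathbf{Y}$, and GG-convexity closes the argument.

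\textbf{Setup.} Fix $\mathbf{X},\mathbf{Y},\mathbf{Z}\in E^n$ and $\lambda\in(0,1)$ with $\rrm(\mathbf{Z}(\omega))=\rrm(\mathbf{X}(\omega))^{\lambda}\rrm(\mathbf{Y}(\omega))^{1-\lambda}$ a.s. Write
\[
A(\omega):=\rrm(\mathbf{X}(\omega)),\quad B(\omega):=\rrm(\mathbf{Y}(\omega)),\quad C(\omega):=\rrm(\mathbf{Z}(\omega)).
\]
By property (2) of Definition~\ref{def:systemicRRM}, $A,B,C\in E_I\subseteq L^\infty_{++}$ and $C=A^{\lambda}B^{1-\lambda}$ a.s.

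\textbf{Construction.} I would define
\[
\mathbf{X}':=\frac{A}{C}\,\mathbf{Z}=\Big(\frac{A}{B}\Big)^{1-\lambda}\mathbf{Z},\qquad \mathbf{Y}':=\frac{B}{C}\,\mathbf{Z}=\Big(\frac{B}{A}\Big)^{\lambda}\mathbf{Z},
\]
with the random scalars multiplied componentwise.

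\textbf{Key steps.}
\begin{enumerate}
\item Positive homogeneity of $\rrm$ on constant vectors gives, for a.e.\ $\omega$,
\[
\rrm(\mathbf{X}'(\omega))=\frac{A(\omega)}{C(\omega)}\,\rrm(\mathbf{Z}(\omega))=A(\omega)=\rrm(\mathbf{X}(\omega)).
\]
Likewise $\rrm(\mathbf{Y}'(\omega))=\rrm(\mathbf{Y}(\omega))$.
\item Preference consistency, applied in both directions, yields $\rrm(\mathbf{X}')=\rrm(\mathbf{X})$ and $\rrm(\mathbf{Y}')=\rrm(\mathbf{Y})$.
\item A direct computation gives
\[
\mathbf{X}'^{\lambda}\mathbf{Y}'^{1-\lambda}=\frac{A^{\lambda}B^{1-\lambda}}{C}\,\mathbf{Z}=\mathbf{Z}.
\]
\item GG-convexity of $\rrm$ then gives
\[
\rrm(\mathbf{Z})\le \rrm(\mathbf{X}')^{\lambda}\rrm(\mathbf{Y}')^{1-\lambda}=\rrm(\mathbf{X})^{\lambda}\rrm(\mathbf{Y})^{1-\lambda},
\]
which is risk GG-convexity.
\end{enumerate}

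\textbf{Main obstacle.} The delicate point is that steps 2 and 4 require $\mathbf{X}',\mathbf{Y}'\in E^n$. The set $E$ is only assumed to be a GG-convex cone containing the constants, so multiplying by the random factor $(A/B)^{1-\lambda}$ must be justified.

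I would first try to write $\mathbf{X}'$ directly as a GG-combination of admissible elements. For instance, $\mathbf{X}'$ equals $\mathbf{Z}$ times $A^{1-\lambda}B^{\lambda-1}$, where $A$ and $B$ lie in $E_I\subseteq E$. This raises the question of whether $E$ is stable under such products and inverse powers. Inverse powers are harmless whenever $E$ consists of elements bounded away from $0$, i.e.\ $E\subseteq\interior_{\lVert.\rVert_\infty}(L^\infty_+)$. In that case $A/B$ lies in $\interior_{\lVert.\rVert_\infty}(L^\infty_+)$ and the rescaled vectors remain in $E^n$ as long as $E$ is closed under multiplication by such elements; this holds, for example, for $E=\interior_{\lVert.\rVert_\infty}(L^\infty_+)$ itself.

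If this closure property cannot be derived from the standing assumption, I would state it explicitly as the implicit requirement under which the argument runs, exactly as in the classical proof, where the analogous step uses cash-additivity along the random shift $\log A-\log C$. Everything else is a routine verification.
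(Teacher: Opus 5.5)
\textbf{There is a genuine gap: step 2 and step 4 need $\mathbf{X}',\mathbf{Y}'\in E^n$, and this can fail.} Steps 1--4 are correct once that membership is known, and you rightly flag it as the obstacle. But you cannot defer it. The standing assumption only makes $E$ a GG-convex cone containing the constants, and such an $E$ need not be stable under multiplication by the random factors $A/C$ and $B/C$.

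For instance, let $n=2$ and $I=(0,\infty)$. Fix a nonconstant $W\in L^\infty$. Take $E=\{\alpha e^{tW}\mid \alpha>0,\ t\geq 0\}$ and $\rrm(\mathbf{X})=\esssup\sqrt{X_1X_2}$. This is a GG-convex, positively homogeneous systemic RRM on $E^2$. Choose $\mathbf{X}=(1,1)^{\intercal}$, $\mathbf{Y}=(e^{2W},e^{2W})^{\intercal}$, $\mathbf{Z}=(e^{2W},1)^{\intercal}$ and $\lambda=\tfrac12$. Then $A=1$, $B=e^{2W}$ and $C=e^{W}=A^{1/2}B^{1/2}$. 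However, $\mathbf{X}'=e^{-W}\mathbf{Z}=(e^{W},e^{-W})^{\intercal}\notin E^2$. So your argument proves the proposition only under an extra closure hypothesis (e.g.\ $E=\interior_{\lVert.\rVert_\infty}(L^\infty_+)$). Imposing that hypothesis weakens the statement rather than proving it.

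\textbf{The missing idea: do not rescale $\mathbf{Z}$; put the scenario-wise risks on the diagonal.} You already observed that $A,B,C\in E_I\subseteq E$. With $c:=\rrm(\mathbf{1})\in(0,\infty)$, set $\mathbf{X}'':=\frac{A}{c}\mathbf{1}$, $\mathbf{Y}'':=\frac{B}{c}\mathbf{1}$ and $\mathbf{Z}'':=\frac{C}{c}\mathbf{1}$. These lie in $E^n$ because $E$ is a cone and only a deterministic scaling is involved.

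The argument then runs as follows.
\begin{enumerate}
\item Positive homogeneity on constants gives $\rrm(\mathbf{X}''(\omega))=A(\omega)$, $\rrm(\mathbf{Y}''(\omega))=B(\omega)$ and $\rrm(\mathbf{Z}''(\omega))=C(\omega)$ a.s.
\item Preference consistency, applied in both directions, yields $\rrm(\mathbf{X}'')=\rrm(\mathbf{X})$, $\rrm(\mathbf{Y}'')=\rrm(\mathbf{Y})$ and $\rrm(\mathbf{Z}'')=\rrm(\mathbf{Z})$.
\item Since $C=A^{\lambda}B^{1-\lambda}$, one has $\mathbf{Z}''=(\mathbf{X}'')^{\lambda}(\mathbf{Y}'')^{1-\lambda}$ a.s.
\item GG-convexity of $\rrm$ then gives $\rrm(\mathbf{Z})\leq\rrm(\mathbf{X})^{\lambda}\rrm(\mathbf{Y})^{1-\lambda}$.
\end{enumerate}

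This is the paper's argument with the decomposition unfolded. The paper writes $\rrm=\rrm_0\circ\aggregation$ via Theorem~\ref{thm:representationSystemicRRM}. It uses $X=\aggregation\big(\frac{X}{c_{\mathbf{1}}}\mathbf{1}\big)$ with $c_{\mathbf{1}}=\aggregation(\mathbf{1})$, together with GG-convexity of $\rrm$ on such diagonal vectors, to show that $\rrm_0$ is GG-convex. It then derives risk GG-convexity exactly as in the ``if'' part of that theorem.
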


\begin{proof}
    By applying Theorem~\ref{thm:representationSystemicRRM}, there exist an increasing map $\rrm_0:E_I\rightarrow (0,\infty)$ satisfying the constancy property on $I$ and a return aggregation function $\aggregation:(0,\infty)^n\rightarrow I$ such that $\rrm = \rrm_0\circ \aggregation$. By recalling the proof of Theorem~\ref{thm:representationSystemicRRM}, we can assume without loss of generality that $\aggregation$ is positively homogeneous. Hence, for all $X\in E_I$, it holds that
    \[
        X = \aggregation\bigg(\frac{X}{c_{\mathbf{1}}}\cdot \mathbf{1}\bigg),\quad \text{a.s.},
    \]
    where $c_{\mathbf{1}} = \aggregation(\mathbf{1})\in I$ and $\mathbf{1} = (1, \dots, 1)^{\intercal}\in\R^n$. Together with $\rrm$ being GG-convex, we obtain for all~$X,Y\in E_I$ and $\lambda\in(0,1)$ that
    \begin{align*}
        \rrm_0\big(X^{\lambda} Y^{1-\lambda}\big) 
        = \rrm\Bigg(\bigg(\frac{X}{c_{\mathbf{1}}}\bigg)^{\lambda}\bigg(\frac{Y}{c_{\mathbf{1}}}\bigg)^{1-\lambda}\cdot \mathbf{1}\Bigg) 
        \leq \rrm\bigg(\frac{X}{c_{\mathbf{1}}}\cdot\mathbf{1}\bigg)^{\lambda}\rrm\bigg(\frac{Y}{c_{\mathbf{1}}}\cdot\mathbf{1}\bigg)^{1-\lambda}
        = \rrm_0(X)^{\lambda}\rrm_0(Y)^{1-\lambda}. 
    \end{align*}
    Hence, $\rrm_0$ is GG-convex, which enables us to prove risk GG-convexity of $\rrm$ by arguing analogously to the proof of Theorem~\ref{thm:representationSystemicRRM}. Indeed, let $\mathbf{X},\mathbf{Y},\mathbf{Z}\in E^n$, $\lambda\in (0,1)$ and $N\in\mathcal{F}$ with $\P(N) = 0$ such that for all $\omega\in\Omega\setminus N$ it holds that $\rrm\big(\mathbf{Z}(\omega)\big) = \rrm\big(\mathbf{X}(\omega)\big)^{\lambda} \rrm\big(\mathbf{Y}(\omega)\big)^{1-\lambda}$.
    By the constancy property on $I$ of $\rrm_0$, we obtain that $\aggregation(\mathbf{Z}) = \aggregation(\mathbf{X})^{\lambda}\aggregation(\mathbf{Y})^{1-\lambda}$ almost surely. Together with GG-convexity of~$\rrm_0$ it is straightforward to show that $\rrm(\mathbf{Z})\leq \rrm(\mathbf{X})^{\lambda}\rrm(\mathbf{Y})^{1-\lambda}$, i.e.,~$\rrm$ satisfies risk GG-convexity.
\end{proof}

\begin{remark}\label{rem:riskGGconvex}
    Without positive homogeneity it is not clear yet, if the remaining properties are enough to imply risk GG-convexity. Even the works of~\cite{chen2013SystemicRiskMeasures} and~\cite{kromer2016} do not give an answer to this question in the case of convex systemic risk measures (without satisfying cash-additivity). A situation in which risk GG-convexity still holds is the one of a GG-affine aggregation function $\aggregation$, i.e.,~for all $x,y\in\R^n$ and all $\lambda\in(0,1)$ it holds that $\aggregation(x^{\lambda}y^{1-\lambda}) = \aggregation(x)^{\lambda}\aggregation(y)^{1-\lambda}$, e.g.,~satisfied if $\aggregation$ is the geometric mean. If $\aggregation$ is GG-affine, then for arbitrary $X,Y\in E_I$, there exist $\mathbf{X},\mathbf{Y}\in E^n$ with $X = \aggregation(\mathbf{X})$ and $Y = \aggregation(\mathbf{Y})$. Hence, for all $\lambda\in(0,1)$ we obtain that
    \begin{align*}
        \rrm_0\big(X^{\lambda}Y^{1-\lambda}\big) = \rrm_0\big( \aggregation(\mathbf{X})^{\lambda} \aggregation(\mathbf{Y})^{1-\lambda}\big) = \rrm\big(\mathbf{X}^{\lambda} \mathbf{Y}^{1-\lambda}\big)\leq \rrm(\mathbf{X})^{\lambda}\rrm(\mathbf{Y})^{1-\lambda} = \rrm_0(X)^{\lambda}\rrm_0(Y)^{1-\lambda},
    \end{align*}
    i.e.,~$\rrm_0$ is GG-convex. Showing that $\rrm$ satisfies risk GG-convexity works  analogously to the proof of Proposition~\ref{prop:riskGGconvex}.
\end{remark}

By our Theorem~\ref{thm:representationSystemicRRM} and Remark~\ref{rem:riskGGconvex}, a possibility in which a systemic RRM $\rrm = \rrm_0\circ\aggregation$ satisfies GG-convexity but not risk GG-convexity is the one in which the underlying $\rrm_0$ is increasing, but either the constancy property on $I$ or the GG-convexity fails. Furthermore, we can assume that $\aggregation$ is GG-convex. An overview of such a situation is given in Table~\ref{tab:exampleNonRiskConvex}. The next example follows this guideline to construct a systemic RRM being GG-convex but not risk GG-convex. 

\begin{table}[htb]
\centering
\begin{tabular}{|p{5cm}|p{5cm}|p{5cm}|}
\hline
\rule{0pt}{2.5ex}
\centering $\rrm = \rrm_0\circ\aggregation:E^n\rightarrow(0,\infty)$ & \centering $\rrm_0:E_I\rightarrow(0,\infty)$ & \centering $\aggregation:(0,\infty)^n\rightarrow I$ \tabularnewline
\hline
\begin{itemize}[leftmargin=*, nosep]
    \item GG-convex
    \item Not risk GG-convex
\end{itemize} &
\begin{itemize}[leftmargin=*, nosep]
    \item Not GG-convex
    \item Increasing
    \item Constancy property on $I$
\end{itemize}
&
\begin{itemize}[leftmargin=*, nosep]
    \item GG-convex
    \item Increasing and surjective
    \item $\aggregation(E^n)=E_I$
\end{itemize}
\\
\hline
\end{tabular}
\vspace{0.5em}
\caption{A situation that would lead to a systemic RRM $\rrm$ being GG-convex but not risk GG-convex. Note that this situation atomatically implies that $\rrm$ is a systemic risk measure, i.e.,~$\rrm$ is increasing, $\rrm((0,\infty)^n)= I$ and $\rrm|_{(0,\infty)^n}(E^n) = E_I$.}
\label{tab:exampleNonRiskConvex}
\end{table}

\begin{example}\label{exam:notRiskGGconvex}
    Let $f:\R\rightarrow\R,\ x\mapsto x-e^{-x}$. This function is strictly increasing, strictly concave, twice continuously differentiable and bijective. Note that $f^{-1}$ is surjective, strictly increasing and strictly convex. Then, let $\riskM_0:L^{\infty}\rightarrow \R,\ X\mapsto f^{-1}\big(\E[f(X)]\big)$. By~\cite[Proposition 2.46]{FoeSch}, $f$ not admitting constant absolute risk aversion (CARA) implies that $\riskM$ cannot be cash-additive. Moreover, $\riskM_0$ is obviously increasing and $\riskM_0$ satisfies the constancy property, because for all $x\in\R$ it holds that $\riskM_0(x) = f^{-1}(f(x)) = x$.  

    Then, let $\Lambda:\R^2\rightarrow \R,\ x\mapsto f^{-1}(x_1+x_2)$.\footnote{Note that our example would also work with $\Lambda = f^{-1}$, i.e.,~by imposing a one-dimensional setting. Then, $\Lambda$ would not aggregate losses, because only one loss is present. Hence, only a non-linear transformation is performed. However, by using $\R^2$, we are closer to a systemic risk setup, because we are really aggregating two positions.} By $f^{-1}$ being surjective, increasing and convex, the same properties hold for $\Lambda$. Moreover, given $X\in L^{\infty}$, for $\mathbf{Y}=(f(X), 0)^{\intercal}$ it holds that $\Lambda(\mathbf{Y}) = f^{-1}(f(X)+0) = X$, i.e.,~$L^{\infty}\subseteq \Lambda\big(L^{\infty}(\Omega;\R^2)\big)$. The inverse set inclusion is obvious and hence, $\Lambda\big(L^{\infty}(\Omega;\R^2)\big) = L^{\infty}$. Next, let $\riskM:L^{\infty}(\Omega;\R^2)\rightarrow \R$ be defined by 
    \[
        \riskM(X) = \riskM_0(\Lambda(\mathbf{X})) = f^{-1}(\E[\mathbf{X}_1+\mathbf{X}_2]),\quad \mathbf{X}\in L^{\infty}(\Omega;\R^2),  
    \]
    which is convex as the composition of two convex functions, namely $f^{-1}$ and $\E[.]$. Finally, we have to show that $\riskM$ is not risk convex. By the constancy property of $\riskM_0$, it is enough to find $\mathbf{X},\mathbf{Y}\in L^{\infty}(\Omega;\R^2)$ and $\lambda\in(0,1)$ such that 
    \[
        \riskM_0\big(\lambda\Lambda(\mathbf{X})+(1-\lambda)\Lambda(\mathbf{Y})\big)>\lambda\riskM(\mathbf{X}) + (1-\lambda)\riskM(\mathbf{Y}). 
    \]
    To do so, assume that the underlying probability space is atomless, which ensures the existence of a random variable $Z\in L^{\infty}$ with $\P(Z=1)=\P(Z=-1) = \frac{1}{2}$. Then, set $\mathbf{X} = (y_0+Z,0)^{\intercal}$ and $\mathbf{Y} = (y_0-Z,0)^{\intercal}$, where $y_0\in \R$ is the unique value such that $f^{-1}(y_0) = 0$. Note that $\riskM(\mathbf{X}) = \riskM(\mathbf{Y}) = f^{-1}(y_0) = 0$. Then, by strict convexity of $f^{-1}$ and by $f$ and $f^{-1}$ being strictly increasing, for every $\lambda\in(0,1)$ we obtain that
    \begin{align*}
        \riskM_0\big(\lambda\Lambda(\mathbf{X})+ (1-\lambda)\Lambda(\mathbf{Y})\big) &= f^{-1}\Big(\E\Big[f\big(\lambda f^{-1}(y_0+Z)+ (1-\lambda)f^{-1}(y_0-Z)\big)\Big]\Big)\\
        &>f^{-1}\big(y_0+(2\lambda - 1)\E[Z]\big) \\
        &= f^{-1}(y_0) = 0 = \lambda\riskM(\mathbf{X}) + (1-\lambda)\riskM(\mathbf{Y}). 
    \end{align*}   

    Now, the map $\rrm_0=\exp\circ \riskM_0\circ\log$ on $\interior_{\lVert.\rVert_{\infty}}(L^{\infty}_+)$ is increasing and satisfies the constancy property on $(0,\infty)$, and by Corollary~\ref{cor:constructionAggregation}, the map $\aggregation = \exp\circ\Lambda\circ\log$ is an aggregation function on $(0,\infty)^2$. Hence, we are allowed to apply Theorem~\ref{thm:representationSystemicRRM}, from which we obtain that $\rrm=\rrm_0\circ\aggregation$ is a systemic risk measure on the interior of $L^{\infty}(\Omega;\R^2)$, for which $\rrm = \exp\circ\riskM\circ \log$ holds. From the previous representation of $\rrm$ and the convexity of $\riskM$, we obtain by Proposition~\ref{prop:relationToClassicalRM} that $\rrm$ is GG-convex. It is then easy to check that the missing risk convexity of $\riskM$ implies that $\rrm$ is not risk GG-convex. 
    
    For completeness, note that Proposition~\ref{prop:riskGGconvex} implies that $\rrm$ is not positively homogeneous and missing convexity of $\riskM_0$ can be justified as follows: for $x\in\R\setminus\{0\}$ and $Z\in L^{\infty}$ with $\E[Z] = 0$ set $\phi(t):=\riskM_0(x+tZ)$. Then, it is straightforward to show that
    \[
        \phi^{\prime\prime}(0) = \frac{f^{\prime\prime}(x)}{f^{\prime}(x)}\E[Z^2] <0, 
    \]
    which proves that $\riskM_0$ cannot be convex and hence, $\rrm_0$ cannot be GG-convex.

    Another pair $\riskM_0$ and $\Lambda$ with $\riskM(X) = (\riskM_0\circ\Lambda)(\mathbf{X}) = f^{-1}(\E[\mathbf{X}_1+\mathbf{X}_2])$ would be $\riskM_0(X) = f^{-1}(\E[X])$ and $\Lambda(x) = x_1+x_2$. In this case, $\riskM_0$ is convex, but it does not satisfy the constancy property. Thus, our examples illustrates that risk GG-convexity of a systemic RRM $\rrm$ is essential to guarantee GG-convexity and the constancy property of the corresponding one-dimensional map $\rrm_0$. 
\end{example}

Finally, we illustrate the independence of the map $\rrm_0$ and the return aggregation function $\aggregation$ via a primal and a dual representation. The proof of the primal representation works analogously to the one of~\cite[Proposition 1]{kromer2016} and the proof of the dual representation is based on our Theorem~\ref{thm:dualRepresentation}. 
\begin{proposition}
    Let $(\Omega,\mathcal{F},\P)$ be atomless, $I=(0,\infty)$, $E = \interior_{\lVert.\rVert_{\infty}}(L^{\infty}_+)$, $\rrm:E^n\rightarrow (0,\infty)$ a GG-coherent systemic RRM and $\rrm_0$ resp.~$\aggregation$ the corresponding maps from Theorem~\ref{thm:representationSystemicRRM}.
    \begin{enumerate}
        \item[(i)] The following primal representation holds:
        \[
            \rrm(\mathbf{X}) = \inf\left\{m\in(0,\infty)\,\middle|\,\exists Y\in E:\frac{Y}{m}\in\mathcal{B}_{\rrm_0}\wedge(Y,\mathbf{X})\in\mathcal{C}_{\aggregation}\right\},\quad \mathbf{X}\in E^n,
        \]
        where $\mathcal{C}_{\aggregation} = \{(Y,\mathbf{X})\in E\times E^n\mid \aggregation(\mathbf{X})\leq Y\}$.
        \item[(ii)] If $\aggregation$ is lower semicontinuous, then the following dual representation holds:
        \[
            \rrm(\mathbf{X}) = \max_{l\in L_{\rrm_0}} \max_{h\in L_{\aggregation}} \frac{(l\circ h)(\mathbf{X})}{\Big(\sup\limits_{Z\in\mathcal{B}_{\rrm_0}}l(Z)\Big)\Big(\sup\limits_{z\in\mathcal{B}_{\aggregation}}h(z)\Big)},\quad \mathbf{X}\in E^n,
        \]
        where
        \begin{align*}
            L_{\rrm_0} = \Bigg\{l:X\mapsto \exp\Big(\E_{\Q}[\log(X)] + E_{\nu}[\log(X)]\Big)\,\Bigg|\,&\Big(\frac{\diff \mathbb{Q}}{\diff \mathbb{P}},\nu\Big)=L^1(\Omega;\R^n)_+\oplus (L^1(\Omega;\R^n)^{\disjointComplement})_+,\\
            &\hspace{2cm} \sup\limits_{\substack{Z\geq 0\\ \lVert Z\rVert_{\infty}=1}}\big(\E_{\Q}[Z] + \E_{\nu}[Z]\big) = 1\Bigg\}
        \end{align*}
        and
        \[
            L_{\aggregation} = \Bigg\{x\mapsto \exp\big(\log(x)^{\intercal}\log(y)\big)\,\Bigg|\, y\in(0,\infty)^n,y\geq \mathbf{1},\prod_{i=1}^{n}y_i = e\Bigg\}.
        \]
    \end{enumerate}
\end{proposition}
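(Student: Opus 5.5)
For (i), I will use the decomposition $\rrm=\rrm_0\circ\aggregation$ from Theorem~\ref{thm:representationSystemicRRM}. Since $\rrm$ is GG-coherent, $\rrm_0$ is increasing, GG-convex and positively homogeneous on $E$, and $\aggregation$ is GG-coherent. Fix $\mathbf{X}\in E^n$ and let $m^\ast$ be the infimum on the right-hand side.

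\emph{Upper bound.} Take $Y=\aggregation(\mathbf{X})\in E$ (recall $\aggregation(E^n)=E_I=E$) and $m=\rrm_0(Y)$. Positive homogeneity gives $\rrm_0(Y/m)=1$, so $Y/m\in\mathcal{B}_{\rrm_0}$. Also $(Y,\mathbf{X})\in\mathcal{C}_{\aggregation}$ trivially, hence $m^\ast\le\rrm(\mathbf{X})$.

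\emph{Lower bound.} Let $m$ and $Y$ be admissible. Monotonicity of $\rrm_0$ and $\aggregation(\mathbf{X})\le Y$ give $\rrm(\mathbf{X})=\rrm_0(\aggregation(\mathbf{X}))\le\rrm_0(Y)$. Positive homogeneity together with $Y/m\in\mathcal{B}_{\rrm_0}$ gives $\rrm_0(Y)=m\,\rrm_0(Y/m)\le m$. So $\rrm(\mathbf{X})\le m$, and the infimum equals $\rrm(\mathbf{X})$. This mirrors \cite[Proposition 1]{kromer2016} with addition replaced by multiplication.

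For (ii), I will apply dual representations separately to $\rrm_0$ and to $\aggregation$, and then compose them.

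\emph{Dual representation of $\rrm_0$.} $\rrm_0$ is a GG-convex RRM on $\interior_{\lVert.\rVert_\infty}(L^\infty_+)$ with values in $(0,\infty)$, and $L^\infty$ is an infinite-dimensional AM-algebra because the space is atomless. Theorem~\ref{thm:dualRepresentation} together with Example~\ref{exam:YosidaHewittDecomposition} (for $n=1$) therefore gives
\[
\rrm_0(X)=\max_{l\in L_{\rrm_0}}\frac{l(X)}{\tilde a(l)}.
\]
By positive homogeneity the penalty is $\tilde a(l)=\sup_{Z\in\mathcal{B}_{\rrm_0}}l(Z)$. Indeed, for a positively homogeneous $\rrm_0$, the conjugate $\riskM_0^\star$ restricted to $B_{\unit}$ equals the support function of $\mathcal{A}_{\riskM_0}$, and after the exponential transform this is $\sup_{\mathcal{B}_{\rrm_0}}l$, as in \eqref{eq:penaltyFunction}.

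\emph{Dual representation of $\aggregation$.} $\Lambda=\log\circ\aggregation\circ\exp:\R^n\to\R$ is a finite, increasing, convex, cash-additive function; it is cash-additive for $\mathbf{1}$ by Proposition~\ref{prop:relationToClassicalRM}. By the finite-dimensional Fenchel–Moreau theorem, a finite convex function on $\R^n$ is continuous and its subdifferential is nonempty, so the supremum is attained. This gives
\[
\Lambda(u)=\max_{q}\big(q^\intercal u-\sigma_{\mathcal{A}_\Lambda}(q)\big),
\]
with $q\ge0$ and $\sum q_i=1$. Exponentiating gives the max over $h(x)=\exp(\log(x)^\intercal q)$ divided by $\sup_{\mathcal{B}_{\aggregation}}h$. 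The stated parametrisation $y=\exp(q)$, with $y\ge\mathbf{1}$ and $\prod y_i=e$, is exactly this normalisation. The lower semicontinuity hypothesis on $\aggregation$ is redundant here but harmless.

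\emph{Composition.} For each $\mathbf{X}$, insert the $\aggregation$-representation pointwise in $\omega$ into $\rrm_0$. Every $l\in L_{\rrm_0}$ is increasing and positively homogeneous, and each $\aggregation(\mathbf{X}(\omega))$ dominates the corresponding $h$-term. Hence
\[
\rrm(\mathbf{X})=\max_l \frac{l(\aggregation(\mathbf{X}))}{\tilde a(l)} \;\ge\; \max_l\max_h\frac{l(h(\mathbf{X}))}{\tilde a(l)\,\sup_{\mathcal{B}_{\aggregation}}h}.
\]

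\textbf{Main obstacle: the reverse inequality.} The maximising $h$ in the representation of $\aggregation$ depends on $\omega$, whereas the formula uses a single $h$. I plan two routes.
\begin{itemize}[leftmargin=*, nosep]
\item First, I would try to exploit GG-coherence: $\aggregation$ is then GG-sublinear, and $l$ is geometrically linear, so $l\circ\aggregation$ is itself a GG-convex positively homogeneous functional on $E^n$. I would then identify its maximisers via Theorem~\ref{thm:dualRepresentation} on $L^\infty(\Omega;\R^n)$.
\item If that fails, the fallback is to reformulate everything on the monetary side. There, $\riskM_0(\Lambda(\cdot))$ has a dual representation through a $\psi\in B_{\unit}$ on $L^\infty(\Omega;\R^n)$, and I would show that this $\psi$ factorises as a product of a measure on $\Omega$ and a vector $q$.
\end{itemize}
I expect this factorisation step, i.e.\ justifying that a single $h$ suffices, to be the hardest part of the proof.
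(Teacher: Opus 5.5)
\textbf{Part (i).} This is correct and is the paper's argument: the paper writes it as an equality of the two sets of admissible $m$, using monotonicity and positive homogeneity of $\rrm_0$ exactly as you do.

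\textbf{Part (ii).} The step you flag as the main obstacle is a genuine gap, and it cannot be closed, because the identity in (ii) is false in general. Monotonicity and positive homogeneity of $l$ only give the inequality $\rrm(\mathbf{X})\ge\max_{l}\max_{h}(\dots)$ that you derive. That is also all the paper's final step yields: it ``combines'' the dual formula for $\rrm_0$ with the one for $\aggregation$, using only that each $l$ is increasing and positively homogeneous, and so silently pulls $\max_h$ out of $l$.

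Both of your routes hit the same wall. Put $\Lambda=\log\circ\aggregation\circ\exp$. On the monetary side, the optimal dual element of $\mathbf{U}\mapsto\int\Lambda(\mathbf{U})\,\mathrm{d}\mu$ weights $\mu$ by a subgradient of $\Lambda$ at $\mathbf{U}(\omega)$. That subgradient varies with $\omega$, so the dual element does not factor as $\mu$ times a constant vector $q$ unless $\Lambda$ is affine.

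\textbf{A counterexample.} Let $n=2$ and $\rrm(\mathbf{X})=\exp\big(\E[\log(\mathbf{X}_1\vee\mathbf{X}_2)]\big)$, i.e.\ $\aggregation(x)=x_1\vee x_2$ and $\rrm_0(X)=\exp(\E[\log X])$.
\begin{enumerate}
\item[(a)] These maps satisfy the conditions of Theorem~\ref{thm:representationSystemicRRM} in the GG-coherent case. Hence $\rrm$ is a GG-coherent systemic RRM, and $\aggregation$ is continuous.
\item[(b)] Here $\Lambda(u)=u_1\vee u_2$. So $L_{\aggregation}$ consists of $h_q(x)=x_1^{q_1}x_2^{q_2}$ with $q\ge 0$ and $q_1+q_2=1$, and $\sup_{\mathcal{B}_{\aggregation}}h_q=1$.
\item[(c)] For every $l\in L_{\rrm_0}$ and $Y\in E$ we have $Y/\rrm_0(Y)\in\mathcal{B}_{\rrm_0}$. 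Positive homogeneity of $l$ then gives $l(Y)/\sup_{\mathcal{B}_{\rrm_0}}l\le\rrm_0(Y)$.
\item[(d)] Consequently the right-hand side of (ii) is at most $\max_q\exp\big(q_1\E[\log\mathbf{X}_1]+q_2\E[\log\mathbf{X}_2]\big)$.
\item[(e)] Take $\mathbf{X}=(e^{Z},e^{-Z})\in E^2$ with $\P(Z=1)=\P(Z=-1)=1/2$. The bound in (d) equals $1$, while $\rrm(\mathbf{X})=\exp(\E[|Z|])=e$.
\end{enumerate}

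\textbf{What can be proved.} One has $\rrm(\mathbf{X})=\max_{l\in L_{\rrm_0}}l\big(\aggregation(\mathbf{X})\big)/\sup_{\mathcal{B}_{\rrm_0}}l$, with the finite-dimensional dual formula for $\aggregation$ applied inside $l$, pointwise in $\omega$. The formula with a single $h$ holds only in special cases, e.g.\ when $\aggregation$ is GG-affine, so that one $h$ is optimal for every $\omega$.
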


\begin{proof}
    To prove (i), fix $\mathbf{X}\in E^n$. By Theorem~\ref{thm:representationSystemicRRM} we obtain that
    \begin{align*}
        \rrm(\mathbf{X}) &= \inf\{m\in(0,\infty)\mid \rrm(\mathbf{X})\leq m\} =  \inf\{m\in(0,\infty)\mid (\rrm_0\circ\aggregation)(\mathbf{X})\leq m\}.
    \end{align*}
    Note that for all $Y\in E_I = E$ we obtain
    \[
        \rrm_0(Y) = \inf\{m\in(0,\infty)\mid \rrm_0(Y)\leq m\} = \inf\left\{m\in(0,\infty)\,\middle|\, \frac{Y}{m}\in\mathcal{B}_{\rrm_0}\right\}.
    \]
    Then, by the definition of $\mathcal{C}_{\aggregation}$ it holds that $\big(\aggregation(\mathbf{X}),\mathbf{X}\big)\in\mathcal{C}_{\aggregation}$. Moreover, for all $m\in(0,\infty)$ and~$Y\in E$ with $\frac{Y}{m}\in \mathcal{B}_{\rrm_0}$ and $(Y,\mathbf{X})\in\mathcal{C}_{\aggregation}$, we obtain by monotonicity of $\rrm_0$, that
    \[
        \rrm_0\big(\aggregation(\mathbf{X})\big)\leq \rrm_0(Y)\leq m,
    \]
    and hence, $\frac{\aggregation(\mathbf{X})}{m}\in\mathcal{B}_{\rrm_0}$. Together, we have
    \[
        \left\{m\in(0,\infty)\,\middle|\,\frac{\aggregation(\mathbf{X})}{m}\in\mathcal{B}_{\rrm_0} \right\} = \left\{m\in(0,\infty)\,\middle|\,\exists Y\in E:\frac{Y}{m}\in\mathcal{B}_{\rrm_0}\wedge (Y,\mathbf{X})\in\mathcal{C}_{\aggregation} \right\}. 
    \]
    This implies that
    \begin{align*}
        \rrm(\mathbf{X}) = \inf\left\{m\in(0,\infty)\,\middle|\,\exists Y\in E:\frac{Y}{m}\in\mathcal{B}_{\rrm_0}\wedge(Y,\mathbf{X})\in\mathcal{C}_{\aggregation}\right\}.
    \end{align*}
    To prove (ii), note first that $\rrm_0$ is a GG-convex RRM. Hence, by Theorems~\ref{thm:dualRepresentation} and~\ref{thm:representationSystemicRRM}, we obtain 
    \begin{align}\label{eq:proofDualSystemic1}
        \rrm(\mathbf{X}) = \max_{l\in L_{\rrm_0}}\frac{(l\circ\aggregation)(\mathbf{X})}{\sup\limits_{Z\in\mathcal{B}_{\rrm_0}}l(Z)},\quad \mathbf{X}\in E^n,
    \end{align}
    where the form of the set $L_{\rrm_0}$ follows from~\eqref{eq:setOptimizationDualRepresentation} and by the explanations in Example~\ref{exam:YosidaHewittDecomposition}. Then, note that $\aggregation$ is monotone, positively homogeneous and GG-convex and hence, an RRM. Thus, by Proposition~\ref{prop:relationToClassicalRM}, $\Lambda:=\log\circ\aggregation\circ\exp$ is a proper, lower semicontinuous and convex risk measure. So, by applying the classical Fenchel-Moreau Theorem and similar arguments as in the proof of Theorem~\ref{thm:dualRepresentation} gives us the following for every $x\in\R^n$:
    \begin{align}\label{eq:proofDualSystemic2}
        \aggregation(x) = \max_{\substack{y\in\mathbb{R}^n,y\geq \mathbf{1}\\ \prod_{i=1}^{n}y_i = e}}\frac{\exp\big(\log(x)^{\intercal}\log(y)\big)}{\sup\limits_{z\in\mathcal{B}_{\aggregation}}\exp(\log(z)^{\intercal}\log(y))} = \max_{h\in L_{\aggregation}}\frac{h(x)}{\sup\limits_{z\in\mathcal{B}_{\aggregation}}h(z)}.
    \end{align}
    Then, by noting that every $l\in L_{\rrm_0}$ is increasing and positively homogeneous, the claim in (ii) follows by combining~\eqref{eq:proofDualSystemic1} with~\eqref{eq:proofDualSystemic2}. 
\end{proof}

\begin{remark}
    Given a loss $Y\in E$, then the set $\mathcal{C}_{\aggregation}$ describes all loss vectors $\mathbf{X}\in E^n$ whose aggregated loss $\aggregation(\mathbf{X})$ is no more severe than the loss $Y$. Moreover, our dual representation differs from that in~\cite[Theorem 2]{kromer2016}. We separate the dual elements associated with $\rrm_0$ and $\aggregation$, by  using the acceptance set $\mathcal{B}_{\aggregation}$, but not the set $\mathcal{C}_{\aggregation}$. The trade-off is that the objective function involves the composition $l\circ h$. However, our proof is more direct than the one in~\cite{kromer2016} by relying on Theorem~\ref{thm:dualRepresentation} and the Fenchel-Moreau Theorem.
\end{remark}
    
\section{Separability of vector-valued return risk measures}\label{sec:vectorValuedRRM}

Finally, we study maps $\rrm:E^n\rightarrow (0,\infty]^n$ with $E\subseteq L^{\infty}_{+}$. Motivated by the study of~\cite{AraratFeinstein2024Separability}, we are focusing on the discussion of sufficient conditions such that $\rrm$ can be written as $\rho(\mathbf{X}) = \big(\rrm_1(\mathbf{X}_1),\dots,\rrm_n(\mathbf{X}_n)\big)^{\intercal}$, where each $\rrm_i$ is also an RRM, i.e.,~measuring the risk of the multivariate position $\mathbf{X}$ simplifies to measuring the risk of each component of $\mathbf{X}$ separately. Hence, in such situations, the multidimensionality of the codomain can de facto be ignored. 

We formalize this situation by the separability property of an RRM, which we define next.
\begin{definition}
    Assume a GG-convex cone $E\subseteq L^{\infty}_{+}$ with $(0,\infty)\subseteq E$ and an RRM ${\rrm:E^n\rightarrow [0,\infty]^n}$. Then, we introduce the following properties:
    \begin{enumerate}
        \item[(1)] $\rrm$ is \textbf{separable}, if there exist RRMs $\tilde{r}_1,\dots,\tilde{r}_n$ on $E$ such that $\rrm = (\tilde{r}_1,\dots,\tilde{r}_n)^{\intercal}$.  
        \item[(2)] $\rrm$ is \textbf{pointwise positively homogeneous}, if for all $\mathbf{X}\in E^n$ and $z\in(0,\infty)^n$ it holds that $\rrm(z\mathbf{X}) = z\rrm(\mathbf{X})$, where multiplications are performed pointwise.
        \item[(3)] $\rrm$ satisfies the \textbf{marginal domination property}, if for every $i\in[n]$ there exists a function $\tilde{f}_i:(0,\infty)\rightarrow (0,\infty)$ with $\rrm_i(z)\leq \tilde{f}_i(z_i)$ for all $z\in(0,\infty)^n$. 
    \end{enumerate}
\end{definition}

Under pointwise positive homogeneity, separability of the RRM follows immediately, even without assuming GG-convexity or any additional continuity assumption. Our result also clarifies that --- in case of a monotone and (pointwise) cash-additive functional --- the additional assumptions of convexity and lower semicontinuity in~\cite[Corollary 3.6]{AraratFeinstein2024Separability} are superfluous.
\begin{corollary}\label{cor:separability}
    A pointwise positively homogeneous RRM $\rrm:\interior_{\lVert.\rVert_{\infty}}\big(L^{\infty}_+(\Omega;\R^n)\big)\rightarrow (0,\infty)^n$ is separable. 
\end{corollary}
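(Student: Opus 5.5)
The plan is to show directly that the $i$-th component $\rrm_i(\mathbf{X})$ depends only on $\mathbf{X}_i$, by squeezing $\mathbf{X}$ between two vectors that agree with $\mathbf{X}$ in coordinate $i$ and are \emph{constant} in the other coordinates. Pointwise positive homogeneity lets us rescale those constant coordinates freely without changing the $i$-th component of $\rrm$.

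Fix $\mathbf{X}\in\interior_{\lVert.\rVert_{\infty}}\big(L^{\infty}_+(\Omega;\R^n)\big)$. By Lemma~\ref{lem:interiorPositiveCone}(i) there are constants $0<\varepsilon\le c$ with $\varepsilon\le \mathbf{X}_j\le c$ a.s.\ for all $j\in[n]$. For $i\in[n]$ and $a\in(0,\infty)$, write $\mathbf{X}^{(i,a)}$ for the vector with $i$-th component $\mathbf{X}_i$ and all other components equal to $a$. This vector lies in the interior of the positive cone, and monotonicity gives
\[
\mathbf{X}^{(i,\varepsilon)}\le \mathbf{X}\le \mathbf{X}^{(i,c)},
\qquad\text{hence}\qquad
\rrm_i(\mathbf{X}^{(i,\varepsilon)})\le\rrm_i(\mathbf{X})\le\rrm_i(\mathbf{X}^{(i,c)}).
\]
Now let $z\in(0,\infty)^n$ have $z_i=1$ and $z_j=c/\varepsilon$ for $j\neq i$. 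Then $z\mathbf{X}^{(i,\varepsilon)}=\mathbf{X}^{(i,c)}$, and pointwise positive homogeneity yields
\[
\rrm_i(\mathbf{X}^{(i,c)})=z_i\rrm_i(\mathbf{X}^{(i,\varepsilon)})=\rrm_i(\mathbf{X}^{(i,\varepsilon)}).
\]
So the two bounds coincide, and $\rrm_i(\mathbf{X})=\rrm_i(\mathbf{X}^{(i,a)})$ for $a=\varepsilon$. The same scaling argument shows that $\rrm_i(\mathbf{X}^{(i,a)})$ does not depend on $a$ at all, so we may take $a=1$. Here finiteness is used: the codomain is $(0,\infty)^n$, so there are no $\infty$ issues, although the argument would in fact survive them.

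We then define $\tilde r_i(X):=\rrm_i(\mathbf{X}^{(i,1)})$ for $X\in\interior_{\lVert.\rVert_{\infty}}(L^\infty_+)$, where $\mathbf{X}^{(i,1)}$ is built with $X$ in the $i$-th coordinate; this vector is in the domain. It remains to check that $\tilde r_i$ is an RRM.
\begin{itemize}
\item \textbf{Monotonicity.} If $X\le Y$, then $\mathbf{X}^{(i,1)}\le\mathbf{Y}^{(i,1)}$, so monotonicity passes directly from $\rrm$ to $\tilde r_i$.
\item \textbf{Positive homogeneity.} For $\lambda>0$, take $z=\lambda e_i+\sum_{j\ne i}e_j$. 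Then $z\mathbf{X}^{(i,1)}=(\lambda\mathbf{X})^{(i,1)}$, so $\tilde r_i(\lambda X)=\lambda\tilde r_i(X)$.
\end{itemize}
Combining these, $\rrm=(\tilde r_1(\mathbf{X}_1),\dots,\tilde r_n(\mathbf{X}_n))^{\intercal}$, which is separability.

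The main obstacle, and the only genuinely non-formal step, is the sandwich. It requires the other coordinates of $\mathbf{X}$ to be bounded away from both $0$ and $\infty$ by constants. This is exactly where the choice of domain, the interior of the cone, is used through Lemma~\ref{lem:interiorPositiveCone}(i). On the larger domain $L^\infty_{++}$ the lower constant $\varepsilon$ would fail to exist.
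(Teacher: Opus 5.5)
Your proof is correct and follows the paper's argument: you sandwich $\mathbf{X}$ between vectors that agree with $\mathbf{X}$ in coordinate $i$ and are constant elsewhere, then use pointwise positive homogeneity with $z_i=1$ to make the two bounds coincide. The only differences are that the paper uses the coordinatewise $\essinf$ and $\esssup$ where you use uniform constants $\varepsilon, c$, and that your explicit check that each $\tilde r_i$ is monotone and positively homogeneous (hence an RRM, as separability requires) fills in a step the paper leaves implicit.
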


\begin{proof}
    Let $\mathbf{X}\in\interior_{\lVert.\rVert_{\infty}}\big(L^{\infty}(\Omega;\R^n)\big)$. Without loss of generality, we prove the statement for the first component of $\rrm(\mathbf{X})$. By monotonicity and pointwise positive homogeneity of $\rrm$, we obtain that
    \[
        \rrm_1(\mathbf{X})\geq \rrm_1\big( (\mathbf{X}_1, \essinf \mathbf{X}_2,\dots, \essinf \mathbf{X}_n )^{\intercal}\big) = \rrm_1\big((\mathbf{X}_1, 1,\dots, 1 )^{\intercal}\big). 
    \]
    Analogously, we have that
    \[
        \rrm_1(\mathbf{X})\leq \rrm_1\big( (\mathbf{X}_1, \esssup \mathbf{X}_2,\dots, \esssup \mathbf{X}_n )^{\intercal}\big) = \rrm_1\big((\mathbf{X}_1, 1,\dots, 1 )^{\intercal}\big). 
    \]
    Hence, $\rrm_1(\mathbf{X}) = \tilde{r}_1$, where $\tilde{r}_1:\interior_{\lVert.\rVert_{\infty}}\big(L^{\infty}_{+}\big)\rightarrow (0,\infty)$ is defined by $\tilde{r}_1(X) = \rrm_1\big((\mathbf{X}_1, 1,\dots, 1 )^{\intercal}\big)$.
\end{proof}

The proof strongly relies on pointwise positive homogeneity, which is stronger than positive homogeneity. On the other hand, even under GG-convexity, positive homogeneity does not imply pointwise positive homogeneity, as the next example shows. 

\begin{example}
    For all $\mathbf{X}\in\interior_{\lVert.\rVert_{\infty}}\big(L^{\infty}_+(\Omega;\R^n)\big)$, we set 
    \[
        \rrm(\mathbf{X}) = \Big(\max_{i\in[n]}\,\esssup X_i\Big)\mathbf{1},
    \]
    where $\mathbf{1} = (1,\dots,1)\in\R^n$. We directly see that this is a positively homogeneous RRM. In addition, it is also GG-convex. However, let $n=2$ and assume the constant (random) vector $\mathbf{X} = (1,2)^{\intercal}$. Then, for $z=\Big(2,\frac{1}{2}\Big)^{\intercal}$ it holds that 
    \[
        \rrm(z\mathbf{X}) = \rrm\big((2,1)^{\intercal}\big) = (2,2) \neq (4,1) = \Big(2,\frac{1}{2}\Big)^{\intercal} \rrm\big((1,2)^{\intercal}\big) = z\rrm(\mathbf{X}).
    \]
    Thus, $\rrm$ is not pointwise positively homogeneous.
\end{example}

\begin{remark}
    The marginal domination property is weaker than pointwise positive homogeneity. If an RRM $\rrm:\interior_{\lVert.\rVert_{\infty}}\big(L^{\infty}_+(\Omega;\R^n)\big)\rightarrow (0,\infty)^n$ only satisfies the marginal domination property, then by additionally assuming GG-convexity and $\tau|_{\interior_{\lVert.\rVert_{\infty}}(V_+)}$-lower semicontinuity, where $\tau$ is the topology induced by the sup-norm $\lVert.\rVert_{\infty}$,  Proposition~\ref{prop:relationToClassicalRM} and Remark~\ref{exam:exponentialTopology} gives us that $\riskM =\log\circ\rrm\circ\exp$ is a convex and weak$^{\star}$-lower semicontinuous monetary risk measure. Hence, by applying~\cite[Theorem 3.3]{AraratFeinstein2024Separability} to $\riskM$ we obtain that $\rrm$ is separable with $\rrm(\mathbf{X})=(\tilde{r}_1(\mathbf{X}_1),\dots,\tilde{r}_n(\mathbf{X}_n))^{\intercal}$, where each $\tilde{r}_i$ is a GG-convex RRM.
\end{remark}

We conclude this section by a separability result for a larger domain than $\interior_{\lVert.\rVert_{\infty}}\big(L^{\infty}_+(\Omega;\R^n)\big)$. This result relies on convex duality theory and hence, we have to assume a GG-convex RRM.
To do so, for a GG-convex RRM $\tilde{r}:\interior_{\lVert.\rVert_{\infty}}\big(L^{\infty}_+\big)\rightarrow(0,\infty)$, we define the set 
\[
    \tilde{L}_{\tilde{r}} = \big\{X\in L_{++}^{\infty}\,\big|\,\exists \lambda>0:\tilde{r}^{E}\big(X^{\lambda}\mathbf{1}_{\{X\geq 1\}} + X^{-\lambda}\mathbf{1}_{\{X< 1\}}\big)\leq e\big\},
\]
where 
\[
    \tilde{r}^{E}(X) := \sup_{\mu\in\mathcal{M}_{1,f}}\frac{\exp(\int\log(X)\diff\mu)}{\sup\limits_{Y\in\mathcal{B}_{\tilde{r}}}\exp(\int\log(Y)\diff\mu)},\quad X\in L_{++}^{\infty},
\]
with $\mathcal{M}_{1,f}$ being the set of all $\mu\in\mathbf{ba}_{\P}$ satisfying $\mu(\Omega) = 1$ and $\mu\geq 0$. Note, $\tilde{r}^{E}$ is an extension of~\eqref{eq:dualRRM_interiorCone}, where we used the representations from Example~\ref{exam:YosidaHewittDecomposition}.

\begin{theorem}\label{thm:separability}
    Let $\rrm:\interior_{\lVert.\rVert_{\infty}}\big(L^{\infty}_+(\Omega;\R^n)\big)\rightarrow (0,\infty)^n$ be a separable and GG-convex RRM with $\rrm(\mathbf{X}) = (\tilde{r}_1(\mathbf{X}_1),\dots,\tilde{r}_n(\mathbf{X}_n))^{\intercal}$ for all $\mathbf{X}\in\interior_{\lVert.\rVert_{\infty}}\big(L^{\infty}_+(\Omega;\R^n)\big)\rightarrow (0,\infty)^n$, where  for all $i\in[n]$, $\tilde{r}_i:\interior_{\lVert.\rVert_{\infty}}\big(L^{\infty}_+\big)\rightarrow(0,\infty)$ is a GG-convex RRM. Then, the following statements hold:
    \begin{enumerate}
        \item[(i)] The set $\tilde{L}_{\rrm} := \prod\limits_{i=1}^n \tilde{L}_{\tilde{r}_i}\subseteq L_{++}^{\infty}(\Omega;\R^n)$ is a GG-convex cone.
        \item[(ii)] The map
        \begin{align}\label{eq:monotoneExtension}
            \tilde{f}:L_{++}^{\infty}(\Omega;\R^n)\rightarrow[0,\infty]^n,\ X\mapsto \lim_{m\rightarrow\infty}\rrm(X\vee m^{-1}).
        \end{align}
        is a separable and GG-convex RRM which extends $\rrm$, i.e.,~$\tilde{f}\big|_{\interior_{\lVert.\rVert_{\infty}}\big(L^{\infty}_+(\Omega;\R^n)\big)}=\rrm$.
        \item[(iii)] The restriction $\tilde{f}\big|_{\tilde{L}_{\rrm}}$ is a separable and GG-convex RRM with $\tilde{f}\big|_{\tilde{L}_{\rrm}}>0$ and for all $i\in[n]$ and $\mathbf{X}\in \tilde{L}_{\rrm}$, the following representation holds:
        \begin{align}\label{eq:representationSeparableFunction}
            \big(\tilde{f}\big|_{\tilde{L}_{\rrm}}\big)_i(\mathbf{X}) = \inf\Bigg\{m>0\,\Bigg|\, \frac{\mathbf{X}_i}{m}\in \tilde{\mathcal{B}}_{\tilde{r}_i}\Bigg\}.
        \end{align}
        where $\tilde{\mathcal{B}}_{\tilde{r}_i}=\left\{X\in\tilde{L}_{\tilde{r}_i} \,\middle|\, \tilde{r}_i^{E}(X)\leq 1\right\}$.
    \end{enumerate}
\end{theorem}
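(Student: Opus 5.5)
The plan is to reduce everything to one component and work on the logarithmic, monetary side. By separability, $\rrm_i(\mathbf{X})=\tilde{r}_i(\mathbf{X}_i)$, and $\tilde{L}_{\rrm}$ is a product. Monotonicity, positive homogeneity, GG-convexity and cone properties all act componentwise with the pointwise order on $(0,\infty]^n$. So I will prove (i)–(iii) for a single GG-convex RRM $\tilde{r}=\tilde{r}_i$ and then take products.

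Throughout, set $\riskM=\log\circ\tilde{r}\circ\exp$, a finite, convex, cash-additive risk measure on $L^\infty$ by Proposition~\ref{prop:relationToClassicalRM} and Theorem~\ref{thm:LipschitzContinuity}. Likewise set $\riskM^E:=\log\circ\tilde{r}^E\circ\exp$. By Example~\ref{exam:YosidaHewittDecomposition}, $\riskM^E(Z)=\sup_{\mu\in\mathcal{M}_{1,f}}\big(\int Z\diff\mu-\sigma_{\mathcal{A}_\riskM}(\mu)\big)$, defined for $Z$ bounded above. It is monotone, convex and cash-additive, and it agrees with $\riskM$ on $L^\infty$ by Theorem~\ref{thm:dualRepresentation}. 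The key observation is
\[
X^{\lambda}\mathbf{1}_{\{X\geq 1\}}+X^{-\lambda}\mathbf{1}_{\{X<1\}}=\exp(\lambda|\log X|).
\]
Hence $X\in\tilde{L}_{\tilde{r}}$ if and only if $\riskM^E(\lambda|\log X|)\le 1$ for some $\lambda>0$, which is an Orlicz-heart/Luxemburg-type condition on $\log X$.

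For (i), GG-convexity of $\tilde{L}_{\tilde{r}}$ follows from $|\log(X^aY^{1-a})|\le a|\log X|+(1-a)|\log Y|$. I take $\lambda=\min(\lambda_X,\lambda_Y)$ and use monotonicity and convexity of $\riskM^E$, after first reducing $\lambda_X$ to $\lambda$ via $\riskM^E(tZ)\le t\riskM^E(Z)+(1-t)\riskM^E(0)$. The cone property follows from $|\log(cX)|\le|\log X|+|\log c|$ together with cash-additivity. Shrinking $\lambda$ by the same convexity inequality absorbs the extra $\lambda|\log c|$. At this point I have to check the interplay with $\riskM^E(0)=\log\tilde{r}(1)$ and the threshold $1$, and if necessary treat the degenerate case where $\tilde{L}_{\tilde{r}}$ is empty separately.

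For (ii), note that $X\vee m^{-1}\in\interior_{\lVert.\rVert_\infty}(L^\infty_+)$ and is decreasing in $m$. By monotonicity, $\tilde{r}(X\vee m^{-1})$ is decreasing, so the limit exists in $[0,\infty)$. The remaining properties are checked as follows:
\begin{itemize}
\item Extension: if $X\ge\varepsilon$, then $X\vee m^{-1}=X$ for $m>1/\varepsilon$.
\item Monotonicity: this is immediate from monotonicity of $\tilde{r}$.
\item Positive homogeneity: use $(cX)\vee m^{-1}=c\,(X\vee (cm)^{-1})$ and pass to the limit along $cm\to\infty$.
\item GG-convexity: the pointwise inequality $(X^aY^{1-a})\vee m^{-1}\le (X\vee m^{-1})^a(Y\vee m^{-1})^{1-a}$ lets me apply monotonicity and GG-convexity of $\tilde{r}$ and then let $m\to\infty$.
\end{itemize}

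For (iii), the main obstacle is showing $\tilde{f}=\tilde{r}^E$ on $\tilde{L}_{\tilde{r}}$. Since $\mu$ is only finitely additive, $\int\log(X\vee m^{-1})\diff\mu\to\int\log X\diff\mu$ is not automatic. I will instead use the integrability condition directly. Write $\log(X\vee m^{-1})=\log X+T_m$ with $0\le T_m\le|\log X|\mathbf{1}_{\{|\log X|>\log m\}}$. Then for $t\in(0,1)$,
\[
\riskM^E(\log X+T_m)\le t\,\riskM^E(\log X/t)+(1-t)\,\riskM^E\big(T_m/(1-t)\big).
\]
I will show $\riskM^E(T_m/(1-t))\to\riskM^E(0)$ using $\riskM^E(\lambda|\log X|)\le1$. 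This works via exponential domination: on $\{|\log X|>\log m\}$ one has $s|\log X|\le \lambda|\log X|-(\lambda-s)\log m$ for $s<\lambda$, and cash-additivity turns the shift $-(\lambda-s)\log m$ into a vanishing bound. Letting $m\to\infty$ and then $t\to1$ gives $\lim_m\riskM^E(\log(X\vee m^{-1}))\le\riskM^E(\log X)$; the reverse inequality is monotonicity. The same estimates applied to $-\lambda|\log X|$ together with convexity around $0$ give $\riskM^E(\log X)>-\infty$, hence $\tilde{f}>0$. Finally, $\tilde{f}|_{\tilde{L}_{\rrm}}$ is a positively homogeneous RRM on the cone $\tilde{L}_{\rrm}$. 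Its Minkowski-gauge form~\eqref{eq:representationSeparableFunction} then follows from $\tilde{f}_i(\mathbf{X})=\inf\{m>0\mid \tilde{f}_i(\mathbf{X}/m)\le 1\}$ and the identification $\tilde{f}_i=\tilde{r}_i^E$ on $\tilde{L}_{\tilde{r}_i}$.
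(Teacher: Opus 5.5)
\paragraph{Verdict.} Parts (i) and (ii) work (with one caveat on (i) below), but there is a genuine gap in (iii): the step that is supposed to identify $\tilde{f}$ with $\tilde{r}^{E}$ fails.

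\paragraph{Parts (i) and (ii).} Your treatment of (ii) is essentially the paper's. Your convexity/cash-additivity proof of (i) is a valid, more elementary substitute for the paper's appeal to Owari's result that the Minkowski domain is a Banach lattice, but only when $\tilde{r}(1)<e$. The case to set aside is not just $\tilde{r}(1)>e$, where $\tilde{L}_{\tilde{r}}=\emptyset$. If $\tilde{r}(1)=e$, then $\riskM(0)=1$, so $1\in\tilde{L}_{\tilde{r}}$, while no constant $c\neq 1$ lies in it, because $\riskM(\lambda|\log c|)=1+\lambda|\log c|>1$. So (i) fails there and a normalization has to be added. This is a defect of the statement, which the paper's citation also passes over, not of your method. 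Note also that $\lambda|\log X|$ is bounded below but not above, so $\riskM^{E}$ must be defined on such functions too, not only on functions bounded above.

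\paragraph{The gap in (iii).} The failing claim is $\riskM^{E}(T_m/(1-t))\to\riskM^{E}(0)$. Your split multiplies $T_m$ by $s=1/(1-t)>1$. Letting $t\uparrow 1$ therefore needs control of $\riskM^{E}(sT_m)$ for arbitrarily large $s$, i.e.\ a ``for all $\lambda$'' (Orlicz-heart) condition. Membership $X\in\tilde{L}_{\tilde{r}}$ is only an ``exists $\lambda$'' condition. Your domination step needs $s<\lambda$, which fails for $t$ near $1$, and for every $t$ if $\lambda\le 1$.

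A concrete counterexample: let $\riskM(Z)=\log\E[e^{Z}]$ (so $\tilde{r}(X)=\E[X]$) and $X=e^{-Y}$ with $Y\sim\mathrm{Exp}(1)$.
\begin{itemize}
\item $\riskM^{E}(\tfrac12|\log X|)=\log 2\le 1$, so $X\in\tilde{L}_{\tilde{r}}$.
\item $\riskM^{E}(sT_m)=\log\E[e^{s(Y-\log m)^{+}}]=\infty$ for all $s\ge 1$ and all $m$.
\item Hence your upper bound is $+\infty$ for every $t\in(0,1)$, even though $\tilde{f}(X)=\tilde{r}^{E}(X)=\E[e^{-Y}]$ is true.
\end{itemize}
A for-all-$\lambda$ condition would not rescue the step in general either. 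The bound $sT_m\le(\lambda|\log X|-(\lambda-s)\log m)^{+}$ involves a positive part, to which cash-additivity does not apply. Closing it would require uniform integrability of the dual set, which a general penalty on $\mathcal{M}_{1,f}$ does not provide.

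\paragraph{The missing idea: compactness, not integrability.} Read $\int\log X\diff\mu$ as $\lim_m\int\log(X\vee m^{-1})\diff\mu$. Then the convergence you worry about holds for each $\mu$ by definition, and the real issue is exchanging $\lim_m$ with $\sup_{\mu}$.
\begin{itemize}
\item Set $F_m(\mu)=\int\log(X\vee m^{-1})\diff\mu-\sigma_{\mathcal{A}_{\riskM}}(\mu)$.
\item Each $F_m$ is $\sigma(\ba_{\mathbb{P}},L^{\infty})$-upper semicontinuous, and $F_m$ decreases in $m$, on the weak$^{\star}$-compact set $\mathcal{M}_{1,f}$.
\item For any $c<\inf_m\max_{\mu}F_m$, the sets $\{F_m\ge c\}$ are nonempty, compact and decreasing, so they have a common point.
\item This gives $\inf_m\max_{\mu}F_m=\sup_{\mu}\inf_m F_m$, i.e.\ $\tilde{f}=\tilde{r}^{E}$ on all of $L^{\infty}_{++}$.
\end{itemize}
The paper instead imports this identification from Liebrich and Svindland (their Theorem~4.5). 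Membership in $\tilde{L}_{\tilde{r}}$ is then needed only for $\tilde{f}>0$, plus the cone property for the gauge formula. Your convexity-around-$0$ argument for positivity is correct, and it uses only the trivial inequality $\tilde{f}\ge\tilde{r}^{E}$.
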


\begin{proof}
    To prove (i), note that each $\tilde{L}_{\tilde{r}_i}$ can be written as 
    \[
        \tilde{L}_{\tilde{r}_i} = \{X\in L_{++}^{\infty}\mid  \log(X)\in L_{\tilde{r}_i}\} = L_{++}^{\infty}\cap\exp(L_{\tilde{r}_i}),
    \]
    where
    \[
        L_{\tilde{r}_i} = \{Y\in L^{0}\mid \exists \lambda>0:(\log\circ \tilde{r}_i\circ \exp)(|Y|/\lambda)\leq 1\}.
    \]
    The latter is the Minkowski domain for $r_i=\log\circ \tilde{r}_i\circ \exp$. Then, by~\cite[Proposition 4.10]{owari2014maximumLesbegueExtension}, we have that $ L_{\tilde{r}_i}$ equipped with its Minkowski norm is a Banach lattice, from which (i) follows.

    We prove the properties in (ii) one by one. Monotonicity of $\tilde{f}$ is a direct consequence of monotonicity of $\rrm$. Next, let $\mathbf{X}\in L^{\infty}_{++}(\Omega;\R^n)$ and $\lambda>0$. Then, we obtain that
    \[
        \tilde{f}(\lambda\mathbf{X}) = \lim_{m\rightarrow \infty}\rrm\big((\lambda\mathbf{X})\vee m^{-1}\big) = \lim_{m\rightarrow \infty}\lambda\rrm\big(\mathbf{X}\vee (\lambda m)^{-1}\big).
    \]
    For each $m\in\N$ it holds that 
    \[
        \rrm\big(\mathbf{X}\vee (\lceil\lambda m\rceil)^{-1}\big)\leq \rrm\big(\mathbf{X}\vee (\lambda m)^{-1}\big)\leq \rrm\big(\mathbf{X}\vee (\lfloor\lambda m\rfloor)^{-1}\big),
    \]
    which shows that $\tilde{f}(\lambda \mathbf{X}) = \lambda \tilde{f}(\mathbf{X})$. For GG-convexity, let $\mathbf{X}, \mathbf{Y}\in L_{++}^{\infty}(\Omega;\R^n)$ and $\lambda\in(0,1)$. Then,
    \begin{align*}
        \tilde{f}\big(\mathbf{X}^{\lambda}\mathbf{Y}^{1-\lambda}\big) &= \lim_{m\rightarrow \infty}\rrm\big((\mathbf{X}^\lambda \mathbf{Y}^{1-\lambda})\vee m^{-1}\big)\\
        &\leq\lim_{m\rightarrow \infty}\rrm\Big(\big(\mathbf{X}\vee m^{-1}\big)^\lambda\big( \mathbf{Y}\vee m^{-1}\big)^{1-\lambda}\Big)\\
        &\leq \lim_{m\rightarrow \infty}\rrm\big(\mathbf{X}\vee m^{-1}\big)^\lambda\,\rrm\big(\mathbf{Y}\vee m^{-1}\big)^{1-\lambda}= \tilde{f}\big(\mathbf{X}\big)^{\lambda}\tilde{f}\big(\mathbf{Y}\big)^{1-\lambda}.
    \end{align*}
    Hence, $\tilde{f}$ is GG-convex. Then, by separability of $\rrm$, we obtain that 
    \[
        \tilde{f}(\mathbf{X}) = \begin{pmatrix}
            \lim\limits_{m\rightarrow \infty} \tilde{r}_1(\mathbf{X}_1\vee m^{-1})\\
            \vdots \\
            \lim\limits_{m\rightarrow \infty} \tilde{r}_n(\mathbf{X}_n\vee m^{-1})
        \end{pmatrix},\quad \mathbf{X}\in L_{++}^{\infty}(\Omega;\R^n).
    \]
    Analogously to before, we see that for each $i\in[n]$, the map 
    \[
        X\mapsto \lim\limits_{m\rightarrow \infty} \tilde{r}_i(X\vee m^{-1}),\quad X\in L_{++}^{\infty},
    \]
    is a (GG-convex) RRM, which gives us that $\tilde{f}$ is separable. 
    
    Finally, given $\mathbf{X}\in \interior_{\lVert.\rVert_{\infty}}\big(L_{+}^{\infty}(\Omega;\R^n)\big)$, for $m\in \N$ large enough, we obtain that for all $k\in\N$ with $k\geq m$ it holds that $\rrm(\mathbf{X}\vee k^{-1}) = \rrm(\mathbf{X})$, which proves that $\tilde{f}\big|_{\interior_{\lVert.\rVert_{\infty}}\big(L^{\infty}_+(\Omega;\R^n)\big)}=\rrm$.

    For (iii), separability and GG-convexity of $\tilde{f}\big|_{\tilde{L}_{\rrm}}$ follows by (i) and (ii). For all $\mathbf{X}\in \tilde{L}_{\rrm}$ it holds
    \[
        \big(\tilde{f}\big|_{\tilde{L}_{\rrm}}\big)_i(\mathbf{X}) = \lim_{m\rightarrow\infty} \tilde{r}_{i}(\mathbf{X}_i\vee m^{-1}) = \lim_{m\rightarrow\infty} \exp\Big(r_{i}\big(\log(\mathbf{X}_i)\vee -\log(m)\big)\Big),
    \]
    where $r_i = \log\circ\,\tilde{r}_i\circ \exp$. By defining $r_i^{E} := \log\circ\, \tilde{r}_i^{E}\circ \exp$,~\cite[Theorem~4.5]{Liebrich2017ModelSpaces} implies that 
    \[
        \big(\tilde{f}\big|_{\tilde{L}_{\rrm}}\big)_i(\mathbf{X}) = \exp\big(\inf\big\{m\in\R\,\big|\, r_i^{E}\big(\log(\mathbf{X}_i)-m\big)\leq 0\big\}\big),\quad \mathbf{X}\in \tilde{L}_{\rrm},
    \]
    where the infimum is larger than $-\infty$, as follows from~\cite[Theorem 4.4]{Liebrich2017ModelSpaces}. Hence, $\tilde{f}\big|_{\tilde{L}_{\rrm}}>0$. To prove Equation~\eqref{eq:representationSeparableFunction}, note that the infimum can be rewritten as follows:
    \begin{align*}
        \inf\big\{m\in\R\,\big|\, r_i^{E}\big(\log(\mathbf{X}_i)-m\big)\leq 0\big\} &= \inf\Bigg\{m\in\R\,\Bigg|\, \tilde{r}_i^{E}\bigg(\frac{\mathbf{X}_i}{e^m}\bigg)\leq 1\Bigg\}\\
        &= \log\Bigg(\inf\Bigg\{m>0\,\Bigg|\,\frac{\mathbf{X}_i}{m}\in\tilde{\mathcal{B}}_{\tilde{r}_i}\Bigg\}\Bigg).
    \end{align*}
\end{proof}

\begin{remark}
    Theorem~\ref{thm:separability} shows how to extend the separability result in Corollary~\ref{cor:separability} to larger domains than $\interior_{\lVert.\rVert_{\infty}}(L_{+}(\Omega;\R^n))$. Monotone extensions of risk measures, to which also our function~$\tilde{f}$ in~\eqref{eq:monotoneExtension} counts, has been first studied in~\cite{delbaen2002coherent}. However, only if we restrict to the set $\tilde{L}_{\rrm}$, we can prevent inconsistent risk evaluations, because $\tilde{f}\big|_{\tilde{L}_{\rrm}}>0$. As our proof reveals, the set $\tilde{L}_{\rrm}$ is closely linked to the Minkowski domain of $\riskM$. The latter appears for instance in~\cite{svindland2009subgradients} and~\cite{kupperSvindland2011dual}. 
      
\end{remark}

Finally, we provide the form of the set $\tilde{L}_{\tilde{r}}$ with $\tilde{r} = \exp\circ r\circ \log$ and $r$ being either the entropic risk measure or the Expected Shortfall. In particular, these examples indicate that $L_{++}^{\infty}\setminus \tilde{L}_{\tilde{r}}$ is typically nonempty. 

\begin{example}
    For parameters $\gamma_i>0$, $i\in[n]$, assume the following separable, GG-convex RRM:
    \[
        \rrm(\mathbf{X}) = \big(\tilde{r}_1(\mathbf{X}_1),\dots,\tilde{r}_n(\mathbf{X}_n)\big)^{\intercal}
        = \Big(\E[\mathbf{X_1}^{\gamma_1}]^{\frac{1}{\gamma_1}},\dots,\E[\mathbf{X_n}^{\gamma_n}]^{\frac{1}{\gamma_n}}\Big)^{\intercal},\quad \mathbf{X}\in\interior_{\lVert.\rVert_{\infty}}\big(L_{+}^{\infty}(\Omega;\R^n)\big).
    \]
    By~\cite[Example 3]{bellini2018RRM}, we obtain for each $i\in[n]$ that 
    \[
        r_i(X):=(\log\circ\,\tilde{r}_i\circ\exp)(X) = \gamma_i^{-1}\log(\E[\exp(\gamma_i X)]),\quad X\in\interior_{\lVert.\rVert_{\infty}}(L_{+}^{\infty}),
    \]
    which is an entropic risk measure with parameter $\gamma_i$. Then,~\cite[Example~5.4]{Liebrich2017ModelSpaces}, allows us to calculate for $i\in[n]$ that
    \begin{align*}
        \tilde{L}_{\tilde{r}_i} &= \big\{X\in L_{++}^{\infty}\,\big|\, \exists \lambda>0:\exp\big(\big|\log\big(X^{\lambda}\big)\big|\big)\in L^{1}\big\}\\ &= \big\{X\in L_{++}^{\infty}\,\big|\, \exists \lambda>0:X^{\lambda}\mathbf{1}_{\{X\geq 1\}} + X^{-\lambda}\mathbf{1}_{\{X< 1\}}\in L^{1}\big\}. 
    \end{align*}
    Now, let $Y$ be a random variable with density function $f_Y(y) = (1+y)^{-2}$, $y\geq 0$. Then, set $X=\exp(-Y)\leq 1$ a.s.,~from which we obtain that
    \[
        \E\Big[X^{\lambda}\mathbf{1}_{\{X\geq 1\}} + X^{-\lambda}\mathbf{1}_{\{X< 1\}}\Big] = \E\Big[X^{-\lambda}\Big] = \E[\exp(\lambda Y)] = \int_{0}^{\infty} \frac{e^{\lambda y}}{(1+y)^{2}}\diff y =\infty,
    \]
    i.e.,~$X^{\lambda}\mathbf{1}_{\{X\geq 1\}} + X^{-\lambda}\mathbf{1}_{\{X< 1\}}\notin L^{1}$. Hence, $X\notin \tilde{L}_{\tilde{r}_i}$ for all $i\in[n]$. Concluding, $L_{++}^{\infty}(\Omega;\R^n)\setminus \tilde{L}_{\rrm}\neq \emptyset$.
\end{example}

\begin{example}
    We use the RRM counterpart of the Expected Shortfall as final example, i.e.,~for given probability levels $p_i$, $i\in[n]$, we define $\rrm$ for every $\mathbf{X}\in\interior_{\lVert.\rVert_{\infty}}\big(L_{+}^{\infty}(\Omega;\R^n)\big)$ by
    \begin{align*}
        (\log\circ \rrm)(\mathbf{X}) &= \big((\log\circ\,\tilde{r}_1)(\mathbf{X}_1),\dots,(\log\circ\,\tilde{r}_n)(\mathbf{X}_n)\big)^{\intercal}\\
        &= \Big(\frac{1}{1-p_1}\int_{p_1}^{1}\log(q_{\mathbf{X}_1}(u))\diff u, \dots , \frac{1}{1-p_n}\int_{p_n}^{1}\log(q_{\mathbf{X}_n}(u))\diff u\Big)^{\intercal},  
    \end{align*}
    where $q_X$ denotes the (upper) quantile function of a random variable $X\in L^{0}$. Then, with the help of~\cite[Example 5.5]{Liebrich2017ModelSpaces}, we obtain for each $i\in[n]$ that
    \begin{align*}
        \tilde{L}_{\tilde{r}_i} &= \Big\{X\in L_{++}^{\infty}\,\Big|\, \forall \lambda>0:\tilde{r}_i\Big(\exp\big(\big|\log\big(X^{\lambda})\big|\big)\Big)<\infty\Big\}
        = \Bigg\{X\in L_{++}^{\infty}\,\Bigg|\, \int_{p_i}^{1} q_{|\log(X)|}(u)\diff u <\infty\Bigg\}.
    \end{align*}
    Motivated by this representation, assume a random variable $Y\sim \text{Par}(1,\alpha)$ with shape parameter $\alpha\in(0,1]$, i.e.,~$Y$ is Pareto distributed with infinite mean ($\alpha\leq 1$) and has support $[1,\infty)$. Then, set $X=\exp(-Y)$, from which we obtain that
    \[
        \int_{p_i}^{1} q_{|\log(X)|}(u)\diff u = \int_{p_i}^{1} q_{Y}(u)\diff u = \E\big[Y\mathbf{1}_{\{Y\geq q_{Y}(p_i)\}}\big] \geq \E[Y] = \infty.
    \]
    So, for all $i\in[n]$ we have $X\notin \tilde{L}_{\tilde{r}_i}$ and hence, $L_{++}^{\infty}(\Omega;\R^n)\setminus \tilde{L}_{\rrm}\neq \emptyset$.
\end{example}

\section{Conclusion and outlook}\label{sec:conclusionOutlook}

We revisit our main results and highlight two directions for future research. Addressing the latter would require substantial modifications of the current framework as well as the development of new technical methods, and is therefore beyond the scope of this manuscript. 

\subsection{Conclusion}

In this manuscript, we extend the definition of RRMs to multivariate frameworks. Thereby, we model the domain as an AM-algebra, a novel approach for RRMs. This benefits from treating the possible domains of $(0,\infty)^n$ and $L^{\infty}_{++}(\Omega;\R^n)$ in a unifying manner. Moreover, instead of focusing on a codomain of $(0,\infty)$, we use~$[0,\infty]^m$. 

We initiate the study with epigraphic characterizations of positive homogeneity and GG-convexity. We then relate RRMs to monetary risk measures, yielding a characterization of GG-convexity via convexity of the corresponding monetary risk measure. Under suitable assumptions, RRMs take values in $(0,\infty)$ and they are only locally Lipschitz-continuous; a counterexample shows that Lipschitz-continuity does not hold in general. Then, we obtain two dual representations with different sets of potential optimizers, only one of which guarantees the existence of an optimizer. 

We then discuss two specific situations. First, for domain $L^{\infty}_{++}(\Omega;\mathbb{R}^n)$ and codomain $(0,\infty)$, we introduce the notion of a systemic RRM. We show that every GG-convex systemic RRM admits a representation as a composition of a GG-convex map and a GG-convex return aggregation function, a result that strongly relies on preference consistency and risk GG-convexity. We clarify the roles of these assumptions via two examples, in which either preference consistency or risk GG-convexity are not satisfied. Notably, such examples do not appear in the current literature. However, under positive homogeneity, GG-convexity implies risk GG-convexity. 

Finally, we introduce vector-valued RRMs and study separability, i.e.,~the case in which each RRM component depends only on the corresponding loss component. Restricting to the interior of $L^{\infty}_{+}(\Omega;\mathbb{R}^n)$ as domain, we show that pointwise positive homogeneity implies separability, while the weaker notion of positive homogeneity is insufficient in general; as shown by a counterexample. Finally, we discuss possibilities to extend this result to domains beyond this interior.

\subsection{Outlook: Beyond AM-algebras}

Even if the logarithm of the domain of an RRM does not yield a desirable vector space structure --- such as an~$L^p$-space with $p\in[1,\infty]$ --- it may still be possible to define the RRM via corresponding formulas for monetary risk measures. For instance, even if $\log(X)\notin L^1$, for $X\in L^1_{++}$, it is still possible to calculate the Expected Shortfall (ES) of $\log(X)$ at level $p\in[0,1)$ as
\[
    \es_{p}\big(\log(X)\big) := \frac{1}{1-p}\int_p^1 q_{\log(X)}(u)\diff{u}.
\]
However, for $X,Y\in L^1_{++}(\Omega;\R)$ and $\lambda\in(0,1)$ we do not obtain $X^{\lambda}Y^{1-\lambda}\in L^1_{++}(\Omega;\R)$ in general. Hence, the problem of extending the framework beyond AM-algebras while retaining a well-defined notion of GG-convexity remains open. 

A starting point to answer this question is the fact that every AM-algebra with (approximate) unit is an f-algebra~\cite[Corollary 2.13]{MunozTracedete2025AMalgebras}. An \textbf{f-algebra} is a Riesz algebra with the additional property that for all $x,y\in V_{+}$ with $|x|\wedge |y| = 0$ it follows that $(xz)\wedge y = (zx)\wedge y = 0$ for all $z\in V_+$. A sequence $(x_n)_n$ in a Riesz space $V$ is called relatively uniformly convergent to $x\in V$, whenever there exist some $u\in(0,\infty)$ and $\varepsilon_n\downarrow 0$ in $\R$ such that $|x_n-x|<\varepsilon_n u$ for all $n$. A sequence $(x_n)_n$ in Riesz space $V$ is called uniformly Cauchy whenever there exists some $u\in(0,\infty)$ such that for each $\varepsilon>0$ we have $|x_n-x_m|<\varepsilon u$ for all $n$ and $m$. A Riesz space $V$ is then called uniformly complete if every uniformly Cauchy sequence is relatively uniformly convergent.

To define GG-convexity, we need to introduce powers of positive elements. The following result from~\cite[Corollary 6]{BeukersHuijsmans1984FunctionalCalculus_fAlgebras} allows us to define $x^{r}$ for a positive element~$x$ and $r\in\mathbb{Q}_{+}$. For this, recall that a Riesz space $V$ is Archimedean, if for all $x\in V_{+}$ it holds that $n^{-1}x\downarrow 0$.\footnote{A Riesz space $V$ is Dedekind complete, if every nonempty subset that is bounded from above has a supremum. Every Dedekind complete Riesz space is automatically Archimedean, see~\cite[Lemma~8.4]{AliprantisBorder}.} By~\cite[Theorem 8.21]{AliprantisBurkinshaw}, every Archimedean f-algebra is commutative.

\begin{corollary}
    Let $V$ be an Archimedean uniformly complete f-algebra with unit element. For every $n\in\mathbb{N}$ and $y\in V_{+}$, there exists a unique $x\in V_{+}$ such that $x^{n} = y$. The element $x$ from the previous result is called the $n$-th root of $y$ and we define $y:=x^{\frac{1}{n}}$.    
\end{corollary}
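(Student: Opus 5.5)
The plan is to treat uniqueness and existence separately, and to reduce existence to the case already handled for AM-algebras, namely the Krein--Kakutani/Martignon representation $C(K)$ used in the proof of the power proposition. Throughout, $\unit$ denotes the algebraic unit, and we use that $V$ is commutative by~\cite[Theorem 8.21]{AliprantisBurkinshaw}.

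\emph{Uniqueness.} I would show that for $x,z\in V_+$, $x^n=z^n$ implies $x=z$. Set $w:=x-z$, so that $w^+\wedge w^-=0$. Factorise
\[
0=x^n-z^n=w\,p,\qquad p:=\sum_{k=0}^{n-1}x^kz^{n-1-k}\ge 0 .
\]
Multiply by $w^+$ and use the f-algebra property that disjoint elements have zero product ($w^+w^-=0$). This gives $(w^+)^2p=0$. Since $p\ge x^{n-1}\ge (w^+)^{n-1}$ on the band generated by $w^+$, we get $(w^+)^{n+1}=0$. A unital Archimedean f-algebra is semiprime, i.e.\ it has no nonzero nilpotents, so $w^+=0$. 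By symmetry $w^-=0$, and hence $x=z$. The inequality ``$p\ge (w^+)^{n-1}$ on the band of $w^+$'' is the delicate point. I would make it rigorous by multiplying through by suitable powers of $w^+$ and using $x\ge w^+$ together with Lemma~(i) on monotonicity of multiplication.

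\emph{Existence.} Let $A_{\unit}$ be the principal ideal generated by $\unit$. Equipped with the $\unit$-norm it is a Banach lattice by uniform completeness, and it is an AM-space with unit $\unit$. It is also a subalgebra. To see this, note that for $a,b\in A_{\unit}$ the f-algebra inequality $|ab|=|a||b|\le \normAM{a}\normAM{b}\,\unit^2=\normAM{a}\normAM{b}\,\unit$ holds. Hence $A_{\unit}$ is an AM-algebra, so it is isomorphic to $C(K)$ with pointwise multiplication, and every $t\in A_{\unit}\cap V_+$ has an $n$-th root in $A_{\unit}$.

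For a general $y\in V_+$ I would invoke the standard fact that in a uniformly complete unital f-algebra every element $v\ge\unit$ is invertible, with $0\le v^{-1}\le \unit$. This is the main obstacle. I would prove it by a Neumann-series argument, which uniform completeness makes available; alternatively, I would cite Huijsmans--de Pagter. Apply this to $v:=\unit+y$ and define
\[
s:=v^{-1}\in[0,\unit],\qquad t:=y\,s .
\]
Then $0\le t\le vs=\unit$, so both $s$ and $t$ lie in $A_{\unit}$ and possess $n$-th roots $s^{1/n}$ and $t^{1/n}$ there.

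Next I need $s^{1/n}$ to be invertible. Since $s\ge \normAM{v}^{-1}\unit$ locally, this follows inside $C(K)$ once I check that $s$ is bounded below. If that check fails, I would instead take the $n$-th root of $v$ directly by working in the ideal generated by $v$, which again is an AM-algebra with unit $v$ after renormalising the multiplication.

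Finally set $x:=t^{1/n}\,(s^{1/n})^{-1}\ge 0$. By commutativity,
\[
x^n=t\,s^{-1}=y\,s\,v=y .
\]
Positivity of $x$ follows because products of positive elements are positive. Together with the uniqueness part, this proves the corollary.
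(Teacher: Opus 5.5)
\textbf{There is a genuine gap in the existence part.} The paper gives no argument of its own; it only quotes Corollary~6 of Beukers--Huijsmans. Your self-contained reduction to the AM-algebra case is therefore a different route. Your uniqueness part is sound. The step you flag as delicate is in fact easy: from $x=z+w\ge w$ and $x\ge 0$ you get $x\ge w^+$ globally. Lemma (ii) then gives $p\ge x^{n-1}\ge (w^+)^{n-1}$, and Lemma (i) gives $0=(w^+)^2p\ge (w^+)^{n+1}\ge 0$.

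The gap is the invertibility of $s^{1/n}$. Your justification ``$s$ is bounded below'' fails exactly in the case that matters. Indeed, $s\ge\varepsilon\unit$ for some $\varepsilon>0$ holds if and only if $v=s^{-1}\le\varepsilon^{-1}\unit$, i.e.\ if and only if $y\in A_{\unit}$. In that case $y$ already has a root in $A_{\unit}$ and the detour is pointless. For $y\notin A_{\unit}$ two things go wrong:
\begin{itemize}
\item $s^{1/n}$ is not invertible in $A_{\unit}\cong C(K)$, since its image is not bounded away from $0$; and $\normAM{v}$ is not even defined.
\item Any inverse of $s^{1/n}$ in $V$ would be an $n$-th root of $v$, which is what you are trying to construct.
\end{itemize}
Your fallback does not help either. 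In the ideal $A_v$ with the renormalised product $a*b=abv^{-1}$, $v$ is the unit, so its $*$-$n$-th root is $v$ itself. Moreover, $*$-powers are not the original powers, since $a^{*n}=a^nv^{1-n}$.

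A short repair uses only the inversion theorem you already invoke. In $A_{\unit}\cong C(K)$ with $\unit\mapsto 1$ you have $0\le\hat s\le 1$. Hence $\hat s^{1/n}\ge\hat s$, i.e.\ $s^{1/n}\ge s$. Multiplying by $v\ge 0$ gives $w:=s^{1/n}v\ge sv=\unit$. So $w$ is invertible by Huijsmans--de Pagter, and $(s^{1/n})^{-1}=v\,w^{-1}\ge 0$ as a product of positive elements. With $x:=t^{1/n}v\,w^{-1}$, your computation $x^n=t\,v=y\,s\,v=y$ then goes through.

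Alternatively, represent the ideal $A_v$ as $C(K')$ with $v\mapsto 1$, and let $x$ be the element corresponding to $\hat y^{1/n}\hat e^{\,1-1/n}$, where $\hat e$ is the image of the unit. You must then check $x^n=y$ using $\widehat{ab}=\hat a\hat b/\hat e$. This degree-one homogeneous expression in $(y,\unit)$ is the object your fallback should have targeted.

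Finally, your proposed ``Neumann-series argument'' for inverting $v\ge\unit$ only works when $v\in A_{\unit}$, because it needs $\normAM{\unit-v/c}<1$. For unbounded $v$ you must rely on the Huijsmans--de Pagter theorem rather than that argument.
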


Consequently, for $r = \frac{q}{p}$ with $p,q\in \Q_{+}\setminus\{0\}$ set $x^{r}=(x^p)^{\frac{1}{q}}$. However, by using f-algebras we still lose track of powers for irrational numbers. Therefore, one needs to find out which additional assumptions an f-algebra needs to satisfy such that $y^{t}$ is defined for all $y\in V_+$ and $t>0$ and perform an analysis for RRMs on corresponding f-algebras. In conclusion, studying risk measures on f-algebras appears to be a promising direction for future research to better understand the compatibility between risk measures and algebraic multiplications.

\subsection{Outlook: Set-valued RRMs}

The study of systemic and vector-valued RRMs in Sections~\ref{sec:multivariateRRM} and~\ref{sec:vectorValuedRRM} motivates the study of corresponding set-valued maps. Instead of capital reserves as outputs, they yield allocations that render the vector-valued loss acceptable, thereby focusing on managerial actions required after measuring the risk.
The theory of set-valued risk measures is quite large and it started with the seminal work of~\cite{Jouini2004VectorValued}. Consequences of separability results on set-valued risk measures are discussed in~\cite{AraratFeinstein2024Separability}. Studying dual representations of set-valued RRMs can complement our RRM dual representations. Dual representations of set-valued risk measures can be found in ~\cite{Hamel2010SetValuedDuality,Hamel2011SetValuedDuality,Molchanov2016SetValuedDuality}. The importance of set-valued maps in the context of systemic risk is demonstrated in~\cite{FeinsteinWeberRudloff2017}. Recent developments focuses on star-shaped set-valued risk measures and the consistency between set-valued risk measures and stochastic orders, see~\cite{Nie2025SetValuedStarShaped} and~\cite{Mastrogiacomo2026SetValuedStochasticOrders}.

\appendix

\section{Auxiliary results for Section~\ref{sec:unifyingFramework}}

We collect auxiliary results used in the proofs in the main text. Since these results are standard and not directly related to RRMs, they have been relegated to the appendix. 

\begin{corollary}\label{cor:interiorCK}
    Let $K$ be a compact Hausdorff space. Then,
    \[
        \interior_{\lVert.\rVert_{\infty}}(C(K)_+) =\{f\in C(K)\mid \forall x\in K:f(x)>0\}.
    \]
\end{corollary}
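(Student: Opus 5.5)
We prove the two inclusions directly, using the characterization in Lemma~\ref{lem:interiorPositiveCone}~(i) applied to $V=C(K)$ with unit $\unit=\mathbf{1}_K$. For this unit the order-unit norm is the usual supremum norm, since $|f|\leq\lambda\mathbf{1}_K$ holds if and only if $\sup_{x\in K}|f(x)|\leq\lambda$. Lemma~\ref{lem:interiorPositiveCone}~(i) then gives
\[
    \interior_{\lVert.\rVert_{\infty}}(C(K)_+)=\{f\in C(K)_+\mid \exists r>0:\ f\geq r\mathbf{1}_K\}.
\]
Since the order on $C(K)$ is pointwise, $f\geq r\mathbf{1}_K$ means $f(x)\geq r$ for all $x\in K$. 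The claim therefore reduces to showing that a continuous $f$ satisfies $f(x)>0$ for all $x\in K$ if and only if there is some $r>0$ with $f\geq r$ on $K$.

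The inclusion ``$\subseteq$'' is immediate: if $f\geq r\mathbf{1}_K$ with $r>0$, then $f(x)\geq r>0$ for every $x\in K$.

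For ``$\supseteq$'', let $f\in C(K)$ with $f(x)>0$ for all $x\in K$. Then $f\in C(K)_+$. Because $f$ is continuous and real-valued on the compact space $K$, it attains its minimum at some $x_0\in K$ (cf.~\cite[Theorem 2.43]{AliprantisBorder}). Setting $r:=f(x_0)>0$ gives $f\geq r\mathbf{1}_K$, so $f$ lies in the interior. If $K$ is empty the statement is trivial.

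The only step that needs any care is matching the abstract order-unit norm with the supremum norm; the minimum-attainment argument is routine. If one prefers not to invoke Lemma~\ref{lem:interiorPositiveCone}, the $\supseteq$ inclusion can be checked directly: the ball $\ball_{\lVert.\rVert_{\infty}}(f,r)$ lies in $C(K)_+$, because $\lVert g-f\rVert_\infty<r$ implies $g>f-r\geq 0$ pointwise. For $\subseteq$, suppose $f$ is in the interior but $f(x_0)\leq 0$ for some $x_0\in K$. If $\ball_{\lVert.\rVert_{\infty}}(f,\varepsilon)\subseteq C(K)_+$, then $f-\tfrac{\varepsilon}{2}\mathbf{1}_K$ belongs to this ball but takes a negative value at $x_0$, which is a contradiction.
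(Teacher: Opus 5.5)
Your proof is correct, and your direct fallback argument is essentially the paper's proof. For one inclusion the paper uses the attained positive minimum $r$ on the compact $K$ to place $\ball_{\lVert.\rVert_{\infty}}(f,r/2)$ inside $C(K)_+$; for the other it shifts $f$ down by a small multiple of $\mathbf{1}_K$ to get a negative value at a zero of $f$. Your primary route through Lemma~\ref{lem:interiorPositiveCone}~(i) is also valid and not circular, since the proof of that part does not use this corollary (the footnote there is only a cross-reference); you also correctly check that the order-unit norm for $\mathbf{1}_K$ is the supremum norm and dispose of the case $K=\emptyset$.
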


\begin{proof}
    Let $f\in \interior_{\lVert.\rVert_{\infty}}(C(K)_+)$. There exists $\varepsilon>0$ with $\ball_{\lVert.\rVert_{\infty}}(f,\varepsilon)\subseteq C(K)_+$. By a proof of contradiction, assume there exists $x_0\in K$ with $f(x_0)=0$. Then for sufficiently small $\delta>0$ the map $g=f-\delta\, \mathbf{1}_K$ satisfies $\lVert g-f\rVert_{\infty}<\varepsilon$, i.e.,~$g\in \ball_{\lVert.\rVert_{\infty}}(f,\varepsilon)$ and hence, $g\in C(K)_+$. However, $g(x_0)<0$ implies that $g\notin C(K)_+$, a contradiction. Thus,
    \[
        \interior_{\lVert.\rVert_{\infty}}(C(K)_+) \subseteq\{f\in C(K)\mid \forall x\in K:f(x)>0\}.
    \]
    Now, let $f(x)>0$ for all $x\in K$. By $K$ being compact and $f$ being (lower semi-)continuous, $f$ attains a minimum on $K$, see~\cite[Theorem 2.43]{AliprantisBorder}. Hence, 
    \[
        r:=\min_{x\in K} f(x)>0.
    \]
    Then, for $\varepsilon=\frac{r}{2}$ and for each $g\in C(K)$ with $\lVert g-f\rVert_{\infty}<\varepsilon$, we obtain for all $x\in K$ that 
    \[
        g(x)\geq f(x)-\varepsilon\geq r -\frac{r}{2}=\frac{r}{2} >0,
    \]
    that is,~$g\in C(K)_{+}$. This shows that $\ball_{\lVert.\rVert_{\infty}}(f,\varepsilon)\subseteq C(K)_+$ and hence, $f\in \interior_{\lVert.\rVert_{\infty}}(C(K)_+)$.
\end{proof}

\begin{lemma}\label{lem:pushforwardTopology}
    Let $\Phi:X\rightarrow Y$ be a bijection between two nonempty sets $X$ and $Y$ and $\tau$ is a topology on $X$. Then, for $\Phi(\tau)=\{ \Phi(U)\mid U\in\tau\}$ the following statements hold:
    \begin{enumerate}
        \item[(i)] $\Phi(\tau)$ is a topology on $Y$;
        \item[(ii)] For a net $(x_\alpha)_\alpha\subseteq X$ with $x_\alpha\xrightarrow{\tau}x$ it holds that $\Phi(x_\alpha)\xrightarrow{\Phi(\tau)}\Phi(x)$. 
    \end{enumerate}
\end{lemma}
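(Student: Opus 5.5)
The plan is to verify the two claims directly from the definition $\Phi(\tau)=\{\Phi(U)\mid U\in\tau\}$, using only that a bijection commutes with all set operations. The key observation is that for a bijection $\Phi$ we have, for arbitrary families $(U_i)_{i\in J}$ of subsets of $X$,
\[
    \Phi\Big(\bigcup_{i\in J}U_i\Big)=\bigcup_{i\in J}\Phi(U_i),\qquad \Phi\Big(\bigcap_{i\in J}U_i\Big)=\bigcap_{i\in J}\Phi(U_i).
\]
The union identity holds for any map. The intersection identity needs injectivity: if $y\in\Phi(U_i)$ for all $i$, then the preimage $\Phi^{-1}(y)$ is unique and lies in every $U_i$. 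Surjectivity gives $\Phi(X)=Y$, and trivially $\Phi(\emptyset)=\emptyset$.

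For (i), I check the topology axioms in turn. First, $\emptyset=\Phi(\emptyset)$ and $Y=\Phi(X)$ belong to $\Phi(\tau)$, since $\emptyset,X\in\tau$. Next, given an arbitrary family $\Phi(U_i)$ with $U_i\in\tau$, its union equals $\Phi(\bigcup_i U_i)$, and $\bigcup_i U_i\in\tau$. Finally, a finite intersection $\Phi(U_1)\cap\Phi(U_2)$ equals $\Phi(U_1\cap U_2)$, and $U_1\cap U_2\in\tau$. An equivalent description, which I will use in (ii), is $\Phi(\tau)=\{W\subseteq Y\mid \Phi^{-1}(W)\in\tau\}$; this holds because $\Phi(\Phi^{-1}(W))=W$ for a bijection.

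For (ii), let $x_\alpha\xrightarrow{\tau}x$ and let $W\in\Phi(\tau)$ be a neighbourhood of $\Phi(x)$. Then $W=\Phi(U)$ for some $U\in\tau$. By injectivity, $\Phi(x)\in\Phi(U)$ implies $x\in U$. Net convergence gives an index $\alpha_0$ with $x_\alpha\in U$ for all $\alpha\geq\alpha_0$, and hence $\Phi(x_\alpha)\in W$ for all $\alpha\geq\alpha_0$. Equivalently, $\Phi:(X,\tau)\to(Y,\Phi(\tau))$ is a homeomorphism and therefore preserves net convergence.

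There is no real obstacle in this argument. The only point that needs care is the intersection identity and the step $\Phi(x)\in\Phi(U)\Rightarrow x\in U$. Both fail without injectivity, so this is where the bijectivity assumption is genuinely used.
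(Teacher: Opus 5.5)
Your proposal is correct and follows the paper's proof essentially step by step. For (i), the paper likewise uses $\Phi(\emptyset)=\emptyset$, surjectivity for $\Phi(X)=Y$, the union identity, and injectivity for $\Phi(U_1)\cap\Phi(U_2)=\Phi(U_1\cap U_2)$; for (ii), it uses the same injectivity argument that $\Phi(x)\in\Phi(U)$ forces $x\in U$ (your extra remark that $\Phi$ is a homeomorphism onto $(Y,\Phi(\tau))$ is correct but not needed).
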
 

\begin{proof}
    For part (i), we verify the defining properties of a topology. Obviously, $\Phi(\emptyset) = \emptyset\in\Phi(\tau)$ and by $\Phi$ being surjective, we have $\Phi(X)=Y\in\Phi(\tau)$. Next, let $(V_i)_{i\in I}\subseteq \Phi(\tau)$. Then, for every $i\in I$ there exists $U_i\in\tau$ such that $V_i = \Phi(U_i)$. We obtain that
    \[
        \bigcup_{i\in I} V_i = \bigcup_{i\in I} \Phi\left(U_i\right)  = \Phi\left(\bigcup_{i\in I} U_i\right) \in\Phi(\tau). 
    \]
    So, $\Phi(\tau)$ is closed under arbitrary unions. Finally, let $V_1=\Phi(U_1)$ and $V_2 = \Phi(U_2)$ with $U_1,U_2\in\tau$. Then, $\Phi$ being bijective implies that \[
        V_1\cap V_2 =  \Phi(U_1)\cap \Phi(U_2) = \Phi(U_1\cap U_2) \in\Phi(\tau). 
    \]
    A simple induction shows that $\Phi(\tau)$ is being closed under finite intersections.

    To prove (ii), let $N$ be an arbitrary open neighborhood of $\Phi(x)$ in $(Y,\Phi(\tau))$. There exists $U\in\tau$ with $N=\Phi(U)$. By $\Phi(x)\in\Phi(U)$ and $\Phi$ being injective, it holds that $x\in U$. By assumption, there exists $\alpha_0$ such that for all $\alpha\geq \alpha_0$ we obtain that $x_\alpha\in U$. Applying $\Phi$, we obtain for all $\alpha\geq\alpha_0$ that $\Phi(x_\alpha)\in N$. This means that $\Phi(x_\alpha)\xrightarrow{\Phi(\tau)}\Phi(x)$.  
\end{proof}

\begin{lemma}\label{lem:exponentialOfTopology}
    Let $V$ be an AM-space with order unit and $\tau$ the induced topology by $\lVert.\rVert_{\infty}$. Then, 
    \[
        \exp(\tau) = \tau|_{\interior_{\lVert.\rVert_{\infty}}(V_+)}. 
    \]
\end{lemma}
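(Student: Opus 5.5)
The plan is to prove the two inclusions $\exp(\tau)\subseteq\tau|_{\interior_{\lVert.\rVert_{\infty}}(V_+)}$ and the reverse by showing that $\exp:V\to\interior_{\lVert.\rVert_{\infty}}(V_+)$ and its inverse $\log$ are both continuous for the sup-norm. By Lemma~\ref{lem:interiorPositiveCone}(ii),(iii), $\exp$ is a bijection from $V$ onto $\interior_{\lVert.\rVert_{\infty}}(V_+)$ with inverse $\log$. Hence $\exp(\tau)$ is exactly the final topology transported along this bijection, and it coincides with the relative topology precisely when $\exp$ is a homeomorphism onto $\interior_{\lVert.\rVert_{\infty}}(V_+)$ equipped with the relative norm topology. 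This holds because $U\in\tau$ implies $\exp(U)=\log^{-1}(U)$, which is relatively open if $\log$ is continuous. Conversely, if $W$ is relatively open, then $W=\exp(\exp^{-1}(W))$ with $\exp^{-1}(W)\in\tau$ whenever $\exp$ is continuous.

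To verify both continuities, I will pass through the Krein--Kakutani lattice isometry $\Phi:V\to C(K)$ with $\Phi(\unit)=\mathbf{1}_K$, as in the proof of Lemma~\ref{lem:interiorPositiveCone}, where $\exp(x)=\Phi^{-1}(\exp\circ\hat{x})$. Since $\Phi$ is an isometry, it suffices to work in $C(K)$ with the sup-norm $\lVert.\rVert_\infty$, noting that $\normAM{.}$ corresponds to the sup-norm because $\Phi(\unit)=\mathbf{1}_K$.

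For continuity of $\exp$ at $\hat{x}$, take $\lVert\hat{y}-\hat{x}\rVert_\infty<1$. Then both functions take values in $[-\lVert\hat{x}\rVert_\infty-1,\lVert\hat{x}\rVert_\infty+1]$. By the mean value theorem applied pointwise,
\[
\lVert e^{\hat{y}}-e^{\hat{x}}\rVert_\infty\leq e^{\lVert\hat{x}\rVert_\infty+1}\lVert\hat{y}-\hat{x}\rVert_\infty .
\]
For continuity of $\log$ at $\hat{u}$ with $\hat{u}\geq r\mathbf{1}_K$, which holds for some $r>0$ by Lemma~\ref{lem:interiorPositiveCone}(i) or Corollary~\ref{cor:interiorCK}, take $\lVert\hat{v}-\hat{u}\rVert_\infty<r/2$. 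Then $\hat{v}\geq (r/2)\mathbf{1}_K$, and pointwise $|\log\hat{v}-\log\hat{u}|\leq \frac{2}{r}|\hat{v}-\hat{u}|$. This is the same local Lipschitz estimate used in the proof of Theorem~\ref{thm:LipschitzContinuity}.

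The only mildly delicate point is making sure that the relative topology on the image is indeed the topology induced by $\normAM{.}$ restricted to $\interior_{\lVert.\rVert_{\infty}}(V_+)$, and that $\Phi^{-1}$ preserves norms and order so that the pointwise estimates transfer back. Both facts come directly from $\Phi$ being a surjective lattice isometry with $\Phi(\unit)=\mathbf{1}_K$. Everything else is routine.
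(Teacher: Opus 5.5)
Your proposal is correct and follows essentially the paper's proof. Both arguments reduce to $C(K)$ via the Krein--Kakutani isometry, show that $\exp$ is a homeomorphism onto $\interior_{\lVert.\rVert_{\infty}}(V_+)$ with the relative topology, and derive the two inclusions exactly as you do; the only difference is that you supply the local Lipschitz estimates for $\exp$ and $\log$ that the paper dismisses as a ``straightforward task''.
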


\begin{proof}
    By the Krein-Kakutani Theorem~\cite[Theorem 9.32]{AliprantisBorder} it is enough to prove the statement for $V=C(K)$. It is a straightforward task to check that 
    \[
        f:(C(K),\tau)\rightarrow (\interior_{\lVert.\rVert_{\infty}}(C(K)_+),\tau|_{\interior_{\lVert.\rVert_{\infty}}(C(K)_+)}),\ g\mapsto \exp(g)
    \]
    is a homeomorphism. Now, let $U\in\tau$. Then, $\exp(U)$ is $\tau|_{\interior_{\lVert.\rVert_{\infty}}(C(K)_+)}$-open due to $f^{-1}$ being continuous. Hence, 
    \[
        \exp(U) = O\cap \interior_{\lVert.\rVert_{\infty}}(C(K)_+)
    \]
    for some $O\in \tau$, proving that $\exp(\tau)\subseteq \tau|_{\interior_{\lVert.\rVert_{\infty}}(C(K)_+)}$. 
    
    Next, let $W=O\cap \interior_{\lVert.\rVert_{\infty}}(C(K)_+)$ for some $O\in \tau$. Then, $f$ being continuous implies that $f^{-1}(W)\in \tau$. By using that $f$ is bijective, we obtain 
    \[
        W = f(f^{-1}(W)) = \exp(f^{-1}(W)) \in \exp(\tau). 
    \]
    Thus, $\tau|_{\interior_{\lVert.\rVert_{\infty}}(C(K)_+)}\subseteq \exp(\tau)$. 
\end{proof}

\begin{lemma}\label{lem:bandDecomposition}
    Assume a Banach lattice $V$ and a band $E$ in $V$ with $V = E\oplus E^{\disjointComplement}$. Then, 
    \[
        V_+ = E_+\oplus (E^{\disjointComplement})_+.
    \]
\end{lemma}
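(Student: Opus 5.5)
The plan is to decompose each positive element along the band decomposition and show that both pieces are positive. The inclusion $E_+\oplus (E^{\disjointComplement})_+\subseteq V_+$ is immediate, because $V_+$ is a convex cone and therefore closed under addition.

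For the reverse inclusion, I take $x\in V_+$ and write $x=y+z$ with $y\in E$, $z\in E^{\disjointComplement}$ using $V=E\oplus E^{\disjointComplement}$. It remains to show $y\geq 0$ and $z\geq 0$. The key tool is that the band projection $P_E$ onto $E$ (which exists since $V=E\oplus E^{\disjointComplement}$, i.e.\ $E$ is a projection band) is a positive operator; see \cite[Section 1.3]{AliprantisBurkinshaw}. Then $y=P_Ex\geq 0$, and likewise $z=(I-P_E)x=P_{E^{\disjointComplement}}x\geq 0$. This gives $x\in E_+\oplus (E^{\disjointComplement})_+$.

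If I prefer to avoid quoting positivity of the band projection, I would argue directly with disjointness. Since $y\perp z$, we have $|y+z|=|y|+|z|$, which is a standard Riesz space identity for disjoint elements \cite[Theorem 1.7]{AliprantisBurkinshaw}. Hence
\[
|y|+|z|=|x|=x=y+z.
\]
Rearranging gives $(|y|-y)+(|z|-z)=0$. Both summands are positive, since $|w|\geq w$ for every $w$, so each must vanish. Therefore $y=|y|\geq 0$ and $z=|z|\geq 0$. Uniqueness of the decomposition is inherited from the direct sum, which justifies the $\oplus$ notation on the positive cones.

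The only step that needs care is the identity $|y+z|=|y|+|z|$ for disjoint $y,z$, and I will cite it rather than reprove it. Completeness of the norm is not needed at all; the argument works in any Riesz space.
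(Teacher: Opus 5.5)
Your main argument is the paper's: decompose $x=y+z$ via $V=E\oplus E^{\disjointComplement}$ and use that the band projections $P_E$ and $P_{E^{\disjointComplement}}$ are positive. Your fallback via $|y+z|=|y|+|z|$ for disjoint $y,z$ is also correct and somewhat more elementary. It shows the lemma holds in any Riesz space without invoking projection operators, and completeness of the norm plays no role in either argument.
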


\begin{proof}
    Let $x\in E_+$ and $y\in (E^{\disjointComplement})_+$ and hence, $x\geq 0$ and $y\geq 0$. Then, $x+y\geq 0$ and so, $x+y\in V_+$, i.e.,~$E_+\oplus (E^{\disjointComplement})_+\subseteq V_+$.

    For the remaining set inclusion, assume that $z\in V_+$. Then, there exists $x\in E$ and $y\in E^{\disjointComplement}$ such that $z = x+y$. We have to show that $y\geq 0$ and $z\geq 0$. From~\cite[Section 3]{AliprantisBurkinshaw} we know that there exist corresponding band projections $P_E:V\rightarrow E$ and $P_{E^{\disjointComplement}}:V\rightarrow E^{\disjointComplement}$, which are automatically positive. Hence, by $z\geq 0$ we obtain that
    \[
        x = P_E(z)\geq 0\quad\text{and}\quad y = P_{E^{\disjointComplement}}(z)\geq 0.
    \]
    This shows that $V_+\subseteq E_+\oplus (E^{\disjointComplement})_+$. 
\end{proof}

\section{Positive real line as domain}\label{sec:supplementaryPositiveRealLine}

We add two findings for GG-convex functions. The first one is based on Proposition~\ref{prop:relationToClassicalRM} (ii), from which we obtain the following characterization as counterpart to analogous representations under so-called logarithmic convexity.

\begin{corollary}
    Let $f:(0,\infty)^n\rightarrow (0,\infty)$ be GG-convex and differentiable. For all $x,y\in (0,\infty)^n$ it holds that
    \[
        \exp\left(\frac{\nabla f(x)}{f(x)}\log\left(\frac{y}{x}\right)^{\intercal}\right)\leq \frac{f(y)}{f(x)},
    \]
    where $\frac{y}{x}$ refers to pointwise division.
\end{corollary}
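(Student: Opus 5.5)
The plan is to reduce the statement to the classical gradient inequality for differentiable convex functions, via the logarithmic transform of Proposition~\ref{prop:relationToClassicalRM}~(ii). We set
\[
    g:=\log\circ f\circ\exp:\R^n\rightarrow\R .
\]
Since $f>0$, the map $g$ is real-valued. Because $f$ is GG-convex, $g$ is convex; this is the same computation as in the proof of Proposition~\ref{prop:relationToClassicalRM}~(ii), which only uses the identity $\exp(\lambda u+(1-\lambda)v)=\exp(u)^{\lambda}\exp(v)^{1-\lambda}$ together with monotonicity of $\log$. Monotonicity and positive homogeneity of $f$ play no role here, so only this convexity part of the proposition is needed. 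As a composition of differentiable maps, $g$ is differentiable on $\R^n$.

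Next we compute $\nabla g$ by the chain rule. With $x=\exp(u)$, componentwise
\[
    \partial_i g(u)=\frac{\partial_i f(x)\,x_i}{f(x)} .
\]
The standard first-order characterization of differentiable convex functions, $g(v)\geq g(u)+\nabla g(u)^{\intercal}(v-u)$, applied with $u=\log(x)$ and $v=\log(y)$, gives
\[
    \log f(y)-\log f(x)\geq \sum_{i=1}^n \frac{\partial_i f(x)\,x_i}{f(x)}\,\log\Big(\frac{y_i}{x_i}\Big).
\]
Exponentiating then yields an inequality of the form stated in the corollary.

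The main obstacle is reconciling this with the statement as written. The stated left-hand side contains $\frac{\nabla f(x)}{f(x)}\log(\frac{y}{x})^{\intercal}$ without the factor $x_i$. Read literally, this can fail: for $n=1$ and $f(x)=x$, it would require $y\geq x\,(y/x)^{1/x}$ for all $x,y>0$, which is false for small $x$. So I would interpret $\nabla f(x)$ as the gradient with respect to logarithmic coordinates, i.e.\ $\nabla(f\circ\exp)$ evaluated at $\log(x)$, equivalently $x\cdot\nabla f(x)$ with pointwise multiplication. I would state this interpretation explicitly and then conclude. If the literal reading were intended, I would point this out with the counterexample above.
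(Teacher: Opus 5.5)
Your argument follows the paper's route: pass to $g=\log\circ f\circ\exp$, get convexity of $g$ from Proposition~\ref{prop:relationToClassicalRM}~(ii), apply the first-order inequality at $\log(x)$ and $\log(y)$, and exponentiate. Your chain-rule computation is the correct one, whereas the paper's step $\nabla g(\log(x))=\frac{\nabla f(x)}{f(x)}$ drops the factor $x_i$; hence the corollary holds only with $x\cdot\nabla f(x)$ (pointwise) in place of $\nabla f(x)$, and your counterexample $f(x)=x$ (for instance $x=\tfrac12$, $y=1$ would require $4\le 2$) shows that the printed version is false.
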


\begin{proof}
    Let $x,y\in(0,\infty)^n$. By Proposition~\ref{prop:relationToClassicalRM}, $f$ is GG-convex if and only if $g = \log\circ f\circ \exp$ is convex on $\R^n$. This means, for $\tilde{x} = \log(x)$ and $\tilde{y}=\log(y)$ it holds that
    \[
        (\tilde{y}-\tilde{x})^{\intercal} \nabla g(\tilde{x}) \leq g(\tilde{y})-g(\tilde{x}). 
    \]
    By noting that $\nabla g(\tilde{x}) = \frac{\nabla f(x)}{f(x)}$, we obtain 
    \[
       \big(\log(y)-\log(x)\big)^{\intercal} \frac{\nabla f(x)}{f(x)}\leq \log(f(y))
-\log(f(x)),
    \]
    from which the claim follows. 
\end{proof}

GG-convexity --- as well as convexity --- is a weaker property than logarithmic convexity, due to the AM-GM inequality. For the AM-GM inequality we refer to~\cite{Aldaz2009AM_GM_inequality}. In the following, we state examples of functions satisfying only convexity or GG-convexity.
\begin{example}
    We seek functions $f:(0,\infty)\rightarrow(0,\infty)$ that satisfy either GG-convexity or convexity, but not both.
    \begin{enumerate}
        \item \textit{GG-convex, but not convex:} Choose $f(x) = x^{\lambda}$ for $\lambda\in(0,1)$. Then, $f$ is obviously GG-convex, but not convex.
        \item \textit{Convex, but not GG-convex:} Choose 
        \[
            f(x) = \left(x-\frac{1}{2}\right)^2 + \frac{3}{4} = x^2-x+1.
        \]
        Then $f$ is obviously convex. However, for $g=\log\circ f\circ \exp$ it holds for all $y\in \R$ that 
        \[
            g^{\prime\prime}(y) = \frac{\e^{y}(-\e^{2y}+4\e^{y}-1)}{(\e^{2y}-\e^{y}+1)^2}.
        \]
        Set, $z=\e^{y}$. Then, for $z>2+\sqrt{3}$ it holds that $g^{\prime\prime}(\log(z))<0$, giving that $g$ is not convex and hence, by Proposition~\ref{prop:relationToClassicalRM} (ii) $f$ is not GG-convex.
    \end{enumerate}
\end{example}

\end{document}